\documentclass[aps,floatfix,amsmath,notitlepage,longbibliography]{revtex4-1}

\usepackage{amssymb}
\usepackage{graphicx}
\usepackage{graphics}
\usepackage{amsmath}
\usepackage{amsthm}
\usepackage{color}
\usepackage{dsfont}
\usepackage{mathrsfs}

\usepackage{mathtools}
\usepackage{hyperref}





\DeclarePairedDelimiter{\ceil}{\lceil}{\rceil}

\def\free{\text{free}}
\def\>{\rangle}
\def\<{\langle}

\def\id{\mathsf{id}}
\def\mB{\mathcal{B}}
\def\mE{\mathcal{E}}

\def\mF{\mathcal{F}}
\def\mN{\mathcal{N}}

\def\mL{\mathcal{L}}
\def\mP{\mathcal{P}}

\def\mD{\mathcal{D}}
\def\mT{\mathcal{T}}

\def\mQ{\mathcal{Q}}

\renewcommand{\qedsymbol}{\nobreak \ifvmode \relax \else
	\ifdim \lastskip<1.5em \hskip-\lastskip \hskip1.5em plus0em
	minus0.5em \fi \nobreak \vrule height0.75em width0.5em
	depth0.25em\fi}

\renewcommand{\geq}{\geqslant}
\renewcommand{\leq}{\leqslant}

\def\mb{\mathfrak{B}}
\def\mc{\mathfrak{C}}
\def\md{\mathfrak{D}}
\def\ms{\mathfrak{S}}

\def\mg{\mathfrak{G}}

\def\ml{\mathfrak{L}}

\newtheorem{theorem}{Theorem}
\newtheorem*{theorem*}{Theorem}

\newtheorem{lemma}{Lemma}
\newtheorem*{lemma*}{Lemma}

\newtheorem*{definition}{Definition}
\newtheorem*{definition*}{Definition}

\theoremstyle{remark}
\newtheorem*{remark}{Remark}

\theoremstyle{remark}
\newtheorem*{remark1}{Remark 1}

\theoremstyle{remark}
\newtheorem*{remark2}{Remark 2}

\theoremstyle{remark}

\theoremstyle{remark}

\theoremstyle{definition}

\newcommand{\bea}{\begin{eqnarray}}
\newcommand{\eea}{\end{eqnarray}}
\newcommand{\be}{\begin{equation}}
\newcommand{\ee}{\end{equation}}
\newcommand{\ba}{\begin{equation}\begin{aligned}}
\newcommand{\ea}{\end{aligned}\end{equation}}

\newtheorem{example}{Example}

\def\be{\begin{equation}}
\def\ee{\end{equation}}

\newcommand{\J}{\mathbf{J}}

\newcommand{\cptp}{{\rm CPTP}}
\newcommand{\cp}{{\rm CP}}
\newcommand{\pos}{\rm Pos}

\newcommand{\mI}{\mathcal{I}}

\newcommand{\mR}{\mathcal{R}}
\newcommand{\mM}{\mathcal{M}}

\newcommand{\lr}{\rangle\langle}

\newcommand{\tr}{{\rm Tr}}

\newcommand{\ve}[1]{{\left\vert\kern-0.25ex\left\vert\kern-0.25ex\left\vert #1 
    \right\vert\kern-0.25ex\right\vert\kern-0.25ex\right\vert}}

\newcommand{\mbb}[1]{\mathbb{#1}}

\newcommand{\bra}[1]{\langle #1|}
\newcommand{\ket}[1]{|#1\rangle}


\newcommand{\op}[2]{| #1\rangle\langle #2 |}

\newcommand{\eqdef}{\coloneqq}

\newcommand{\1}{\mathds{1}}

\def\tA{\tilde{A}}
\def\tB{\tilde{B}}

\def\tR{\tilde{R}}

\def\mf{\mathfrak{F}}
\def\mk{\mathfrak{K}}

\def\herm{{\rm Herm}}
\def\free{{\rm FREE}}
\def\disc{{\rm DISC}}
\def\misc{{\rm MISC}}
\def\isc{{\rm ISC}}
\def\sisc{{\rm SISC}}
\def\exact{{\rm exact}}
\def\mio{\rm MIO}

\begin{document}
	
	
	\title{Dynamical Resource Theory of Quantum Coherence}

\author{Gaurav Saxena}\email{gaurav.saxena1@ucalgary.ca}	
\affiliation{
Department of Physics and Astronomy, Institute for Quantum Science and Technology,
University of Calgary, AB, Canada T2N 1N4}

\author{Eric Chitambar}\email{echitamb@illinois.edu}
\affiliation{Department of Electrical and Computer Engineering, Coordinated Science Laboratory, University of Illinois at Urbana-Champaign,  Urbana,  IL 61801}

\author{Gilad Gour}\email{gour@ucalgary.ca}
\affiliation{
Department of Mathematics and Statistics, Institute for Quantum Science and Technology,
University of Calgary, AB, Canada T2N 1N4}

	\date{\today}
	
\begin{abstract}
	Decoherence is all around us. Every quantum system that interacts with the environment is doomed to decohere.
	The preservation of quantum coherence is one of the major challenges faced in quantum technologies, but its use as a resource is very promising and can lead to various operational advantages,
	for example in quantum algorithms.
	Hence, much work has been devoted in recent years to quantify the coherence present in a system.
	In the present paper, we formulate the quantum resource theory of dynamical coherence.  The underlying physical principle we follow is that the free dynamical objects are those that cannot preserve or distribute coherence.  This leads us to identify classical channels as the free elements in this theory.  Consequently, even the quantum identity channel is not free as all physical systems undergo decoherence and hence, the preservation of coherence should be considered a resource.  In our work, we introduce four different types of free superchannels (analogous to MIO, DIO, IO, and SIO) and discuss in detail  two of them, namely, dephasing-covariant incoherent superchannels (DISC), maximally incoherent superchannels (MISC).  The latter consists of all superchannels that do not generate non-classical channels from classical ones.
	We quantify dynamical coherence using channel-divergence-based monotones for $\misc$ and $\disc$. We show that some of these monotones have operational interpretations as the exact, the approximate, and the liberal coherence cost of a quantum channel. Moreover, we prove that the liberal asymptotic cost of a channel is equal to a new type of regularized relative entropy.
	Finally, we show that the conversion distance between two channels under $\misc$ and $\disc$ can be computed using a semi-definite program (SDP).

\end{abstract}

	\maketitle

\section{Introduction}

Decoherence is everywhere.
All physical systems undergo decoherence.
It is an irreversible process, and it can be viewed as the reduction of a general quantum state to an incoherent mixed state due to coupling with the environment \cite{Z2003, S2005, L2003}.
Mathematically, decoherence is represented as the vanishing of the off-diagonal terms of a density matrix.  It is because of decoherence that we do not observe quantum mechanical behaviour in everyday macroscopic objects, and in the context of quantum information, it can be viewed as the loss of information from a system into the environment \cite{B2001}.

During the last two decades, interest in quantum information science has shifted towards using quantum mechanical phenomena (like entanglement, nonlocality, etc.) as resources to achieve something that is otherwise not possible through classical physics (eg., quantum teleportation) \cite{HHH+2009, PV2007, BRS2007, GHR+2016, GMN+2015, BCP+2014, WPG+2012, RHP2014, ABC2016, MBC+2012}.
Quantum resource theories (QRTs) use this resource-theoretic approach to exploit the operational advantage of such phenomena and to assess their resource character systematically \cite{CG2019}.
The preservation of quantum coherence is crucial for building quantum information devices, since the loss of quantum superposition due to decoherence negates any non-classical effect in a quantum system \cite{Z2003, Z1991, Z2007}.
Hence from a technological perspective, there is increasing interest in developing a resource theory of coherence \cite{CG2019}.
In addition, the resource-theoretic study of quantum coherence might provide new insights towards distinguishing classical and quantum physics in a quantitative manner.
Some other examples of quantum resource theories include the QRT of entanglement, thermodynamics, magic states, Bell non-locality, etc.

Most quantum resource theories are governed by the constraints arising from physical or practical settings. These constraints then lead to the operations that can be freely performed.
For instance, in the static resource theory of quantum entanglement, for any two spatially separated but possibly entangled systems, the spatial separation puts the restriction that only local operations along with classical communications (LOCC) can be performed~\cite{PV2007, HHH+2009, BBP+1996, BDS+1996, LP2001}.
Given this restriction, only separable states can be generated using LOCC, which makes them the free states of the theory.
But unlike entanglement and other constraint-based QRTs, coherence is a state-based QRT.
This means there is no natural set of physical restrictions or practical constraints that strongly motivate a particular set of free operations.
 Instead, the free states are the physically-motivated objects, and the free operations are not unique, only being required to satisfy the basic golden rule of a QRT, i.e., the free operations should be completely resource non-generating (CRNG) \cite{CG2019}. 

In the static resource theory of quantum coherence there is a fixed basis, the so-called classical or incoherent basis, and 
the set of density matrices that are diagonal in this basis form the free states of the theory.
Such states are called incoherent states.
The free operations are then some set of quantum channels that map the set of incoherent states to itself.
The most well-studied classes of free operations are the maximally incoherent operations (MIO), the incoherent operations (IO), the dephasing-covariant incoherent operations (DIO), and sthe trictly  incoherent operations (SIO)~\cite{A2006, BCP2014, CG2016a, CG2016b, CG2016c, MS2016}.  However, all of these operations cost coherence to be implemented even though they cannot generate coherence, i.e. they are CRNG, which means they do not admit a free dilation \cite{CG2016a, CG2016b, CG2016c, MS2016}.
 Therefore, they cannot truly be considered free.
However, one can still use these operations to study static coherence since they cannot increase the coherence in a state, and hence they allow for a comparison of the coherence in two different states based on state convertibility.
 Consequently, a large amount of work has been devoted to developing the theory of static coherence under these operations \cite{SAP2017}.

Taking this into consideration, we argue here that, contrary to some other works on dynamical coherence \cite{DKW+2018, BGM+2017, CH2016, LY2019, TEZ+2019}, one should look beyond MIO, DIO, IO, and SIO to understand coherence in a dynamical manner since these operations have the ability to preserve and distribute coherence.  Indeed, even the identity channel should be considered as a resource since all physical systems undergo decoherence, and the preservation of coherence should thus be considered a resource. 
Note also that, in quantum computing, diagonal unitaries such as the $T$-gate, are often considered as resources.
Therefore, even some channels in the set of physically incoherent operations (PIO) introduced in~\cite{CG2016a} will be considered resourceful here, as we do not assume that diagonal unitaries are free.

Our approach is therefore to apply the same philosophy of static coherence to dynamical coherence.
This can be done as follows. We take the set of all classical channels to be free in analogy to the static case where all free states are classical.  Here, a channel $\mN_A$ is called classical if and only if
\begin{equation}
\label{Eq:classical-channel-defn}
\mN_{A} = \mD_{A_1} \circ \mN_{A} \circ \mD_{A_0},
\end{equation}
where $\mD_{A_0}\text{ and } \mD_{A_1}$ are dephasing channels for systems $A_0$ and $A_1$ in some fixed basis, respectively; i.e. $\mD_{A_0}(\rho)=\sum_{i=1}^{d_{A_0}}\op{i}{i}\rho\op{i}{i}_{A_0}$, and similarly for $\mD_{A_1}$.  We will denote the set of classical channels that take system $A_0$ to $A_1$ by $\mc(A_0\to A_1)$,
\be
\mN_A \in \mc(A_0 \to A_1) \iff \mN_{A} = \mD_{A_1} \circ \mN_{A} \circ \mD_{A_0}.
\ee
In particular, the identity channel $\id_{A_0\to A_1}$ (here, $A_0$ and $A_1$ correspond to the same system in two different temporal or spatial locations and so, $|A_0| = |A_1|$) is not classical as it does not satisfy the above condition.
Here, the identity channel corresponds to the preservation of coherence for a certain given amount of time.
  Also note the similarity here between the dynamical free objects defined in Eq. \eqref{Eq:classical-channel-defn} and the static free objects in coherence theory.  On the level of states, a density operator $\rho$ is incoherent with respect to the fixed basis if 
\begin{equation}
\rho=\mD_{A_1}(\rho).
\end{equation} 
In fact, this can be seen as a special case of Eq. \eqref{Eq:DISC-distill-cons1} when system $A_0$ is one-dimensional.  Therefore, we identify non-classical channels as those possessing dynamical coherence.

 Like MIO in the QRT of static coherence, we define maximally incoherent superchannels (MISC) to be the set of all superchannnels that do not generate non-classical channels from the classical ones.
  Similar to MIO in the static case, MISC cannot be implemented without coherence-generating channels.
  For example, if we take the pre-processing channel to be any detection-incoherent channel (as defined in \cite{TEZ+2019}) and the post-processing channel to be any maximally incoherent channel, then we get a superchannel which belongs to MISC but its pre- and post-processing channels are non-classical.
 Nonetheless, much like the argument in static coherence, since we are interested in quantifying the coherence of a channel (as opposed to the coherence of a superchannel), we can use such superchannels as they cannot generate coherence at the channel level.
  That means, the superchannel might be composed of non-classical pre- and post-processing channels, but (even if it is tensored with the identity superchannel) it does not output a non-classical channel whenever the input is a classical channel (i.e. it is CRNG).

In our work, we formulate the QRT of dynamical coherence.
We define four different sets of free superchannels: maximally incoherent superchannels (MISC), dephasing-covariant incoherent superchannels (DISC), incoherent superchannels (ISC), and strictly incoherent superchannel (SISC), which are the analog of MIO, DIO, IO and SIO, in the static case. 
  We focus specifically on the QRTs of MISC and DISC.
  Similar to how MIO is defined with respect to the dephasing channel, we define MISC with respect to dephasing superchannel, $\Delta$ (whose pre- and post-processing channels are dephasing channels) in the following way
 \be
\Theta\in\misc(A\to B)\iff \Delta_B\circ\Theta_{A\to B}\circ\Delta_A=\Theta_{A\to B}\circ\Delta_A\, .
 \ee
where $\misc(A \to B)$ means that the superchannel $\Theta$ converts a quantum channel that takes system $A_0$ to $A_1$ to another quantum channel that takes system $B_0$ to $B_1$.
  Its illustration is given in figure \ref{MISC}.
  \begin{figure}[h]
   \includegraphics[width=0.95\textwidth]{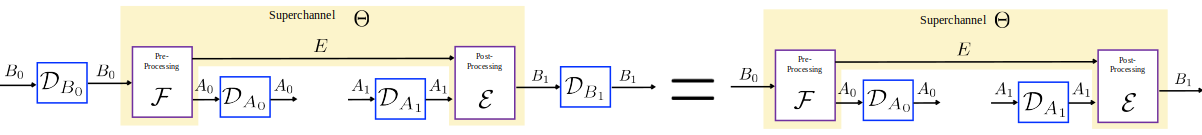}
  \caption{\linespread{1}\selectfont{\small MISC}} 
  \label{MISC}
\end{figure}\newline
DISC is defined analogously to how DIO is defined in static coherence, i.e., 
\be 
\Theta\in\disc(A\to B)\iff\Delta_B\circ\Theta_{A\to B}=\Theta_{A\to B}\circ\Delta_A
\ee
and its illustration is given in figure \ref{DISC}.
In our work, we provide simple characterization of $\misc$ and $\disc$.
 \begin{figure}[h]
   \includegraphics[width=0.95\textwidth]{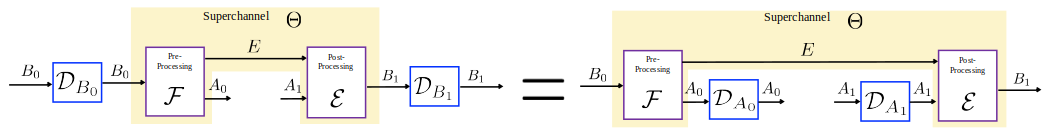}
  \caption{\linespread{1}\selectfont{\small DISC}} 
  \label{DISC}
\end{figure}

 We also quantify dynamical coherence using techniques from QRT of quantum processes \cite{GW2019, LW2019, LY2019} and study the interconversion of channels (i.e., simulation of one channel with another) under $\misc$ and $\disc$.
For the quantification of dynamical coherence, we list here a few key definitions. First, we define the relative entropy of dynamical coherence  under MISC to be (for any quantum channel $\mN_A \in \cptp(A_0 \to A_1)$) 
 \ba
C\left(\mN_A\right)&\eqdef\min_{\mM\in\mc(A_0\to A_1)}D\left(\mN_A\big\|\mM_A\right)\\
&\eqdef\min_{\mM\in\mc(A_0\to A_1)}\max_{\phi \in \md(R_0A_0)}D\left(\mN_{A_0 \to A_1}\left(\phi_{R_0A_0}\right)\big\|\mM_{A_0 \to A_1}\left(\phi_{R_0 A_0}\right)\right)
\ea 
 where $\mc(A_0 \to A_1)$ denotes the set of all classical channels, $\md(R_0A_0)$ denotes the set of density matrices on system $R_0A_0$, and $D(\rho \| \sigma) = \tr[\rho\log \rho - \rho\log \sigma]$ is the relative entropy.
 This monotone is faithful, i.e., zero iff $\mN_A \in \mc(A_0 \to A_1)$, and does not increase under MISC.
 For $\disc$, we define the relative entropy of dynamical coherence to be the function $D_{\Delta}$, given by
\be
D_\Delta(\mN_A)\eqdef D\left(\mN_A\big\|\Delta_A\left[\mN_A\right]\right)\;.
\ee
We show that it is a faithful monotone under DISC.

Similarly, the log-robustness of dynamical coherence is defined as
\be
  LR_\mc(\mN_{A}) \eqdef\min_{\mE\in\mc(A_0\to A_1)}
                 D_{\max}\big(\mN_{A}\|\mE_{A}\big)
\ee
and the dephasing log-robustness of dynamical coherence as 
\be
LR_\Delta(\mN_A)\eqdef D_{\max}\big(\mN_A\big\|\Delta_A[\mN_A]\big)\quad\forall\;\mN\in\cptp(A_0\to A_1)\;.
\ee
We prove that both these quantities are additive under tensor product and have  operational interpretations as the exact dynamical coherence costs in the $\misc$ and $\disc$ cases, respectively.

We then compute the liberal asymptotic cost of dynamical coherence (which is the dynamical coherence cost of a channel when the smoothing is ``liberal"~\cite{GW2019})  under $\misc$, and show that it is equal to a variant of the regularized relative entropy  given by
\be
D^{(\infty)}_{\mc}(\mN_A) \eqdef\lim_{n\to\infty}\frac{1}{n}\sup_{\varphi\in\md(RA_0)}\min_{\mE\in\mc(A^n_0\to A^n_1)}
D\left(\mN^{\otimes n}_{A_0\to A_1}\left(\varphi_{RA_0}^{\otimes n}\right)
\big\|\mE_{A^n_0\to A^n_1}\left(\varphi_{RA_0}^{\otimes n}\right)\right)
\ee

Moreover, we define the interconversion distance, $d_{\mf}(\mN_{A} \to \mM_{B})\eqdef\min_{\Theta\in\mf(A\to B)}\frac{1}{2}\left\|\Theta_{A\to B}\left[\mN_A\right]-\mM_B\right\|_\diamond$ between two quantum channels, $\mN_A \in \cptp(A_0 \to A_1) \text{ and } \mM_B \in \cptp(B_0 \to B_1)$ and show that if $\mf= \misc  \text{ or } \disc$, then $d_{\mf}(\mN_{A} \to \mM_{B})$ can be computed using a semi-definite program (SDP). Lastly, we formulate the one-shot distillable dynamical coherence and compute its value for a few specific channels, including the identity channel.

\section{Preliminaries}

\subsection{Notations}

In this article, we will denote all the dynamical systems and their corresponding Hilbert spaces by $A, B, C,$ etc, and all the static systems and their corresponding Hilbert spaces by $A_1, B_1, C_1$, etc.
In this setting, the notation for a dynamical system, say $A$, indicates a pair of systems such that $ A = (A_0, A_1) = (A_0 \to A_1)$ where $A_0$ and $A_1$ represent the input and output systems, respectively.
The choice of notation for the static systems is because all the states can be viewed as channels with trivial input.
For a composite system, the notation like $A_0B_0$ will be used to mean $A_0 \otimes B_0$.
To represent the dimension of a system, two vertical lines will be used. For example, the dimension of system $A_0$ is $|A_0|$.
A replica of the same system would be represented by using a tilde symbol.
For instance, system $\tA_0$ is a replica of system $A_0$, and system $\tA_1 \tB_1$ is a replica of system $A_1B_1$ i.e., $|\tA_0| = |A_0|$ and $|\tA_1 \tB_1| = |A_1B_1|$.

The set of bounded operators, Hermitian operators, positive operators and density matrices on system $A_0$ would be denoted by $\mb(A_0)$, $\herm(A_0)$, $\pos(A_0)$, and $\md(A_0)$, respectively.
Note that $\md(A_0)\subset \pos(A_0) \subset\herm(A_0) \subset \mb(A_0)$.
Density matrices would be represented by lowercase Greek letters $\rho$, $\sigma$, $\tau$, etc.
We will denote the maximally coherent state (or the plus state) by $\phi^+_{B_1}$ for a system $B_1$ and the unnormalized maximally entangled states by $\phi^{+}_{A_1 B_1}$ for a bipartite system $A_1 B_1$ (note the subscripts in both).
The maximally mixed state for a system $B_1$ will be denoted by $u_{B_1}$.
The set of all linear maps from $\mb(A_0)$ to $\mb(A_1)$ would be denoted by $\ml(A_0 \to  A_1)$, the set of all completely positive maps from $\mb(A_0) \to \mb(A_1)$ would be denoted by $\cp(A_0 \to A_1)$ and the set of quantum channels would be denoted by $\cptp(A_0 \to A_1)$ with $\cptp(A_0 \to A_1) \subset \cp(A_0 \to A_1) \subset \ml(A_0 \to A_1)$.
Throughout this article, we would use calligraphic letters like $\mE, \mF, \mM, \mN,$ etc, to represent quantum channels. 
For simplicity, we will denote a quantum channel with a subscript $A$, like $\mE_A$, to denote an element of $\cptp(A_0 \to A_1)$.
The identity map in $\ml(A_0 \to A_0)$  will be denoted by $\id_{A_0}$.

The notation $\ml(A \to B)$ will be used to denote the set of all maps from $\ml(A_0 \to A_1)$ to $\ml(B_0 \to B_1)$.
Similarly, the set of all maps from Herm$(A_0 \to A_1)$ to Herm$(B_0 \to B_1)$ would be denoted by Herm$(A \to B) \subset \ml(A \to B)$. Identity superchannel in $\ml(A \to A)$ would be denoted by $\1_A$.
All linear maps in $\ml(A \to B)$ and $\herm(A \to B)$ are known as supermaps and the set of supermaps that map quantum channels to quantum channels (even when tensored with the identity supermap) are called superchannels~\cite{CDP2008,G2019}.
We will use capital Greek letters like $\Theta, \Sigma, \Omega$, etc, to denote supermaps.
Square brackets will be used to denote the action of supermaps on linear maps.
For instance, $\Theta_{A \to B}[\mE_A]$ is a linear map in $\ml(B_0 \to B_1)$ obtained by the action of a supermap $\Theta \in \ml(A \to B)$ on a map $\mE \in \ml(A_0 \to A_1)$.
More detailed description of the supermaps and superchannels is provided in the next subsection.

\subsection{Supermaps and Superchannels}

The space $\ml(A_0 \to A_1)$ is equipped with the following inner product
\be \label{inner_prod_maps}
\left\langle \mN_A, \mM_A \right\rangle \eqdef \sum_{i,j} \left\langle \mN_A\left(\ket{i}\bra{j}_{A_0} \right), \mM_A\left( \ket{i}\bra{j}_{A_0} \right)\right\rangle_{HS}
\ee
where $\left\langle X,Y \right\rangle_{HS} \eqdef \tr[X^{*}Y]$ is the Hilbert-Schmidt inner product between the matrices $X,Y \in  \mb(A_1)$.
The above inner product is independent of the choice of the orthonormal basis $\left\{ \ket{i}\bra{j} \right\} \in \mb(A_0)$, and can be expressed in terms of Choi matrices.
The Choi matrix of a channel $\mN_A$ is given by
\be
J^{\mN}_{A_0A_1} = \mN_{\tA_0 \to A_1}\left( \phi^+_{A_0\tA_0} \right)
\ee
where $\phi^+_{A_0\tA_0} \equiv \ket{\phi^+}\bra{\phi^+}_{A_0\tA_0}$ is an unnormalized maximally entangled state where $\ket{\phi^+}_{A_0\tA_0} \equiv \sum_{i}^{|A_0|}\ket{i}_{A_0.}\ket{i}_{\tA_0}$.
With this notation, the inner product of two channels $\mN_A$ and $\mM_A$ can be expressed as
\be
\left\langle \mN_A, \mM_A \right\rangle = \left\langle J^{\mN}_A, J^{\mM}_A \right\rangle = \tr\left[\left(J^{\mN}_A\right)^{*} J^{\mM}_A  \right]
\ee

The canonical orthonormal basis of $\ml(A)$ (relative to the above inner product) is given by $\left\{\mE^{ijkl}_A \right\}$ where
\be
\mE^{ijkl}_A\left( \rho_{A_0} \right) = \bra{i}\rho_{A_0}\ket{j}\; \ket{k}\bra{l}_{A_1} \quad \forall \; \rho_{A_0} \in \mb(A_0)
\ee

The space $\ml(A \to B)$ (where $A = (A_0, A_1) \text{ and } B = (B_0, B_1)$) is equipped with the following inner product
\be
\left\langle \Theta_{A\to B}, \Omega_{A \to B} \right\rangle \eqdef \sum_{i,j,k,l} \left\langle \Theta_{A\to B}\left[\mE^{ijkl}_A \right], \Omega_{A\to B}\left[ \mE^{ijkl}_A \right] \right\rangle
\ee
where $\Theta_{A\to B}, \Omega_{A\to B} \in \ml(A\to B)$ and the inner product on the right-hand side is the inner product between maps as defined in \eqref{inner_prod_maps}.
Similar to how we can express the inner product of two maps by the inner product of their Choi matrices, we can define the inner product of two supermaps as the inner prouct of their Choi matrices as well.
The Choi matrix of a supermap $\Theta_{A\to B}$ is defined as \cite{G2019}
\be
\J^{\Theta}_{AB} = \sum_{i,j,k,l} J^{\mE^{ijkl}}_A \otimes J^{\Theta[\mE^{ijkl}]}_B
\ee
where $ J^{\mE^{ijkl}}_A \text{ and } J^{\Theta[\mE^{ijkl}]}_B$ are the Choi matrices of $\mE^{ijkl}_A$ and $\Theta_{A\to B}[\mE^{ijkl}_A]$, respectively.
With this notation, the inner product between two supermaps $\Theta_{A\to B} \text{ and } \Omega_{A\to B}$ can be expressed as 
\be
\left\langle \Theta_{A\to B}, \Omega_{A \to B} \right\rangle = \left\langle \J^{\Theta}_{AB}, \J^{\Omega}_{AB}\right\rangle_{HS}  = \tr\left[ \left( \J^{\Theta}_{AB} \right)^{*} \J^{\Omega}_{AB} \right]
\ee

We now give three alternate expressions of the Choi matrix of the supermap $\Theta \in \ml(A\to B)$ \cite{G2019}.
First, from its defintion, the Choi matrix of a supermap uses the $\cp$ map analog of entangled states which we represent as $\mP^+_{A\tA}$ and is given by
\be
\mP^+_{A\tA} = \sum_{i,j,k,l}\mE^{ijkl}_{A_0 \to A_1} \otimes \mE^{ijkl}_{\tA_0 \to \tA_1}
\ee
Similar to the properties of the maximally entangled state, the channel $\mP^{+}_{A\tA}$ satifies the following relation for any $\Theta \in \ml(A \to B)$
\be
\Theta_{\tA \to B}[\mP^{+}_{A\tA}] = \Theta^{T}_{\tB \to A}[\mP^{+}_{A\tA}]
\ee
where $\Theta^T \in \ml(B \to A)$ is the transpose of the supermap $\Theta$ which is defined by its components
\be
\left\langle \mE^{ijkl}_A, \Theta^{T}\left[ \mE^{i'j'k'l'}_B \right] \right\rangle = \left\langle \mE^{i'j'k'l'}_B , \Theta\left[ \mE^{ijkl}_A \right]\right\rangle \quad \forall\; i, j, k, l, i', j', k', l'
\ee
where $\left\{  \mE^{ijkl}_A\right\} $ and $\left\{  \mE^{i'j'k'l'}_B \right\}$ are the canonical orthonormal basis of $\ml(A)$ and $\ml(B)$, respectively.
Then, the Choi matrix of a superchannel $\Theta \in \ml(A\to B)$ can be expressed as
\be
\J^{\Theta}_{AB} = \Theta\left[\mP^{+}_{A\tA}\right]\left(\phi^{+}_{A_0\tA_0}\otimes \phi^{+}_{B_0\tB_0}\right)
\ee

The second way of defining the Choi matrix of a supermap is by its action on the Choi matrices of channels.
Lets consider a linear map $\Theta$ such that for $\mM_A \in \ml(A)$ and $\mN_B \in \ml(B)$,  $\mN_B = \Theta_{A\to B}[\mM_A]$.
Then the Choi matrices of $\mM_A$ and $\mN_A$ are related via
\be
J^{\mN}_{B} = \tr_{A}\left[ \J^{\Theta}_{AB}\left(\left(J^{\mM}_A\right)^{T} \otimes I_B  \right) \right]
\ee
That is, $\J^{\Theta}_{AB}$ can be interpreted as the Choi matrix of a linear map (say $\mR^{\Theta}_{A\to B}$) that converts $J^{\mM}_A$ to $J^{\mN}_B$.

For the last representation of the Choi matrix of a supermap, we can view it as a linear map $\mQ^{\Theta} : \mb(A_1B_0) \to \mb(A_0B_1)$ which is defined by the map satisfying
\be
\J^{\Theta}_{AB} \eqdef \mQ^{\Theta}_{\tA_1\tB_0 \to A_0B_1}(\phi^{+}_{A_1\tA_1}\otimes \phi^{+}_{B_0\tB_0}) 
\ee
We will see that the three representations play a useful role in our study of dynamical resource theory of coherence.

The dual of a linear map $\Theta \in \ml(A\to B)$ is a linear map $\Theta^* \in \ml(B\to A)$ with the property for all $\mM_A \in \ml(A)$ and for all $\mN_A \in \ml(B)$
\be
\left\langle \mN_B, \Theta\left[ \mM_A\right] \right\rangle = \left\langle \Theta^*\left[\mN_B \right], \mM_A \right\rangle 
\ee

Now let us define a superchannel.
A superchannel is a supermap $\Theta \in \ml(A\to B)$ that takes quantum channels to quantum channels even when tensored with identity supermap \cite{G2019, CDP2008, P2017, CDP2009, CDP+2013, BP2019, BGS+2019}.
The following are equivalent \cite{CDP2008, G2019}:
\begin{enumerate}
\item $\Theta$ is a superchannel
\item The Choi matrix $\J^{\Theta}_{AB}$ with marginals
\be
J^{\Theta}_{A_1 B_0} = I^{A_1B_0} \quad ; \quad J^{\Theta}_{AB_0} = J^{\Theta}_{A_0 B_0} \otimes u_{A_1}
\ee
where $u_{A_1} = \frac{I_{A_1}}{|A_1|}$ is the maximally mixed state for system $A_1$.
\item The map $\mR^{\Theta}_{A\to B}$ is $\cp$, and there exists a unital $\cp$ map $\mR^{\Theta}_{A_0 \to B_0}$ such that the map $\mR^{\Theta}_{A\to B_0} \equiv \tr_{B_1} \circ \mR^{\Theta}_{A\to B}$ satisfies
\be
\mR^{\Theta}_{A\to B_0} = \mR^{\Theta}_{A_0\to B_0} \circ \tr_{A_1}
\ee
\item There exists a Hilbert space $E$, with $|E| \leq |A_0B_0|$, and two $\cptp$ maps $\mF\in \cptp(B_0 \to A_0E)$ and $\mE \in \cptp(A_1E \to B_1)$ such that for all $\mN_A \in \ml(A_0 \to A_1)$
\be 
\Theta[\mN_A] = \mE_{A_1E \to B_1}\circ \mN_{A_0 \to A_1} \circ \mF_{B_0\to A_0E}
\ee
(see figure \ref{superchannel})
\end{enumerate}
\begin{figure}[h]
   \includegraphics[width=0.65\textwidth]{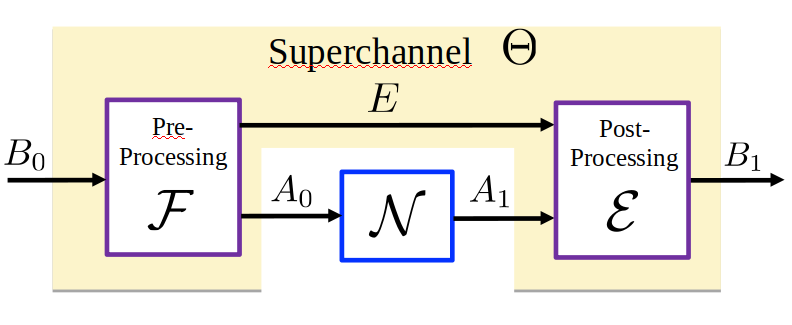}
  \caption{\linespread{1}\selectfont{\small Realization of a superchannel in terms of pre- and post-processing channels}} 
  \label{superchannel}
\end{figure}

\subsection{QRT of static coherence}

Coherence of a state is a basis-dependent concept.
Hence, a basis is fixed first in the resource theory of static coherence.
The density matrices that are diagonal in this basis form the free states of the theory.
These states are also called incoherent states.
Let us denote this set by $\mI_{A_1} \subset \mb(A_1)$ for any system $A_1$.
Hence, all the incoherent density operators $\rho_{A_1} \in \mI_{A_1}$ have the following form
\be
\rho_{A_1} = \sum_{i=0}^{|A_1| - 1}p_i\ket{i}\bra{i}_{A_1}
\ee
with probabillities $p_i$ and obey
\be
\mD_{A_1}(\rho_{A_1}) = \rho_{A_1}
\ee
where $\mD_{A_1}$ is the dephasing channel for the system $A_1$ and is defined as 
\be
\mD_{A_1}(\sigma_{A_1}) = \sum_{i=0}^{|A_1|-1}\ket{i}\bra{i}\sigma_{A_1}\ket{i}\bra{i}\, 
\ee
for any $\sigma_{A_1}\in \md(A_1)$.
For multi-partite systems, the preferred basis is the tensor product of the preferred  basis of each subsystem\cite{BCA2015, SSD+2015, WY2016}.

From the golden rule of QRT, the free operations are the set of channels that take the set of incoherent states to itself in the complete sense, i.e., they are completey resource non-generating.
Such operations are called incoherent operations.
In literature, several types of incoherent operations have been studied.
The largest set of incoherent operations is known as the maximally incoherent operations (MIO) \cite{A2006}.
Other incoherent operations include incoherent operations (IO) \citep{BCP2014}, dephasing-covariant incoherent operations (DIO) \cite{CG2016a, CG2016b, CG2016c, MS2016}, strictly incoherent operations (SIO) \cite{WY2016,YMG+2016}, physically incoherent operations (PIO) \cite{CG2016a, CG2016b, CG2016c}, translationally-invariant  operations (TIO) \cite{CY2017}, genuinely incoherent operations (GIO) \cite{VS2016}, fully incoherent operations (FIO) \cite{VS2016}, etc.
In this section, we will briefly discuss about MIO, DIO, IO, and SIO, as we will be defining four sets of free superchannels in the next section taking their analogy.

The maximally incoherent operations (or MIO) \cite{A2006} are defined as the set of CPTP and non-selective maps $\mE \in \ml(A_0 \to A_1)$ such that
\be 
\mE(\rho_{A_0}) \in \mI_{A_1}\quad \forall\; \rho_{A_0} \in \mI_{A_0}
\ee 
Let us denote the set of all channels that follow the above property by $\mio(A_0\to A_1)$.
Any CPTP map $\mM_{A_0 \to A_1} \in \mio(A_0 \to A_1)$ can be characterized using the dephasing channels in the following way
\be
\mM_{A_0 \to A_1} \in \mio(A_0\to A_1) \iff \mD_{A_1} \circ \mM_{A_0\to A_1}\circ \mD_{A_0} = \mM_{A_0 \to A_1 }\circ \mD_{A_0}
\ee
Despite the fact that MIO cannot create coherence, these operations do not have a free dilation, i.e., they cost coherence to be implemented \cite{CG2016a, CG2016b, CG2016c, MS2016}.

A smaller class of free operations, the incoherent operations (or IO) \cite{BCP2014} are defined as the set of CPTP maps $\mE \in \cptp(A_0 \to A_1)$ having a Kraus operator representation $\{K_n \}$ such that
\be
\frac{K_n \rho_{A_0} K_n^{\dagger}}{\tr[K_n \rho_{A_0} K_n^{\dagger}]} \in \mI_{A_1} \quad \forall \; n \text{ and } \rho_{A_0} \in \mI_{A_0}
\ee
This class of operations also do not have a free dilation \cite{CG2016a, CG2016b, CG2016c, MS2016}.

The next class of free operations, the strictly incoherent operations (or SIO) \cite{WY2016,YMG+2016} are defined as the set of CPTP maps $\mE \in \cptp(A_0 \to A_1)$ having a Kraus operator representation $\{K_n \}$ such that
\be
K_n \mD_{A_0}\left(\rho_{A_0}\right) K_n^{\dagger} = \mD_{A_1}\left(K_n \rho_{A_0} K_n^{\dagger}\right) \quad \forall \; n
\ee
This class of operations also do not have a free dilation \cite{CG2016a, CG2016b, CG2016c}.

The last class of free operations that is useful to us is the dephasing-covariant incoherent operations (or DIO) \cite{CG2016a, CG2016b, CG2016c, MS2016}. 
A CPTP map $\mE_{A}$ is said to be DIO if
\be
\left[\mD,\mE_A \right] = 0
\ee
which is equivalent to
\be
\mD_{A_1}\left(\mE_{A_0\to A_1}\left(\rho_{A_0}\right)  \right) = \mE_{A_0\to A_1}\left(\mD_{A_0}\left(\rho_{A_0} \right) \right) \quad \forall \; \rho_{A_0} \in \md(A_0)
\ee

\subsection{Max-relative entropy for channels}

The max-relative entropy is defined on a pair $(\rho, \sigma)$ with $\rho \in \md(A_1)$ and $\sigma \in \pos(A_1)$
 of a state $\rho$ with respect to a positive operator  $\sigma$ is given by
\be
	D_{\max}(\rho \| \sigma) \eqdef \log \min \left\{ t : t\sigma \geq \rho \right\}
\ee
where the inequality sign means that the difference between l.h.s. and r.h.s. is a positive operator.
Similarly for channels, the maximum relative entropy between two $\cp$ maps $\mN$ and $\mE$ is given by
\be\label{dmax_channels}
	D_{\max}(\mN_A \| \mE_A) \eqdef \log \min \left\{ t : t\mE_A \geq \mN_A \right\}
\ee
where the inequality  sign means that the difference between l.h.s. and r.h.s. is a $\cp$ map.
Denoting the Choi matrix of $t\mE_A$ by $\omega_A$, we can write
 \be\label{dmax_channels_Choi}
D_{\max}(\mN_A \| \mE_A) = \min \left\{ \frac{\tr[\omega_A]}{|A_0|} : \omega_A \geq 0\, , \, \omega_{A_0} = I_{A_0}\, , \, \omega_A \geq J^{\mN}_A \right\} 
 \ee
The channel max-relative entropy ($D_{\max}(\mN_A \| \mE_A)$) can be expressed in a simple closed form as a function of the Choi matrices of the maps $\mN_A$ and $\mE_A$ \cite{BHK+2018, FFR+2019}. 
This implies that it is also additive under tensor products.
For completeness, we give the following proof.
\begin{lemma}
 The max-relative entropy for channels is additive under tensor product, i.e.,
 \be
	D_{\max}(\mN_A \otimes \mM_{A'}\| \mE_A \otimes \mF_{A'}) = D_{\max}(\mN_A \| \mE_A) + D_{\max}(\mM_{A'} \| \mF_{A'})
 \ee
\end{lemma}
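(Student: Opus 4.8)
The plan is to reduce the statement about CP maps to the ordinary positive-semidefinite ordering of their Choi matrices. A supermap difference $t\,\mE_A - \mN_A$ is CP if and only if its Choi matrix $t J^{\mE}_A - J^{\mN}_A$ is positive semidefinite, so from the definition \eqref{dmax_channels} together with linearity of the Choi map I obtain the identity
\be
D_{\max}(\mN_A \| \mE_A) = \log\min\left\{ t : t J^{\mE}_A \ge J^{\mN}_A \right\} = D_{\max}\!\left( J^{\mN}_A \big\| J^{\mE}_A \right),
\ee
where the rightmost expression is the ordinary max-relative entropy between the positive operators $J^{\mN}_A$ and $J^{\mE}_A$. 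Since the Choi matrix of a tensor product of maps equals, up to a fixed permutation of the tensor factors (a unitary conjugation, which preserves the semidefinite ordering), the tensor product $J^{\mN}_A \otimes J^{\mM}_{A'}$ of the individual Choi matrices, the claimed channel identity follows once I establish additivity of the operator max-relative entropy on product operators.

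For the operator statement I would split into two inequalities. The upper bound is routine: if $t_1 J^{\mE}_A \ge J^{\mN}_A$ and $t_2 J^{\mF}_{A'} \ge J^{\mM}_{A'}$, then the telescoping identity
\be
t_1 t_2\, J^{\mE}_A \otimes J^{\mF}_{A'} - J^{\mN}_A \otimes J^{\mM}_{A'} = t_1 J^{\mE}_A \otimes \left( t_2 J^{\mF}_{A'} - J^{\mM}_{A'}\right) + \left( t_1 J^{\mE}_A - J^{\mN}_A \right) \otimes J^{\mM}_{A'}
\ee
writes the left-hand side as a sum of tensor products of positive operators, hence positive; choosing $t_1,t_2$ optimal gives $D_{\max} \le \log(t_1 t_2) = D_{\max}(\mN_A\|\mE_A) + D_{\max}(\mM_{A'}\|\mF_{A'})$.

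The reverse inequality carries the real content, and I would obtain it from the closed form of $D_{\max}$. Assuming finiteness, i.e. $\supp J^{\mN}_A \subseteq \supp J^{\mE}_A$, conjugating $t J^{\mE}_A \ge J^{\mN}_A$ by the generalized inverse $\left(J^{\mE}_A\right)^{-1/2}$ shows that the minimal $t$ equals the largest eigenvalue of $\left(J^{\mE}_A\right)^{-1/2} J^{\mN}_A \left(J^{\mE}_A\right)^{-1/2}$, so that $D_{\max}(\mN_A\|\mE_A) = \log \left\| \left(J^{\mE}_A\right)^{-1/2} J^{\mN}_A \left(J^{\mE}_A\right)^{-1/2}\right\|_\infty$. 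Because generalized inverses and square roots factor over tensor products and the operator norm is multiplicative, $\left\| X\otimes Y \right\|_\infty = \left\| X \right\|_\infty \left\| Y \right\|_\infty$, the product operator yields exactly the sum of the two logarithms, delivering the matching lower bound (and hence both directions at once). The point requiring the most care, and the main obstacle, is the support/finiteness bookkeeping: I must confirm that $D_{\max}(\mN_A\otimes\mM_{A'}\|\mE_A\otimes\mF_{A'})$ is $+\infty$ precisely when one of the factors is, which follows from $\supp(J^{\mN}_A\otimes J^{\mM}_{A'}) = \supp J^{\mN}_A \otimes \supp J^{\mM}_{A'}$ together with the nonvanishing of the Choi matrices, so the infinite cases on the two sides agree as well.
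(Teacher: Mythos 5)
Your proof is correct, but for the harder direction it takes a genuinely different route from the paper. Both arguments agree on the reduction to Choi matrices and on the easy inequality $\leq$ (your telescoping identity is just a more explicit justification of the step the paper waves through when it rescales $t\,(\mE_A\otimes\mF_{A'})$ by $t_1t_2$). For the inequality $\geq$, the paper writes $D_{\max}$ as an SDP, passes to its dual, and exhibits the product $\eta^1_A\otimes\eta^2_{A'}$ of optimal dual variables as a feasible point for the tensor-product problem; you instead invoke the closed form $D_{\max}(\mN_A\|\mE_A)=\log\bigl\|\bigl(J^{\mE}_A\bigr)^{-1/2}J^{\mN}_A\bigl(J^{\mE}_A\bigr)^{-1/2}\bigr\|_\infty$ and conclude from multiplicativity of the operator norm and the factorization of generalized inverses over tensor products. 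Your route is in fact the one the paper alludes to just before the lemma (``can be expressed in a simple closed form as a function of the Choi matrices\dots This implies that it is also additive'') but does not carry out; it is more self-contained, yields both inequalities simultaneously, and is the only one of the two that explicitly handles the support/infinite cases, which the paper's duality argument silently assumes away. What the paper's approach buys in exchange is a template that transfers to other SDP-representable channel divergences where no closed form is available, and it sidesteps any discussion of generalized inverses. One small caveat on your write-up: the stated dual in the paper has some typographical inconsistencies (the constraint set does not visibly depend on $\mE_A$), so your avoidance of the dual is arguably the cleaner path here.
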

\begin{proof}
For the proof of the inequality $D_{\max}(\mN_A \otimes \mM_{A'}\| \mE_A \otimes \mF_{A'}) \leq D_{\max}(\mN_A \| \mE_A) + D_{\max}(\mM_{A'} \| \mF_{A'})$, let 
\begin{align}
D_{\max}(\mN_A \| \mE_A) &= \log  \{ t_1 : t_1 \mE_A \geq \mN_A \}\, , \\
D_{\max}(\mM_{A'} \| \mF_{A'}) &= \log \{ t_2 : t_2 \mF_{A'} \geq \mM_{A'} \}\, .
\end{align}
We can rewrite $D_{\max}(\mN_A \otimes \mM_{A'} \| \mE_A \otimes \mF_{A'})$ as
\ba
D_{\max}(\mN_A \otimes \mM_{A'} \| \mE_A \otimes \mF_{A'}) &= \log \min \{ t : t \left(\mE_A \otimes \mF_{A'}\right) \geq \mN_A \otimes \mM_{A'} \} \label{dmax_additive} \\
		& = \log \min \{ t : \frac{t}{t_1 t_2} \left(t_1 \mE_A \otimes t_2\mF_{A'}\right) \geq \mN_A \otimes \mM_{A'} \} 
\ea
From this, we can clearly see 
\be
	\log \min \left\{ t : t \left(\mE_A \otimes \mF_{A'}\right) \geq \mN_A \otimes \mM_{A'} \right\} \leq \log(t_1 t_2) 
\ee
Hence,
\be \label{dmax_leq_condition}
	D_{\max}(\mN_A \otimes \mM_{A'}\| \mE_A \otimes \mF_{A'}) \leq D_{\max}(\mN_A \| \mE_A) + D_{\max}(\mM_{A'} \| \mF_{A'})
\ee
For the proof of $D_{\max}(\mN_A \otimes \mM_{A'}\| \mE_A \otimes \mF_{A'}) \geq D_{\max}(\mN_A \| \mE_A) + D_{\max}(\mM_{A'} \| \mF_{A'})$, note that $D_{\max}$ in \eqref{dmax_channels} and \eqref{dmax_channels_Choi} can be computed using an SDP and its dual is given by
\be
	D_{\max}(\mN_A \| \mE_A) = \log \max \left\{\tr[\beta_A J^{\mN}_A] : \frac{I_A}{|A_0|} + \tau_{A_0}\otimes I_{A_1} \geq \beta_A \right\}
\ee
where $\beta_A \geq 0$ and $\tau_{A_0} \in \herm(A_0)$ such that $\tr[\tau_{A_0}] = 0$.
We can rewrite this as 
\be
	D_{\max}(\mN_A \| \mE_A) = \log \max \left\{\tr[\eta_A J^{\mN}_A] : \eta_A = \gamma_{A_0} \otimes I_{A_1} \, , \, \gamma_{A_0} \geq 0 \, , \, \tr[\gamma_{A_0}] = 1 \right\}
\ee
Now let
\begin{align}
	2^{D_{\max}(\mN_A \| \mE_A)} &= \tr[\eta^1_A J^{\mN}_A] \, , \\
	2^{D_{\max}(\mM_{A'} \| \mF_{A'})} &= \tr[\eta^2_A J^{\mM}_{A'}] \, .
\end{align}
We can write $2^{D_{\max}(\mN_A \otimes \mM_{A'} \| \mE_A \otimes \mF_{A'})} $ as
\ba
	2^{D_{\max}(\mN_A \otimes \mM_{A'} \| \mE_A \otimes \mF_{A'})} &= \max \left\{ \tr\left[\eta_{A A'}\left(J^{\mN}_A \otimes J^{\mM}_{A'} \right)\right] : \eta_{AA'} = (u_{A_0A_0'} + \tau_{A_0 A_0'})\otimes I_{A_1 A_1'} \right\}
\ea
where $\tr[\tau_{A_0 A_0'}] = 0$.
Since the choice of $\eta_{AA'} = \eta^1_A \otimes \eta^2_{A'}$ satisfies the above constraint, therefore we can say
\be
	2^{D_{\max}(\mN_A \otimes \mM_{A'} \| \mE_A \otimes \mF_{A'})} \geq 2^{D_{\max}(\mN_A \| \mE_A)}2^{D_{\max}(\mM_{A'} \| \mF_{A'})}
\ee
which implies
\be \label{dmax_geq_condition}
	D_{\max}(\mN_A \otimes \mM_{A'} \| \mE_A \otimes \mF_{A'}) \geq D_{\max}(\mN_A \| \mE_A) + D_{\max}(\mM_{A'} \| \mF_{A'})
\ee

From \eqref{dmax_leq_condition} and \eqref{dmax_geq_condition}, we can conclude that the max rel-entropy for channels is additive under tensor products, i.e., $D_{\max}(\mN_A \otimes \mM_{A'} \| \mE_A \otimes \mF_{A'}) = D_{\max}(\mN_A \| \mE_A) + D_{\max}(\mM_{A'} \| \mF_{A'})$.
\end{proof}
Apart from this, the $\epsilon$-smooth max-relative entropy is defined and discussed in detail in \cite{LW2019, DKW+2018, FWT+2018}
\be
 D^{\epsilon}_{\max}(\mN_A \| \mM_A) \eqdef \inf_{\mN_A' \in B_{\epsilon}(\mN_A)} D_{\max}(\mN_A' \| \mM_A)
\ee
where 
\be
B_{\epsilon}(\mN_A) = \left\{ \mN_A' \in \cptp(A_0 \to A_1): \frac{1}{2}\|\mN_A' - \mN_A \|_{\diamond} \leq \epsilon \right\}
\ee

\section{The set of free superchannels}\label{free_superchannels}

As discussed in the introduction, the set of free channels in the theory of dynamical coherence are classical channels.
Therefore, a free superchannel consists of a pre-processing classical channel and a post-processing classical channel (see Fig.~\ref{prepost}). However, such a free superchannel always destroy completely any resource; that is, it converts all channels (even coherent ones) into classical channels. This means that the resource theory is in a sense ``degenerate" and no interesting consequences can be concluded from such a theory. 

\begin{figure}[h]
   \includegraphics[width=0.35\textwidth]{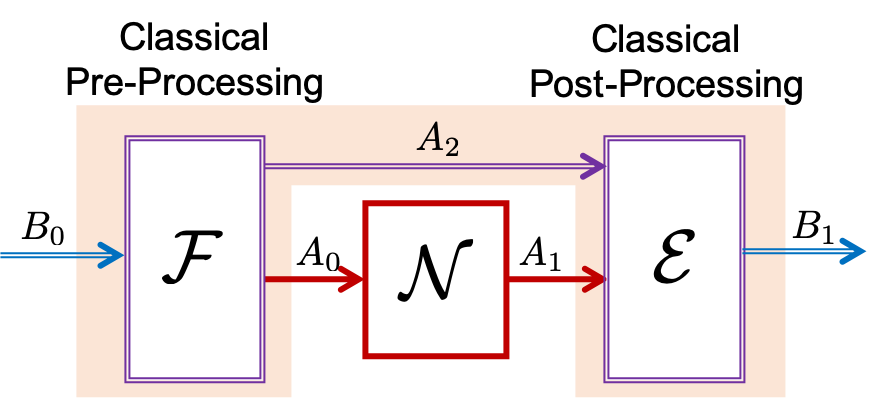}
  \caption{\linespread{1}\selectfont{\small The action of a classical superchannel on a quantum channel.}} 
  \label{prepost}
\end{figure}

This above type of degeneracy also occurs with the resource theory of coherence in the \emph{state domain}. There, the only free operations that are physically consistent are PIO,
which are very restricted and cannot provide much insight into the phenomenon of coherence in quantum systems. Therefore, almost all the enormous amount of work in recent years on the QRT of coherence was devoted to the study of coherence under much larger sets of operations, such as MIO, DIO, IO, and SIO. While these larger sets of operations
cannot be implemented without a coherence cost, they do not generate coherence, and as such they can be used for the study of coherence of states. However, since MIO, DIO, IO, and SIO, all have a coherence cost, they cannot be used as the ``free operations" in a resource theory that aims to quantify the coherence of quantum channels. 

Instead, for a dynamical QRT of coherence, one can define free superchannels that form a larger set than classical superchannels. 
Similar to what happens in the state domain, there is a coherent cost to implement such superchannels, however, they do not generate dynamical coherence, and therefore can be used in a dynamical resource theory of coherence. As it happens in the state domain, there are several natural sets of free superchannels that we can define. 

\subsection{Maximally Incoherent Superchannels (MISC)}

In any quantum resource theory, free operations cannot generate a resource. Taking this principle to the level of superchannels, we define the maximal incoherent superchannels (MISC) as follows.

\begin{definition}
Given two dynamical systems $A$ and $B$, a superchannel $\Theta\in\ms(A\to B)$ is said to be MISC if
\be
\Theta_{A\to B}[\mN_A]\in\mc(B_0\to B_1)\quad\forall\mN_A\in\mc(A_0\to A_1)\;.
\ee
We denote the set of all superchannels that have the above property by $\misc(A\to B)$.
\end{definition}

\begin{remark}
Similar to the characterization of MIO channels with the dephasing channel, the condition that $\Theta$ is in $\misc(A\to B)$ can be characterized with the dephasing superchannel $\Delta_A$. Specifically, we have that
\be\label{defmisc}
\Theta\in\misc(A\to B)\iff \Delta_B\circ\Theta_{A\to B}\circ\Delta_A=\Theta_{A\to B}\circ\Delta_A\;.
\ee
\end{remark}

One of the key properties of any resource theory is that the free operations are ``completely free". This is a physical requirement that a free channel (or superchannel) can act on a subsystem. In the following theorem we show that $\misc(A\to B)$ is completely free. That is, in the QRT we consider here there is no difference between RNG and completely RNG.

\begin{theorem}
Let $A$ and $B$ be two dynamical systems, and let $\Theta\in\misc(A\to B)$. Then, for any dynamical system $R$, the superchannel $\1_R\otimes\Theta$ is free; i.e.
$\1_R\otimes\Theta\in\misc(RA\to RB)$.
\end{theorem}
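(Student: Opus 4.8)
The plan is to work with the dephasing-superchannel characterization of MISC given in Eq.~\eqref{defmisc}, namely $\Delta_B\circ\Theta_{A\to B}\circ\Delta_A=\Theta_{A\to B}\circ\Delta_A$, rather than the defining ``maps classical to classical'' property directly, because the commutation identity composes and tensors cleanly. First I would record that $\1_R\otimes\Theta$ is automatically a superchannel: this is immediate from the definition of a superchannel, which demands that the supermap send channels to channels \emph{even after} tensoring with an identity supermap. Hence the only thing to verify is the MISC commutation relation on the composite systems,
\[
\Delta_{RB}\circ(\1_R\otimes\Theta)\circ\Delta_{RA}=(\1_R\otimes\Theta)\circ\Delta_{RA}\;.
\]

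The crucial structural input is that the dephasing superchannel factorizes across a tensor product, $\Delta_{RA}=\Delta_R\otimes\Delta_A$ and likewise $\Delta_{RB}=\Delta_R\otimes\Delta_B$. This holds because the incoherent basis of a composite system is the product basis, so the pre- and post-processing dephasing channels split, $\mD_{R_0A_0}=\mD_{R_0}\otimes\mD_{A_0}$ and $\mD_{R_1A_1}=\mD_{R_1}\otimes\mD_{A_1}$; verifying the superchannel identity $\Delta_{RA}=\Delta_R\otimes\Delta_A$ on product channels $\mM_R\otimes\mN_A$ and extending by linearity is routine.

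With the factorization in hand I would invoke the bifunctoriality (interchange law) of composition and tensor product of superchannels, $(\Sigma_1\circ\Omega_1)\otimes(\Sigma_2\circ\Omega_2)=(\Sigma_1\otimes\Sigma_2)\circ(\Omega_1\otimes\Omega_2)$, a basic property of the formalism \cite{G2019}. The left-hand side then becomes $(\Delta_R\circ\1_R\circ\Delta_R)\otimes(\Delta_B\circ\Theta\circ\Delta_A)$. Now $\Delta_R\circ\1_R\circ\Delta_R=\Delta_R\circ\Delta_R=\Delta_R$, using that dephasing is idempotent ($\mD\circ\mD=\mD$, whence $\Delta_R\circ\Delta_R=\Delta_R$), while $\Delta_B\circ\Theta\circ\Delta_A=\Theta\circ\Delta_A$ by the hypothesis $\Theta\in\misc(A\to B)$. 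Thus the left side equals $\Delta_R\otimes(\Theta\circ\Delta_A)$. Computing the right side identically gives $(\1_R\otimes\Theta)\circ\Delta_{RA}=(\1_R\circ\Delta_R)\otimes(\Theta\circ\Delta_A)=\Delta_R\otimes(\Theta\circ\Delta_A)$, which is the same expression, establishing the commutation relation and hence the claim.

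The main obstacle is not the algebra, which collapses in a few lines once the right identities are assembled, but the careful justification of the two structural facts: the factorization $\Delta_{RA}=\Delta_R\otimes\Delta_A$ and the interchange law for superchannels. The latter is cleanest to check at the level of Choi matrices, using the composition and tensor-product representations recorded in the preliminaries; I would take care that the groupings of input and output systems line up so that each composite $\Delta_R\circ\1_R\circ\Delta_R$ and $\Delta_B\circ\Theta\circ\Delta_A$ is well-typed before applying bifunctoriality. An entirely equivalent route is to argue directly from the definition by taking $\mM_{RA}\in\mc(R_0A_0\to R_1A_1)$ and showing $(\1_R\otimes\Theta)[\mM_{RA}]$ is fixed by $\Delta_{RB}$, but this merely repeats the same computation in less transparent notation, so I would present the characterization-based argument.
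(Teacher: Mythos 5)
Your proof is correct and follows essentially the same route as the paper's: both hinge on the factorization $\Delta_{RA}=\Delta_R\otimes\Delta_A$ together with the dephasing characterization~\eqref{defmisc} of MISC. The only difference is presentational --- the paper evaluates the maps on an arbitrary classical channel $\mN_{RA}$ and uses $\Delta_R\left[\mN_{RA}\right]=\mN_{RA}$ and $\Delta_A\left[\mN_{RA}\right]=\mN_{RA}$ to conclude directly from the definition, whereas you verify the commutation identity at the level of supermaps via idempotence of $\Delta_R$ and the interchange law, and then invoke the equivalence stated in the Remark to conclude membership in $\misc(RA\to RB)$.
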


\begin{proof}
Let $\mN_{RA}\in\mc(R_0A_0\to R_1A_1)$ be a classical channel satisfying 
\be
\Delta_{RA}\left[\mN_{RA}\right]=\Delta_R\otimes\Delta_A\left[\mN_{RA}\right]=\mN_{RA}\;.
\ee
Then,
\begin{align}
\Delta_{RB}\circ\left(\1_R\otimes\Theta_{A\to B}\right)\left[\mN_{RA}\right]&=
\Delta_R\otimes\left(\Delta_B\circ\Theta_{A\to B}\right)\left[\mN_{RA}\right]\\
&=\1_R\otimes\left(\Delta_B\circ\Theta_{A\to B}\right)\left[\mN_{RA}\right]\\
&=\1_R\otimes\left(\Delta_B\circ\Theta_{A\to B}\circ\Delta_A\right)\left[\mN_{RA}\right]\\
&=\1_R\otimes\left(\Theta_{A\to B}\circ\Delta_A\right)\left[\mN_{RA}\right]\\
&=\1_R\otimes\Theta_{A\to B}\left[\mN_{RA}\right]
\end{align}
where the first equality follows from the equality $\Delta_{RA}=\Delta_R\otimes\Delta_A$, the second equality from the fact that $\mN_{RA}$ is classical and in particular $\Delta_R\left[\mN_{RA}\right]=\mN_{RA}$, the third equality from the similar equality $\Delta_A\left[\mN_{RA}\right]=\mN_{RA}$, the fourth equality from~\eqref{defmisc}, and the last equality follows again from $\Delta_A\left[\mN_{RA}\right]=\mN_{RA}$. Hence, $\1_R\otimes\Theta_{A\to B}\left[\mN_{RA}\right]$ is classical so that $\1_R\otimes\Theta\in\misc(RA\to RB)$. This completes the proof.
\end{proof}

The theorem above indicates that MISC can be viewed as the set of completely resource non-generating superchannels in the theory of dynamical coherence. We next consider the characterization of the set MISC. Recall that in the state domain, we can determine if a channel $\mE_A$ belong to MIO$(A_0\to A_1)$ simply by checking if all the states $\mE_A(|x\lr x|_{A_0})$ are diagonal for all $x=1,...,|A_0|$. This simplicity of MIO implies that all state conversions in the single-shot regime can be determined with SDP. In the channel domain, however, the characterization of MISC is slightly more complex.

Recall that the Choi matrix of any classical channel $\mN\in\cptp(A_0\to A_1)$ is a column stochastic matrix. The set of all extreme points (i.e. classical channels) of the set of $|A_0|\times |A_1|$ column stochastic matrices consists of matrices that in each column has $|A_0|-1$ zeros and 1 one. Therefore the number of extreme points is given by $|A_0|^{|A_1|}$. This may give the impression that in order to check if $\Theta\in\misc(A_0\to A_1)$ one has to check if the channel $\Theta[\mE_A]$ is classical for all the  $|A_0|^{|A_1|}$ extreme classical channels. Since the number of conditions is exponential in $|A_1|$ it may give the impression that the problem of deciding if a superchannel belongs to MISC cannot be solved with SDP. However, we show now that this problem can be solved with polynomial (in $|A_0 A_1|$) number of constraints. It can be seen from the relationship between the Choi matrix of $\Theta_{A\to B}$ and that of $\Theta_{A\to B}\circ\Delta_A$ and $\Delta_B\circ\Theta_{A\to B}$.

\begin{lemma}\label{cute}
Let $A$ and $B$ be two dynamical systems, $\Theta\in\ms(A\to B)$ be a superchannel, and $\Delta_A\in\ms(A\to A)$ and $\Delta_B\in\ms(B\to B)$ be the completely dephasing superchannels. Then, the Choi matrices of $\Theta_{A\to B}$, $\Theta_{A\to B}\circ\Delta_A$, and $\Delta_B\circ\Theta_{A\to B}$, satisfy the relations 
\be
\J_{AB}^{\Theta\circ\Delta_A}=\mD_A\left(\J_{AB}^{\Theta}\right)\quad
\text{and}\quad\J_{AB}^{\Delta_B\circ\Theta}=\mD_B\left(\J_{AB}^{\Theta}\right)
\ee
\end{lemma}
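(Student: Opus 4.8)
The plan is to reduce both identities to the defining sum of the supermap Choi matrix,
\be
\J^{\Theta}_{AB}=\sum_{i,j,k,l}J^{\mE^{ijkl}}_A\otimes J^{\Theta[\mE^{ijkl}]}_B,
\ee
together with the elementary observation that the Choi matrix of a canonical basis map is a matrix unit, $J^{\mE^{ijkl}}_A=\op{i}{j}_{A_0}\otimes\op{k}{l}_{A_1}$. Since $\mD_A$ dephases both $A_0$ and $A_1$ in the product basis, it annihilates every term with $i\neq j$ or $k\neq l$ and keeps only the diagonal units $\op{i}{i}_{A_0}\otimes\op{k}{k}_{A_1}=J^{\mE^{iikk}}_A$. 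The whole argument is therefore just a matter of tracking which basis maps survive the dephasing, on the input side for the first identity and on the output side for the second.

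For the first identity I would start by computing the action of the dephasing superchannel on a basis map. Writing $\Delta_A[\mE^{ijkl}]=\mD_{A_1}\circ\mE^{ijkl}\circ\mD_{A_0}$ and using $\mE^{ijkl}(\rho)=\bra{i}\rho\ket{j}\,\op{k}{l}$, a short calculation (pushing the input dephasing through the bra--ket and the output dephasing through $\op{k}{l}$) yields $\Delta_A[\mE^{ijkl}]=\delta_{ij}\delta_{kl}\,\mE^{iikk}$. Substituting $(\Theta\circ\Delta_A)[\mE^{ijkl}]=\delta_{ij}\delta_{kl}\,\Theta[\mE^{iikk}]$ into the defining sum collapses it to $\sum_{i,k}J^{\mE^{iikk}}_A\otimes J^{\Theta[\mE^{iikk}]}_B$. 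This is exactly the result of applying $\mD_A$ term-by-term to $\J^{\Theta}_{AB}$, since $\mD_A\big(\op{i}{j}_{A_0}\otimes\op{k}{l}_{A_1}\big)=\delta_{ij}\delta_{kl}J^{\mE^{iikk}}_A$. Hence $\J^{\Theta\circ\Delta_A}_{AB}=\mD_A(\J^{\Theta}_{AB})$.

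For the second identity the dephasing acts after $\Theta$, so I would instead establish a channel-level fact and then lift it. The fact is: for any $\mM\in\ml(B_0\to B_1)$, the Choi matrix of $\Delta_B[\mM]=\mD_{B_1}\circ\mM\circ\mD_{B_0}$ equals $\mD_B(J^{\mM}_B)$, with $\mD_B=\mD_{B_0}\otimes\mD_{B_1}$ acting on the Choi operator. This follows by writing $J^{\mM}_B=\sum_{m,n}\op{m}{n}_{B_0}\otimes\mM(\op{m}{n})$ and noting that the input dephasing $\mD_{B_0}$ forces $m=n$ on the $B_0$ leg while the output dephasing $\mD_{B_1}$ commutes through to dephase $\mM(\op{m}{m})$; both sides reduce to $\sum_m\op{m}{m}_{B_0}\otimes\mD_{B_1}(\mM(\op{m}{m}))$. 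Applying this with $\mM=\Theta[\mE^{ijkl}]$ inside the defining sum, and using that $\mD_B$ acts only on the $B$ factors and therefore pulls out of the sum over $i,j,k,l$, gives $\J^{\Delta_B\circ\Theta}_{AB}=\mD_B(\J^{\Theta}_{AB})$.

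The only real obstacle is bookkeeping: one must keep straight that $\Delta_A,\Delta_B$ are \emph{superchannels} acting on maps, whereas $\mD_A,\mD_B$ are ordinary dephasing \emph{channels} acting on the Choi operators, and that the pre-processing dephasing $\mD_{A_0}$ translates into dephasing the input leg $A_0$ of the Choi matrix. Because dephasing is diagonal-preserving and equal to its own transpose, no transpose factors intrude when passing between the superchannel picture and the Choi picture, so the two pictures align without correction terms and the identities follow directly.
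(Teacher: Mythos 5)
Your proof is correct, and it takes a genuinely different route from the paper's. You work directly from the defining expansion $\J^{\Theta}_{AB}=\sum_{i,j,k,l}J^{\mE^{ijkl}}_A\otimes J^{\Theta[\mE^{ijkl}]}_B$, using that the Choi matrices of the canonical basis maps are matrix units and that dephasing annihilates the off-diagonal ones; the two identities then reduce to the computations $\Delta_A[\mE^{ijkl}]=\delta_{ij}\delta_{kl}\,\mE^{iikk}$ (input side) and $J^{\Delta_B[\mM]}_B=\mD_B\left(J^{\mM}_B\right)$ (output side), both of which check out. The paper instead represents $\J^{\Theta}_{AB}$ as the Choi matrix of the bipartite channel $\Theta_{\tA\to B}\big[\mP^+_{A\tA}\big]$ and relies on two transpose identities --- the ricochet property $\mM_{\tR_0\to R_1}\ket{\phi^+_{R_0\tR_0}}=\mM^T_{\tR_1\to R_0}\ket{\phi^+_{\tR_1 R_1}}$ at the state level and $\1_A\otimes\Omega_{\tA\to B}[\mP^+_{A\tA}]=\Omega^T_{\tB\to A}\otimes\1_B[\mP^+_{\tB B}]$ together with $\Delta^T=\Delta$ at the supermap level --- to move the pre-composed dephasing onto the Choi operator. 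Your argument is more elementary and self-contained: it avoids the supermap transpose machinery entirely and makes the asymmetry between the two identities transparent (pre-processing dephasing collapses the index sum, post-processing dephasing is a per-term channel-level fact). The paper's approach is more structural and generalizes more readily to superchannels pre- or post-composed with maps other than dephasing, since it does not depend on the dephasing channel killing specific basis elements, only on $\Delta^T=\Delta$. Both are complete proofs; your closing remark that no transpose corrections intrude because dephasing equals its own transpose is exactly the point where the two arguments reconcile.
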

\begin{proof}
The Choi matrix of a superchannel $\Theta$ can be expressed as the Choi matrix of the bipartite channel $\Theta_{\tA \to B}\big[ \mP^+_{A\tA} \big]$ \citep{G2019}. 
Similarly, the Choi matrix of the superchannel $\Theta \circ \Delta_A$ can be expressed as the Choi matrix of the bipartite channel $\Theta_{\tA \to B}\circ\Delta_{\tA}\big[ \mP^+_{A\tA}  \big]$
 and that of the superchannel $\Delta_{B}\circ\Theta$ as the Choi matrix of $\Delta_{B}\circ \Theta_{\tA \to B}\big[\mP^+_{A\tA} \big]$.

Denoting $\Theta_{\tA \to B}\big[\mP^+_{A\tA} \big]$ as $\mN_{AB}$, the Choi matrix of the superchannel $\Delta_B\circ\Theta_{A\to B}$ can be written as
\begin{align}
\J^{\Delta_B \circ \Theta}_{AB} &= J^{\Delta_B \left[ \mN_{AB}\right]}_{AB} \\
											   &= \mD_{B_1}\circ \mN_{\tA_0 \tB_0 \to A_1 B_1} \circ \mD_{\tB_0}\left(  \phi^+_{A_0\tA_0} \otimes \phi^+_{B_0\tB_0} \right) \label{Choi_Delta_Theta}
\end{align}
Now using the fact that $\mM_{\tR_0 \to R_1} \ket{\phi^+_{R_0 \tR_0}} = \mM^T_{\tR_1 \to R_0}\ket{\phi^+_{\tR_1 R_1}}$ , we can rewrite  \eqref{Choi_Delta_Theta} as 
 \begin{align}
   \J^{\Delta_B \circ \Theta}_{AB} &= \left( \mD_{B_0} \otimes \mD_{B_1}\right)\circ \mN_{\tA_0 \tB_0 \to A_1 B_1} \left(  \phi^+_{A_0\tA_0} \otimes \phi^+_{B_0\tB_0} \right) \\
   												  &= \mD_B\left(\J^{\Theta}_{AB} \right) 
 \end{align} 
To find $\J^{\Theta \circ \Delta_A}$, note that for any superchannel $\Omega \in \ms(A \to B)$ we have \cite{G2019}
\be\label{superchannel_transpose}
\1_A \otimes \Omega_{\tA \to B}[\mP^+_{A\tA}] = \Omega^T_{\tB \to A} \otimes \1_B[\mP^+_{\tB B}]
\ee
From this, it can be calculated that for the dephasing superchannel,  $\Delta^T = \Delta$.
Therefore, we have
\begin{align}
	\Theta_{\tA \to B} \circ \Delta_{\tA}\left[ \mP^+_{A\tA} \right] &= \Theta_{\tA \to B} \circ \Delta^T_{A}\left[ \mP^+_{A\tA} \right] \\
																								&= \Theta_{\tA \to B} \circ \Delta_{A}\left[ \mP^+_{A\tA} \right] \\
																								&= \Delta_A \circ \Theta_{\tA \to B}\left[ \mP^+_{A\tA} \right] \\
																								&= \Delta_A \circ \mN_{AB}
\end{align}	
So, the Choi matrix of $\Theta_{\tA \to B} \circ \Delta_{\tA}\big[ \mP^+_{A\tA} \big]$ is equal to finding the Choi matrix of $\Delta_A \circ \mN_{AB}$ .
From the calculation of the Choi matrix of $\Delta_B \circ \mN_{AB}$ above, we can easily conclude that
\be
	\J^{\Theta\circ\Delta_A}_{AB} = \mD_A\left(\J^{\Theta}_{AB} \right)
\ee
\end{proof}
With this lemma at hand we get the following characterization for the set $\misc(A\to B)$.
\begin{theorem}
Let $A$ and $B$ be two dynamical systems, and $\Theta\in\ms(A\to B)$ be a superchannel. Then, $\Theta\in\misc(A\to B)$ if and only if
\be\label{thmmisc}
\mD_{AB}\left(\J_{AB}^{\Theta}\right)=\mD_A\otimes\id_B\left(\J_{AB}^{\Theta}\right)\;.
\ee
\end{theorem}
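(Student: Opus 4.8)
The plan is to reduce the statement to the dephasing-superchannel characterization of MISC recorded in Eq.~\eqref{defmisc}, and then to convert that identity of superchannels into the asserted identity of Choi matrices by invoking Lemma~\ref{cute}. First I would recall that, by Eq.~\eqref{defmisc}, $\Theta\in\misc(A\to B)$ holds if and only if the two supermaps $\Delta_B\circ\Theta_{A\to B}\circ\Delta_A$ and $\Theta_{A\to B}\circ\Delta_A$ are equal. Since the assignment $\Theta\mapsto\J^{\Theta}_{AB}$ is a linear bijection between supermaps in $\ml(A\to B)$ and their Choi matrices, two supermaps coincide precisely when their Choi matrices coincide. Hence the MISC condition is equivalent to
\be
\J^{\Delta_B\circ\Theta\circ\Delta_A}_{AB}=\J^{\Theta\circ\Delta_A}_{AB}\;.
\ee

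Next I would evaluate each side using Lemma~\ref{cute}. The right-hand side is immediate from the first relation of the lemma, giving $\J^{\Theta\circ\Delta_A}_{AB}=\mD_A\left(\J^{\Theta}_{AB}\right)$. For the left-hand side, I would set $\Theta'\eqdef\Theta\circ\Delta_A$ and observe that $\Theta'$ is again a superchannel, being the composition of the superchannels $\Theta$ and $\Delta_A$; the second relation of the lemma then applies to $\Theta'$ and yields $\J^{\Delta_B\circ\Theta'}_{AB}=\mD_B\left(\J^{\Theta'}_{AB}\right)$. Substituting $\J^{\Theta'}_{AB}=\mD_A\left(\J^{\Theta}_{AB}\right)$ and using that $\mD_A$ (dephasing of the $A_0A_1$ factors) and $\mD_B$ (dephasing of the $B_0B_1$ factors) act on disjoint tensor factors, so that they commute and compose to the total dephasing $\mD_B\circ\mD_A=\mD_{AB}$, I obtain $\J^{\Delta_B\circ\Theta\circ\Delta_A}_{AB}=\mD_{AB}\left(\J^{\Theta}_{AB}\right)$.

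Combining the two evaluations, the MISC condition becomes $\mD_{AB}\left(\J^{\Theta}_{AB}\right)=\mD_A\left(\J^{\Theta}_{AB}\right)$. To match Eq.~\eqref{thmmisc} verbatim I would finally note that $\mD_A$ dephases only the $A$ systems and acts as the identity on the $B$ systems, so that $\mD_A=\mD_A\otimes\id_B$ as operations on the Choi space over $A_0A_1B_0B_1$; this gives exactly $\mD_{AB}\left(\J^{\Theta}_{AB}\right)=\mD_A\otimes\id_B\left(\J^{\Theta}_{AB}\right)$ and completes the equivalence.

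The main obstacle I anticipate is purely one of justification rather than computation: one must be careful that Lemma~\ref{cute} is legitimately applied to the \emph{composite} $\Theta\circ\Delta_A$ and not merely to $\Theta$. This requires the observation that $\Theta\circ\Delta_A$ is a bona fide superchannel so that the lemma is applicable as stated (alternatively, one checks that the lemma's proof uses only the linear/Choi structure and so extends to arbitrary supermaps in $\ml(A\to B)$). The remaining ingredients — bijectivity of the Choi correspondence for supermaps, the factorization $\mD_B\circ\mD_A=\mD_{AB}$, and the notational identification $\mD_A=\mD_A\otimes\id_B$ on the full Choi space — are routine, but I would state them explicitly to ensure the final identity is the one appearing in Eq.~\eqref{thmmisc}.
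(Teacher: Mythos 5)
Your proposal is correct and follows essentially the same route as the paper: both reduce the MISC condition via Eq.~\eqref{defmisc} to an equality of the supermaps $\Delta_B\circ\Theta\circ\Delta_A$ and $\Theta\circ\Delta_A$, and then translate it into the Choi-matrix identity by applying Lemma~\ref{cute} twice, once to $\Theta$ and once to the composite $\Theta\circ\Delta_A$. Your explicit justification that $\Theta\circ\Delta_A$ is itself a superchannel (so the lemma applies) and that the Choi correspondence is a bijection merely makes precise steps the paper leaves implicit.
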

\begin{proof}
From~\eqref{defmisc} and the lemma above we have that
\be
\J_{AB}^{\Theta\circ\Delta_A}=\mD_A\otimes\id_B\left(\J_{AB}^{\Theta}\right)
\ee
is equal to
\be
\J_{AB}^{\Delta_B\circ\Theta\circ\Delta_A}=\id_A\otimes\mD_B\left(\J_{AB}^{\Theta\circ\Delta_A}\right)=\mD_{AB}\left(\J_{AB}^{\Theta}\right)
\ee
This completes the proof.
\end{proof}

Note that for any Hermitian matrix $Z_{AB}\in\herm(AB)$ we have
\be
\tr\left[\Big(\mD_{AB}\left(\J_{AB}^{\Theta}\right)-\mD_A\otimes\id_B\left(\J_{AB}^{\Theta}\right)\Big)Z_{AB}\right]
=\tr\left[\J_{AB}^{\Theta}\Big(\mD_{AB}\left(Z_{AB}\right)-\mD_A\otimes\id_B\left(Z_{AB}\right)\Big)\right]
\ee
Therefore, the theorem above implies that $\Theta\in\misc(A\to B)$ if and only if 
\be\label{xa}
\tr\left[\J^\Theta_{AB}X_{AB}\right]=0\quad\forall X_{AB}\in\mk_\misc
\ee
where $\mk_\misc$ is a subspace of $\herm(AB)$ defined as
\be\label{k_MISC}
\mk_\misc\eqdef\Big\{\mD_{AB}\left(Z_{AB}\right)-\mD_A\otimes\id_B\left(Z_{AB}\right)\;:\;Z_{AB}\in\herm(AB)\Big\}\;.
\ee
Since the dimension of the subspace $\mk_\misc$ is $|AB|(|B|-1)$, it is sufficient to restrict $X_{AB}$ in~\eqref{xa} to the $|AB|(|B|-1)$ elements of some fixed basis of $\mk_\misc$. 
Note also that the condition above is equivalent to the inclusion $\J^\Theta_{AB}\in\mk^\perp_\misc$, where $\mk^\perp_\misc$ is the orthogonal complement of $\mk_\misc$ in $\herm(AB)$.

\subsection{Dephasing Incoherent Superchannels (DISC)}

In the QRT of static coherence, the dephasing channel plays a major role, and in particular, leading to the definition of DIO. Here, the dephasing superchannel defined by $\Delta_A[\mN_A]=\mD_{A_1}\circ\mN_A\circ\mD_{A_0}$ plays a similar roll, as we have already seen in the definition of MISC. We use here the dephasing superchannel to define the set of dephasing incoherent superchannels.

\begin{definition}
Let $A$ and $B$ be two dynamical systems, and let $\Theta\in\ms(A\to B)$ be a superchannel. Then, $\Theta$ is said to be a dephasing incoherent superchannel (DISC) if and only if
\be\label{defdisc}
\Delta_B\circ\Theta_{A\to B}=\Theta_{A\to B}\circ\Delta_A\;.
\ee
Moreover, the set of all such superchannels that satisfy the above relation is denoted by $\disc(A\to B)$.
\end{definition} 

Clearly, from its definition $\disc(A\to B)$ is a subset of $\misc(A\to B)$, and in particular, it is completely free.
Now, from Lemma~\ref{cute} it follows that a superchannel $\Theta\in\disc(A\to B)$ if and only if
\be\label{disc_condition}
\mD_A\otimes\id_B\left(\J_{AB}^{\Theta}\right)=\id_A\otimes\mD_B\left(\J_{AB}^{\Theta}\right)\;.
\ee
Moreover, similar to the considerations above, since the map $\mD_A\otimes\id_B-\id_A\otimes\mD_B$ is self adjoint, it follows that $\Theta\in\disc(A\to B)$ if and only if
\be\label{ya}
\tr\left[\J^\Theta_{AB}Y_{AB}\right]=0\quad\forall Y_{AB}\in\mathfrak{K}_\disc
\ee
where 
\be\label{k_DISC}
\mk_\disc\eqdef\Big\{\id_A\otimes\mD_{B}\left(Z_{AB}\right)-\mD_A\otimes\id_B\left(Z_{AB}\right)\;:\;Z_{AB}\in\herm(AB)\Big\}\;.
\ee
Since the dimension of the subspace $\mk_\disc$ is $|AB|(|A|+|B|-1)$ it is sufficient to restrict $Y_{AB}$ in~\eqref{ya} to the $|AB|(|A|+|B|-1)$ elements of some fixed basis of $\mk_\disc$. 
Note also that the condition above is equivalent to the inclusion $\J^\Theta_{AB}\in\mk^\perp_\disc$, where $\mk^\perp_\disc$ is the orthogonal complement of $\mk_\disc$ in $\herm(AB)$.

\subsection{Incoherent superchannels (ISC) and strictly incoherent superchannels (SISC)}

Any superchannel $\Theta\in\ms(A\to B)$ has a Kraus decomposition i.e. an operator sum representation 
\be\label{osr}
\Theta_{A\to B}=\sum_{x=1}^{n}\Theta^x_{A\to B}
\ee
where the Choi matrix of each $\Theta^x_{A\to B}\in\ml(A\to B)$ has rank one. We use this property to define two other sets of free operations that we call incoherent superchannels (ISC) and strictly incoherent superchannels (SISC).

\begin{definition}
Let $A$ and $B$ be two dynamical systems, and let $\Theta\in\ms(A\to B)$ be a superchannel. Then, $\Theta$ is said to be an incoherent superchannel (ISC) if and only if it has a Kraus decomposition $\{\Theta^x_{A\to B}\}_{x=1}^{n}$ as in~\eqref{osr} that satisfies
\be\label{defisc}
\Delta_B\circ\Theta_{A\to B}^x\circ\Delta_A=\Theta_{A\to B}^x\circ\Delta_A\quad\forall\;x=1,...,n.
\ee
Moreover, the set of all such superchannels that satisfy the above relation is denoted by $\isc(A\to B)$.
\end{definition} 

\begin{definition}
Let $A$ and $B$ be two dynamical systems, and let $\Theta\in\ms(A\to B)$ be a superchannel. Then, $\Theta$ is said to be a strictly incoherent superchannel (SISC) if and only if it has a Kraus decomposition $\{\Theta^x_{A\to B}\}_{x=1}^{n}$ as in~\eqref{osr} that satisfies
\be\label{defsisc}
\Delta_B\circ\Theta_{A\to B}^x=\Theta_{A\to B}^x\circ\Delta_A\quad\forall\;x=1,...,n.
\ee
Moreover, the set of all such superchannels that satisfy the above relation is denoted by $\sisc(A\to B)$.
\end{definition} 

\section{Quantification of dynamical coherence}\label{quantification}

In this section, we find the monotones to quantify dynamical coherence.
We also see which relative entropies form a monotone under MISC and DISC.

\subsection{A complete family of monotones}\label{complete_family_of_monotones}
   

In recent works\cite{LW2019, LBL2018, LY2019, GM2019, BDW+2019,  WW2018, PLO+2017, G2019}, various resource measures have been formulated  for a general resource theory of channels and for the dynamical resource theory of entanglement.
A complete set of monotones for both the general resource theory of channels and the resource theory of entanglement of channels was presented in \cite{GM2019}, i.e., it is sufficient to check if all the monotones of this set acting on one channel are greater than the other, then we can convert one channel to the other using the free superchannels of the given resource theory.
It was shown that the complete family of monotones for the dynamical resource theory of NPT entanglement can be computed using an SDP (which otherwise for LOCC-based entanglement is known to be NP-hard \cite{G2003}).

Similarly, we find a complete set of monotones under the free superchannels, $\misc$ and $\disc$. 
In general, for a given quantum resource theory, it is not obvious if these functions are computable, but we show here that for the dynamical resource theory of coherence, these functions can be computed using an SDP.

For a general quantum resource theory, we can define the following complete set of non-negative resource measures for any quantum channel $\mP_B \in \cptp(B_0 \to B_1)$ such that these measures take the value zero on free channels\cite{GM2019} 
\be \label{monotone_general_QRT}
G_{\mP}(\mN_A) \eqdef \max_{\Theta \in \free(A\to B)}\left\langle \mP_B, \Theta\left[\mM_A\right] \right\rangle \; - \max_{\mM_B \in \mg(B_0 \to B_1)}\left\langle \mP_B, \mM_B \right\rangle \quad  \forall \; \mM_A \in \cptp(A_0 \to A_1)\, .
\ee 
where $\mg(B_0\to B_1)$ denotes the set of free channels for the given resource theory.
 
For the dynamical resource theory of coherence, we can define a function $f_{\mP}(\mM_A)$ for any quantum channel $\mP_B \in \cptp(B_0\to B_1)$ and superchannel $\Theta \in \mf(A \to B)$ where $\mf = \misc \text{ or } \disc$, as 
\be \label{f_coherence}
	f_{\mP}(\mM_A) = \max_{\Theta \in \mf(A \to B)} \left\langle \mP_B, \Theta[\mM_B] \right\rangle \quad \forall\; \mM_A \in \cptp(A_0 \to A_1) 
\ee
Note that \eqref{f_coherence} can be expressed as the following SDP for a given $\mM_{A} \in \cptp(A_0 \to A_1)$
\be\label{complete_monotones_SDP}
\max \left\{\tr\left[\J^{\Theta}_{AB}\left(J^{\mM}_A \otimes J^{\mP}_B\right)\right]\right\}
\ee
where the maximum is subject to
\begin{align}
	&\J^{\Theta}_{AB} \geq 0 \, , \, \J^{\Theta}_{AB_0} = \J^{\Theta}_{A_0 B_0} \otimes u_{A_1}\, , \, \J^{\Theta}_{A_1 B_0}	= I_{A_1 B_0} \label{constraint1}\\
	& \tr[\J^{\Theta}_{AB} X^{i}_{AB}] = 0 \; \forall \; i = 1, \ldots , n \label{constraint2}
\end{align} 
where $\{X^i_{AB}\}_{i=1}^n$ can denote the basis of the subspace $\mk_\mf$ as defined in \eqref{k_MISC} and \eqref{k_DISC} for $\mf = \misc \text{ and } \disc$, respectively.
 For $\misc$, $n\equiv|AB|(|B|-1)$ whereas for $\disc$, $n \equiv |AB|(|A| + |B| -1)$. 
Conditions in \eqref{constraint1} are there because $\Theta(A\to B)$ is a superchannel whereas conditions in \eqref{constraint2} are the result of the requirement that $\Theta \in \mf(A\to B)$.

Similar to \eqref{monotone_general_QRT}, for all $\mP \in \cptp(B_0 \to B_1)$, we can define
\be
G_{\mP}(\mN_A) \eqdef \max \tr\left[ \J^{\Theta}_{AB}\left( \left(J^{\mN}_A\right)^T \otimes J^{\mP}_B  \right)\right] - \max \tr[J^{\mM}_B J^{\mP}_B]
\ee
where the second maximum is over all $\mM_B \in \mc(B_0 \to B_1)$ and the first maximum is subject to the constraints given in \eqref{constraint1} and \eqref{constraint2}.
The family $\left\{G_{\mP} \right\}$ over all $\mP \in \cptp(B_0 \to B_1)$ is a complete set of monotones, that is, there exists a $\Theta \in \mf(A\to B)$ where $\mf = \misc \text{ or }\disc$, that can convert a channel $\mN_A \in \cptp(A_0 \to A_1)$ to $\mM_B \in \cptp(B_0 \to B_1)$ if and only if
\be
G_{\mP}(\mN_A) \geq G_{\mP}(\mM_B)
\ee
\begin{remark1}
  For the qubit case we calculated the values of the monotone $G_{\mP}(\mN_A)$ under MISC for a few channels(or a class of channels) by plugging into CVX.
  This required construction of 48 basis elements (Eq. \eqref{k_MISC}).
  The value of $G_{\mP}(\mN_A)$ for all classical channels is 0 for all $\mP_B$.
  We found that for a fixed $\mP_B$, the value of all unitary channels is the same and they attain the maximum value of 2 when $\mP_B$ is the identity channel.
  If we fix $\mP_B$ to be the identity channel, we see that for a replacement channel that outputs a plus state ($\ket{+} = \frac{1}{\sqrt{n}}\sum_{i=0}^{n-1} \ket{i}$), the value of $G_{\id}(\mN_A)$ is equal to 2.
  For any other replacement channel and any depolarizing channel, $G_{\id}(\mN_A)$ is less than 2.
\end{remark1}
\begin{remark2}
 Since there are an infinite number of monotones in the above complete set $G_{\mP}$, it might give an impression that the conversion of a channel $\mN_A\in \cptp(A_0 \to A_1)$ to another channel 
$\mM_B\in \cptp(B_0 \to B_1)$ using a superchannel $\Theta \in \misc \text{ or }\disc$, is very hard or impractical, but in section \ref{interconversion} we show that the problem of interconversion of two quantum channels using a superchannel belonging to $\misc$ or $\disc$ can be computed using an SDP.
\end{remark2}

\subsection{Relative entropies of dynamical coherence}\label{relative_entropies}

A measure of distinguishability or divergence $D(\cdot\| \cdot)$ of two states is a function $D : \md(A_1) \times \md(B_1) \to \mathbb{R}$ such that it obeys data-processing inequality and is zero on the set of free states.
One example of such a function is R{\'e}nyi divergence\cite{R1961}.
 Its two quantum generalizations which have been given an operational interpretation are ``Sandwiched" R{\'e}nyi Relative Entropy (also known as Quantum R{\'e}nyi Divergence) and Petz-R{\'e}nyi relative entropy.
 ``Sandwiched" R{\'e}nyi Relative Entropy (or Quantum R{\'e}nyi Divergence) was introduced and discussed in \cite{WWY2014, MDS+2013, DL2014} whereas Petz-R{\'e}nyi relative entropy was introduced and studied in \cite{HMP+2011, HM2017, W2018a}.
Other generalizations of the R{\'e}nyi divergence and the quantum R{\'e}nyi relative entropies are discussed in \cite{AD2015} but their operational meaning is not clear. 
 
For channels, the relative entropies and divergence have been generalized from the state case (i.e., static resources) to channels (i.e., dynamic resources) and were discussed in \cite{LY2019, LW2019, GW2019, G2019, CMW2016, LKD+2018, BHK+2018}.
We take the relative entropies listed in  \cite{GW2019} and find the following three relative entropies to be clearly forming a monotone under MISC
\begin{align}
	C_1\left(\mN_A\right) &= \min_{\mM\in\mc(A_0\to A_1)} \max_{\phi \in \md(R_0A_0)} D\left(\mN_{A_0 \to A_1}\left(\phi_{R_0A_0}\right)\big\|\mM_{A_0 \to A_1}\left(\phi_{R_0 A_0}\right)\right) \\
	C_2\left(\mN_A\right) &= \min_{\mM\in\mc(A_0\to A_1)} \sup_{\rho, \sigma \in \md(R_0A_0)}D\left(\mN_A\left(\rho_{R_0A_0}\right)\big\|\mM_A\left(\sigma_{R_0A_0}\right) \right) - D\left(\rho_{R_0A_0}\big\|\sigma_{R_0A_0}\right)\\
C_3\left(\mN_A\right) &= \max_{\rho\in \md(R_0A_0)}D\left(\mN_{A}\left( \rho_{R_0A_0} \right) \right) - D\left( \rho_{R_0A_0}  \right)	
\end{align}
where $D(\rho) = \min_{ \mD(\sigma) = \sigma}D\left(\rho \| \sigma  \right)$ and $D(\rho \| \sigma) = \tr[\rho\log \rho - \rho\log \sigma]$ is the relative entropy.
The proof that the above relative entropies form a monotone under $\misc$ is similar to the proof for relative entropies forming a monotone for a general resurce theory of quantum processes as given in \cite{GW2019}.
Note that the relative entropies $C_1(\mN_A)$ and $C_2(\mN_A)$ are faithful, i.e., they take the value zero iff $\mN_A \in \mc(A_0 \to A_1)$. The relative entropy $C_3(\mN_A)$ is a state-based relative entropy and involves no optimization over the classical channels.

In \cite{GW2019}, there are three other relative entropies defined by taking the optimization over the set of free states instead of all density matrices.
There, the proof relies on the pre-processing channel to be completely resource non-generating.
Since, we cannot make this assumption, hence, we cannot say about the monotonicity of the relative entropies where the optimization is over the incoherent states. 

For any channel divergence $D$, define the function $D_\Delta:\cptp\to\mbb{R}_{+}$ given by
\be
D_\Delta(\mN_A)\eqdef D\left(\mN_A\big\|\Delta_A\left[\mN_A\right]\right)
\ee
and for the choice $D=D_{\max}$ we call it the dephasing logarithmic robustness and denote it by $D_\Delta\equiv LR_\Delta$.

\begin{lemma}
The function $D_\Delta$ is a dynamical resource monotones under $\disc$.
\end{lemma}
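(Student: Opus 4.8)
The plan is to verify the two defining requirements of a resource monotone: non-negativity (with faithfulness as a bonus) and monotonicity under $\disc$. Non-negativity is immediate, since $D$ is a channel divergence and hence $D_\Delta(\mN_A) = D(\mN_A \| \Delta_A[\mN_A]) \ge 0$; note also that $\Delta_A[\mN_A]=\mD_{A_1}\circ\mN_A\circ\mD_{A_0}$ is itself a legitimate CPTP map, so the divergence is well defined. Faithfulness follows from the fact that a genuine divergence vanishes only when its two arguments agree: $D_\Delta(\mN_A) = 0$ forces $\mN_A = \Delta_A[\mN_A]$, i.e. $\mN_A = \mD_{A_1}\circ\mN_A\circ\mD_{A_0}$, which is precisely the condition $\mN_A \in \mc(A_0 \to A_1)$.

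The substantive claim is monotonicity. Fix any $\Theta \in \disc(A\to B)$ and any $\mN_A \in \cptp(A_0\to A_1)$. I would begin from the definition
\[
D_\Delta\big(\Theta[\mN_A]\big) = D\big(\Theta[\mN_A]\,\big\|\,\Delta_B[\Theta[\mN_A]]\big),
\]
and then invoke the defining relation of $\disc$ in~\eqref{defdisc}, namely $\Delta_B\circ\Theta_{A\to B} = \Theta_{A\to B}\circ\Delta_A$, to rewrite the second argument as $\Delta_B[\Theta[\mN_A]] = \Theta[\Delta_A[\mN_A]]$. This commutation is the only place where the $\disc$ hypothesis enters, and it is exactly what places the \emph{same} superchannel $\Theta$ in front of both arguments of the divergence.

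With both arguments now of the form $\Theta[\cdot]$, the proof is completed by the data-processing inequality for the channel divergence under a superchannel, $D(\Theta[\mathcal{X}]\|\Theta[\mathcal{Y}]) \le D(\mathcal{X}\|\mathcal{Y})$. Taking $\mathcal{X} = \mN_A$ and $\mathcal{Y} = \Delta_A[\mN_A]$ gives
\[
D_\Delta\big(\Theta[\mN_A]\big) = D\big(\Theta[\mN_A]\,\big\|\,\Theta[\Delta_A[\mN_A]]\big) \le D\big(\mN_A\,\big\|\,\Delta_A[\mN_A]\big) = D_\Delta(\mN_A),
\]
which is the desired monotonicity.

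The main obstacle is establishing the superchannel data-processing step for a general $D$ (in particular for $D = D_{\max}$, which yields $LR_\Delta$). I expect to reduce it to the ordinary state-level data-processing inequality by using the pre/post-processing realization of a superchannel (property~4 of the superchannel characterization in the Preliminaries), writing $\Theta[\mathcal{X}] = \mE \circ \mathcal{X} \circ \mF$ for CPTP maps $\mF \in \cptp(B_0\to A_0 E)$ and $\mE \in \cptp(A_1 E\to B_1)$. For a fixed input state on $RB_0$ one pushes it through $\mF$ to obtain a state on $R A_0 E$, applies the state-level data-processing inequality to the post-processing $\mE$ (treating $E$ as part of an enlarged reference system), and then bounds the result by the supremum over input states that defines the channel divergence $D(\mathcal{X}\|\mathcal{Y})$. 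The point requiring care is that both arguments share the same $\mE$, $\mF$, and input state, so the reduction collapses to a single state-divergence inequality without having to reconcile two different optimizers; alternatively, this inequality can simply be cited from~\cite{GW2019, G2019}.
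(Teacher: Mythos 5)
Your proof is correct and follows essentially the same route as the paper's: rewrite $\Delta_B[\Theta[\mN_A]]$ as $\Theta[\Delta_A[\mN_A]]$ using the defining commutation relation of $\disc$, then apply the data-processing inequality of the channel divergence under the superchannel $\Theta$. The additional remarks on non-negativity, faithfulness, and the reduction of superchannel data-processing to the state level are sound but go beyond what the paper records, which simply cites data-processing for channel divergences.
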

\begin{proof}
Lets $\Theta\in\disc(A\to B)$ and $\mN\in\cptp(A_0\to A_1)$. Then,
\ba
D_\Delta(\Theta_{A\to B}[\mN_{A}])&=D\left(\Theta_{A\to B}[\mN_A]\big\|\Delta_B\circ\Theta_{A\to B}[\mN_A]\right)\\
&=D\left(\Theta_{A\to B}[\mN_A]\big\|\Theta_{A\to B}\circ\Delta_A\left[\mN_A\right]\right)\\
&\leq D\left(\mN_A\big\|\Delta_A\left[\mN_A\right]\right)\\
&=D_\Delta(\mN_A)\;.
\ea
This completes the proof.
\end{proof}
For the case that $D(\rho\|\sigma)=\tr[\rho\log\rho]-\tr[\rho\log\sigma]$ is the relative entropy, we call $D_\Delta$ the dephasing relative entropy of coherence. 

\subsection{Operational Monotones}\label{operational_monotones}

Here, we discuss the monotones that are operationally meaningful for the resource theory of quantum coherence.
We will see that the monotones which are based on $D_{\max}$, like various types of log-robustness, play a major role in the calculation of coherence cost of channels.

The log-robustness of entanglement for states was introduced and investigated in \cite{D2009a, D2009b, BD2010, BD2011}.
It was shown that it is an entanglement monotone and its operational significance for the manipulation of entanglement was also discussed.
The log-robustness of coherence for states was similarly defined in \cite{BSF+2017} and it was shown that it is a measure of coherence.

The log-robustness of channels for a general resource theory was introduced and discussed in \cite{LW2019, GW2019, LY2019}.
It was shown that the log-robustness of channels satisfy necessary conditions for the resource measure of channels, i.e., it is both faithful and a monotone under left and right compositions\cite{LW2019}.

The log-robustness of coherence of channels is given by
\be
  LR_\mc(\mN_{A}) \eqdef\min_{\mE\in\mc(A_0\to A_1)}
                 D_{\max}\big(\mN_{A}\|\mE_{A}\big)
\ee
It can be computed with an SDP. To see why, note that
\ba\label{LR_as_SDP}
LR_\mc(\mN_{A})&=\log\min\Big\{t\geq 0\;:\;t\mE_A\geq\mN_A\;\;\,\;\;\Delta_A[\mE_A]=\mE_A\;\;,\;\;\mE\in\cptp(A_0\to A_1)\Big\}
\ea
Denoting by $\omega_A$ the Choi matrix of $t\mE_A$ we get that (recall that we are using $u$ to denote the maximally mixed state) 
\ba \label{lr_as_sdp}
LR_\mc(\mN_{A})&=\log\min\Big\{\frac{1}{|A_0|}\tr[\omega_A]\;:\;\omega_A\geq J^\mN_A\;\;\,\;\;\mD_A[\omega_A]=\omega_A\;\;,\;\;\omega_{A_0}=\tr[\omega_A]u_{A_0}\;\;,\;\;\omega_A\geq 0\Big\}
\ea
which is an SDP optimization problem. As such it has a dual given by (see appendix for details)
\ba\label{dual_lr}
LR_\mc(\mN_{A})&=\log\max\Big\{\tr[\eta_AJ^\mN_A]\;:\;\mD_A(\eta_A)=\mD_{A_0}\left(\eta_{A_0}\right)\otimes u_{A_1}\;\;\,\;\;\mD_{A_1}[\eta_{A_1}]=I_{A_1}\;\;,\;\;\eta_A\geq 0\Big\}
\ea
\begin{remark}
For the qubit case, we calculated the log-robustness of coherence of few channels.
For any classical channel, the log-robustness of coherence is equal to 0.
For the identity channel it is equal to 1.
For any replacement channel and depolarizing channel, its value is between 0 and 1.
If the replacement channel is the one that outputs the plus state ($\ket{+} = \frac{1}{\sqrt{n}}\sum_{i=0}^{n-1} \ket{i}$), the log-robustness is equal to 1.
Lastly, for any unitary channel, we found that the value of log-robustness of coherence is between 1 and 2.
\end{remark}

Next, we show the additivity of log-robustness of coherence of channels under tensor products.
\begin{lemma}\label{additivity_of_lr_LEMMA}
The log-robustness of coherence of a channel is additive under tensor products, i.e.,
\begin{equation}\label{additivity_of_lr}
	LR_{\mathfrak{C}}(\mathcal{N}_A \otimes \mathcal{M}_{A'})  = LR_{\mathfrak{C}}(\mathcal{N}_A) + LR_{\mathfrak{C}}(\mathcal{M}_{A'})
\end{equation}

\end{lemma}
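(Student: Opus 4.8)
The plan is to prove the two inequalities separately, mirroring the structure used earlier for the additivity of the channel max-relative entropy. For the ``$\le$'' direction I would work at the level of the primal. Let $\mE_A\in\mc(A_0\to A_1)$ and $\mF_{A'}\in\mc(A_0'\to A_1')$ be classical channels achieving $LR_\mc(\mN_A)$ and $LR_\mc(\mM_{A'})$, respectively. Since dephasing factorizes as $\mD_{AA'}=\mD_A\otimes\mD_{A'}$, the product $\mE_A\otimes\mF_{A'}$ is again classical, hence a feasible point in the minimization defining $LR_\mc(\mN_A\otimes\mM_{A'})$. Invoking the additivity of $D_{\max}$ for channels established earlier then gives
\be
LR_\mc(\mN_A\otimes\mM_{A'})\le D_{\max}(\mN_A\otimes\mM_{A'}\|\mE_A\otimes\mF_{A'})=D_{\max}(\mN_A\|\mE_A)+D_{\max}(\mM_{A'}\|\mF_{A'})=LR_\mc(\mN_A)+LR_\mc(\mM_{A'})\;.
\ee

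For the harder ``$\ge$'' direction I would pass to the dual SDP in~\eqref{dual_lr}. Let $\eta_A$ and $\eta_{A'}$ be optimal dual operators, so that $2^{LR_\mc(\mN_A)}=\tr[\eta_AJ^\mN_A]$ and $2^{LR_\mc(\mM_{A'})}=\tr[\eta_{A'}J^\mM_{A'}]$. The key step is to verify that the product $\eta_A\otimes\eta_{A'}$ is feasible for the dual problem associated with $\mN_A\otimes\mM_{A'}$. Positivity is immediate. Using $\mD_{AA'}=\mD_A\otimes\mD_{A'}$ together with the individual constraints $\mD_A(\eta_A)=\mD_{A_0}(\eta_{A_0})\otimes u_{A_1}$ and $\mD_{A'}(\eta_{A'})=\mD_{A_0'}(\eta_{A_0'})\otimes u_{A_1'}$, one checks (after reordering tensor factors, and using that the $A_0A_0'$ marginal of a product is the product of marginals) that $\mD_{AA'}(\eta_A\otimes\eta_{A'})=\mD_{A_0A_0'}\big((\eta_A\otimes\eta_{A'})_{A_0A_0'}\big)\otimes u_{A_1A_1'}$, and that $\mD_{A_1A_1'}$ applied to the output marginal yields $I_{A_1}\otimes I_{A_1'}=I_{A_1A_1'}$. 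Since the dual is a maximization, feasibility of $\eta_A\otimes\eta_{A'}$ gives
\be
2^{LR_\mc(\mN_A\otimes\mM_{A'})}\ge\tr\big[(\eta_A\otimes\eta_{A'})(J^\mN_A\otimes J^\mM_{A'})\big]=\tr[\eta_AJ^\mN_A]\,\tr[\eta_{A'}J^\mM_{A'}]=2^{LR_\mc(\mN_A)+LR_\mc(\mM_{A'})}\;,
\ee
and taking logarithms yields the reverse inequality. Combining the two bounds completes the proof.

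I expect the dual-feasibility verification to be the main obstacle, and indeed the reason the dual is needed at all. Superadditivity cannot be obtained directly from the primal, since the optimal classical channel for the joint system $\mN_A\otimes\mM_{A'}$ need not be a product of classical channels and could in principle exploit correlations across the two factors. The dual circumvents this: by exhibiting the explicit feasible point $\eta_A\otimes\eta_{A'}$ whose objective value factorizes, we obtain a clean lower bound without having to characterize the joint optimizer. The only genuine work is confirming that each of the three dual constraints is preserved under tensoring, which reduces to the multiplicativity of the dephasing maps $\mD$ and $I$ across the tensor product and the fact that marginals of a product operator are products of marginals.
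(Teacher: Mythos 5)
Your proposal is correct and follows essentially the same route as the paper: the subadditivity direction via feasibility of the product of optimal classical channels together with additivity of $D_{\max}$ for channels, and the superadditivity direction by exhibiting $\eta_A\otimes\eta_{A'}$ as a feasible point of the dual SDP in~\eqref{dual_lr}. The feasibility check you flag as the main work is exactly the step the paper relies on as well.
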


\begin{proof}

For the proof of the inequality $ LR_{\mathfrak{C}}(\mathcal{N}_A \otimes \mathcal{M}_{A'}) \leq LR_{\mathfrak{C}}(\mathcal{N}_A) + LR_{\mathfrak{C}}(\mathcal{M}_{A'})$, 
let
 $LR_{\mathfrak{C}}(\mathcal{N}_A) = D_{\max}(\mN_A || \mE_A)$
and
 $LR_{\mathfrak{C}}(\mathcal{M}_{A'}) = D_{\max}(\mM_{A'} || \mE_{A'})$.
Then, we have
\begin{align}
	LR_{\mathfrak{C}}(\mathcal{N}_A \otimes \mathcal{M}_{A'}) &\leq D_{\max}(\mathcal{N}_A \otimes \mathcal{M}_{A'} || \mathcal{E}_A \otimes \mathcal{E}_{A'})\\
	&= D_{\max}(\mN_A || \mE_A) + D_{\max}(\mM_{A'} || \mE_{A'})\\
	&= LR_{\mathfrak{C}}(\mathcal{N}_A) + LR_{\mathfrak{C}}(\mathcal{M}_{A'})
\end{align}

The first inequality follows trivially from the definition of log-robustness and the second equality follows from the additivity of $D_{\max}$.

To prove the converse, i.e., $ LR_{\mathfrak{C}}(\mathcal{N}_A \otimes \mathcal{M}_{A'}) \geq LR_{\mathfrak{C}}(\mathcal{N}_A) + LR_{\mathfrak{C}}(\mathcal{M}_{A'})$, we will use the dual of the log-robustness as given in Eq.(\ref{dual_lr}).
Let $\eta_A$ and $\eta_{A'}$ be the optimal matrices for the dual of $LR_{\mc}(\mN_A)$ and $LR_{\mc}(\mM_{A'})$, respectively.
We get
\begin{equation}
    \begin{split}
        & 2^{LR_{\mathfrak{C}}(\mathcal{N}_A)} = \frac{1}{|A_0|} \tr[\eta_A J^{\mathcal{N}_A}_A]\\
        & 2^{LR_{\mathfrak{C}}(\mathcal{M}_{A'})} = \frac{1}{|{A'}_0|} \tr[\eta_{A'}J^{\mathcal{M}_{A'}}_{A'}]
    \end{split}
\end{equation}

Since,  $LR_{\mc}(\mN_A \otimes \mM_{A'}) = \frac{1}{|A_0 A'_0|} \log \max \tr \big[\eta'_{AA'} \big(J^{\mN_A \otimes \mM_{A'}}_{AA'}\big)\big] $ where the maximum is over all $\eta'_{AA'} \geq 0$ satisfying
\be
\mD_{AA'}(\eta'_{AA'})=\mD_{A_0A_0'}\left(\eta'_{A_0A_0'}\right)\otimes u_{A_1A_1'}\;\; ,\;\;\mD_{A_1A_1'}[\eta'_{A_1A_1'}]=I_{A_1A_1'}\;.
\ee
and because $\eta_{AA'} = \eta_A \otimes \eta_{A'}$ satisfies the above conditions, we have
\begin{equation}\label{lr_tensor_prod}
\begin{split}
     2^{LR_{\mathfrak{C}}(\mathcal{N}_A \otimes \mathcal{M}_{A'})} &\geq 
     \frac{1}{|A_0 {A'}_0|}\tr\big[\eta_{A A'}\big(J^{\mathcal{N}_A \otimes \mathcal{M}_{A'}}_{A A'}\big)\big]\\
    &= 2^{LR_{\mathfrak{C}}(\mathcal{N}_A)}2^{LR_{\mathfrak{C}}(\mathcal{M}_{A'})}
\end{split}
\end{equation}
Hence, the above equation implies
\begin{equation}
    LR_{\mathfrak{C}}(\mathcal{N}_A \otimes \mathcal{M}_{A'}) \geq  LR_{\mathfrak{C}}(\mathcal{N}_A) + LR_{\mathfrak{C}}(\mathcal{M}_{A'})
\end{equation}
This establishes the additivity of the log-robustness of a quantum channel, i.e., $LR_{\mathfrak{C}}(\mathcal{N}_A \otimes \mathcal{M}_{A'}) = LR_{\mathfrak{C}}(\mathcal{N}_A) + LR_{\mathfrak{C}}(\mathcal{M}_{A'})$
\end{proof}

Another type of log-robustness, the dephasing logarithmic robustness, which will be used to find the exact cost under DISC, is defined by
\be \label{dephasing_lr}
LR_\Delta(\mN_A)\eqdef D_{\max}\big(\mN_A\big\|\Delta_A[\mN_A]\big)\quad\forall\;\mN\in\cptp(A_0\to A_1)\;.
\ee
We prove here that the dephasing log-robustness is also additive.
\begin{lemma}
Let $\mN\in\cptp(A_0\to A_1)$ and $\mM\in\cptp(B_0\to B_1)$ be two channels. Then,
\be
LR_\Delta\big(\mN_A\otimes\mM_B\big)=LR_\Delta(\mN_A)+LR_\Delta(\mM_B)\;.
\ee
\end{lemma}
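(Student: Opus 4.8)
The plan is to reduce the claim directly to the additivity of the channel max-relative entropy already established above, exploiting the fact that --- unlike the optimized robustness $LR_\mc$ --- the reference channel appearing in $LR_\Delta$ is \emph{fixed} to be $\Delta_A[\mN_A]$, and this reference factorizes across the tensor product. Concretely, writing $LR_\Delta(\mN_A\otimes\mM_B)=D_{\max}\big(\mN_A\otimes\mM_B\,\big\|\,\Delta_{AB}[\mN_A\otimes\mM_B]\big)$, I would first show that $\Delta_{AB}[\mN_A\otimes\mM_B]=\Delta_A[\mN_A]\otimes\Delta_B[\mM_B]$, and then invoke the additivity of $D_{\max}$ to split the right-hand side into the sum of the two single-system terms.

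First I would establish the factorization of the dephasing superchannel on product inputs. Since $\Delta_A[\cdot]=\mD_{A_1}\circ(\cdot)\circ\mD_{A_0}$, and the completely dephasing channels on a composite system obey $\mD_{A_0B_0}=\mD_{A_0}\otimes\mD_{B_0}$ and $\mD_{A_1B_1}=\mD_{A_1}\otimes\mD_{B_1}$ (the incoherent basis of $A_0B_0$, resp.\ $A_1B_1$, being the tensor product of the local ones), for a product map one obtains
\[
\Delta_{AB}[\mN_A\otimes\mM_B]=(\mD_{A_1}\otimes\mD_{B_1})\circ(\mN_A\otimes\mM_B)\circ(\mD_{A_0}\otimes\mD_{B_0})=\Delta_A[\mN_A]\otimes\Delta_B[\mM_B].
\]
Since $\Delta_A[\mN_A]$ and $\Delta_B[\mM_B]$ are themselves channels (being compositions of channels), they are legitimate CP reference maps for the definition of $D_{\max}$.

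With this factorization in hand, I would apply the additivity lemma for $D_{\max}$ proved earlier, setting $\mE_A=\Delta_A[\mN_A]$ and $\mF_B=\Delta_B[\mM_B]$:
\[
LR_\Delta(\mN_A\otimes\mM_B)=D_{\max}\big(\mN_A\otimes\mM_B\,\big\|\,\Delta_A[\mN_A]\otimes\Delta_B[\mM_B]\big)=D_{\max}\big(\mN_A\|\Delta_A[\mN_A]\big)+D_{\max}\big(\mM_B\|\Delta_B[\mM_B]\big),
\]
which equals $LR_\Delta(\mN_A)+LR_\Delta(\mM_B)$. Note that this yields the equality in a single step, with no need to split into a separate $\leq$ and $\geq$.

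There is essentially no hard analytic step here: in contrast to the proof of additivity of $LR_\mc$, where the minimization over \emph{all} classical reference channels forces one to pass to the SDP dual in order to exclude a correlated optimal reference on the joint system, the dephasing robustness pins the reference down to the product channel $\Delta_A[\mN_A]\otimes\Delta_B[\mM_B]$ from the outset. The only point that requires care is the factorization $\Delta_{AB}[\mN_A\otimes\mM_B]=\Delta_A[\mN_A]\otimes\Delta_B[\mM_B]$ on product inputs, which follows from the product structure of the incoherent basis; everything else is inherited from the already-proven additivity of $D_{\max}$.
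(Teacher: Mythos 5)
Your proposal is correct and follows exactly the same route as the paper's proof: factorize $\Delta_{AB}[\mN_A\otimes\mM_B]=\Delta_A[\mN_A]\otimes\Delta_B[\mM_B]$ using the product structure of the dephasing superchannel, then invoke the additivity of $D_{\max}$ for channels. Your write-up is slightly more explicit about why the factorization holds, but the argument is identical.
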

\begin{proof}
\ba
LR_\Delta\big(\mN_A\otimes\mM_B\big)&=D_{\max}\big(\mN_A\otimes \mM_B\big\|\Delta_{AB}\big[\mN_A\otimes\mM_B\big]\big)\\
&=D_{\max}\big(\mN_A\otimes \mM_B\big\|\Delta_{A}\big[\mN_A\big]\otimes\Delta_{B}\big[\mM_B\big]\big)\\
&=D_{\max}\big(\mN_A\big\|\Delta_{A}\big[\mN_A\big]\big)+D_{\max}\big(\mM_B\big\|\Delta_{B}\big[\mM_B\big]\big)\\
&=LR_\Delta(\mN_A)+LR_\Delta(\mM_B)\;,
\ea
where the third equality follows from the additivity of $D_{\max}$ for channels.
\end{proof}

We also define smoothed logarithmic robustness and asymptotic logarithmic robustness.
From \cite{GW2019}, we know that smoothing maintains monotonicity.
The smoothed logarithmic robustness is defined by
\be
LR^{\epsilon}_\mc(\mN_A) 
\eqdef \min_{\mN'\in B_\epsilon(\mN_A)}LR_\mc(\mN_A')
\ee
where
\be
B_\epsilon(\mN_A)=\Big\{\mN'\in\cptp(A_0\to A_1)\;:\;\frac{1}{2}\|\mN'_{A}-\mN_{A}\|_{\diamond}\leq \epsilon\Big\}\;.
\ee
and the asymptotic logarithmic robustness is defined as
\be
LR_{\mc}^{\infty}(\mN_{A})=\lim_{\epsilon\to 0^+}\liminf_{n\to\infty}\frac{1}{n}LR^{\epsilon}_{\mc}(\mN_{A}^{\otimes n})
\ee

Similarly we define the smoothed dephasing logarithmic robustness and asymptotic dephasing logarithmic robustness.
The smoothed dephasing logarithmic robustness is defined by
\be
LR^{\epsilon}_\Delta(\mN_A) 
\eqdef \min_{\mN'\in B_\epsilon(\mN_A)}LR_\Delta(\mN_A')
\ee
and the asymptotic dephasing logarithmic robustness as
\be
LR_{\Delta}^{\infty}(\mN_{A})=\lim_{\epsilon\to 0^+}\lim_{n\to\infty}\frac{1}{n}LR^{\epsilon}_{\Delta}(\mN_{A}^{\otimes n})
\ee
Now we define the log-robustness with ``liberal'' smoothing \cite{GW2019} which we find to have an operational meaning. 
Let
\be
LR^{\epsilon,\varphi}_\mc(\mN_A) \eqdef\min_{\mN'\in B_\epsilon^{\varphi}(\mN_A)}
                 LR_\mc\big(\mN'_{A}\big)\;.
\ee
where
\be 
B_\epsilon^{\varphi}(\mN_A)\eqdef\Big\{\mN'\in\text{CP}(A_0\to A_1)\;:\;\|\mN'_{A}(\varphi_{RA_0})-\mN_{A}(\varphi_{RA_0})\|_{1}\leq \epsilon\Big\}.
\ee
and consider its ``liberal smoothing"
\be\label{liberal}
  LR^{\epsilon}_\mc(\mN_A) 
\eqdef \max_{\varphi\in\mD(RA_0)}LR^{\epsilon,\varphi}_\mc(\mN_A).
\ee
Define also
\be
  LR^{\epsilon,n}_\mc(\mN_A) 
\eqdef\frac{1}{n} \max_{\varphi\in\mD(RA_0)}LR^{\epsilon,\varphi^{\otimes n}}_\mc(\mN^{\otimes n}_A)\;,
\ee
and
\be
LR^{(\infty)}_\mc(\mN_A)\eqdef\lim_{\epsilon\to 0^+}\liminf_{n\to\infty}LR^{\epsilon,n}_\mc(\mN_A)\;.
\ee
In \cite{GW2019}, a new type of regularized relative entropy of a resource given by
\be
D^{(\infty)}_\mc\left(\mN_{A}\right)\eqdef\lim_{n\to\infty}\frac{1}{n}\sup_{\varphi\in\md(RA_0)}\min_{\mE\in\mc(A^n_0\to A^n_1)}
D\left(\mN^{\otimes n}_{A_0\to A_1}\left(\varphi_{RA_0}^{\otimes n}\right)
\big\|\mE_{A^n_0\to A^n_1}\left(\varphi_{RA_0}^{\otimes n}\right)\right)
\ee
The quantity $D^{(\infty)}_\mc\left(\mN_{A}\right)$ behaves monotonically under completely RNG superchannels and satisfies the following AEP
\be\label{aep}
LR^{(\infty)}_\mc(\mN_A) =D^{(\infty)}_\mc\left(\mN_{A}\right)\;.
\ee

\section{Interconversions}\label{interconversion}

We show that for the dynamical resource theory of coherence, the interconversion distance $d_{\mf}(\mN_A \to \mM_B)$ can be computed with an SDP.
We then calculate the exact, approximate and ``liberal''  coherence cost of a channel and show that the ``liberal'' cost of coherence is equal to a variant of the regularized relative entropy.

\subsection{The conversion distance of coherence}\label{interconversion_as_SDP}

The conversion distance from a channel $\mN_{A}\in\cptp(A_0\to A_1)$ to a channel $\mM_{B}\in\cptp(B_0\to B_1)$ is defined as (with $\mf$ standing for either one of the four operations $\misc$, $\disc$, $\isc$, and $\sisc$)
\be\label{d_N_to_M}
d_{\mf}\left(\mN_{A}\to\mM_B\right)\eqdef\min_{\Theta\in\mf(A\to B)}\frac{1}{2}\left\|\Theta_{A\to B}\left[\mN_A\right]-\mM_B\right\|_\diamond\;.
\ee
That is, if the conversion distance above is very small then $\mN_{A}$ can be used to simulate a channel that is very close to $\mM_B$, using free superchannels. We now show that for $\mf=\misc$ or $\mf=\disc$, this conversion distance can be computed with a semi-definite program (SDP).

\begin{theorem}\label{misc_disc_sdp}
For the case $\mf=\misc$,  let $\{X^i_{AB}\}_{i=1}^n$ be the basis of the subspace $\mk_\mf$ as defined in \eqref{k_MISC} where $n\equiv|AB|(|B|-1)$ and let $\alpha_{AB}$  denote the Choi matrix of the superchannel $\Theta$.
Then, $d_{\mf}\left(\mN_{A}\to\mM_B\right)$, can be expressed as the following SDP
\be
d_{\mf}\left(\mN_{A}\to\mM_B\right) = \min \lambda
\ee
where the minimum is subject to 
\begin{align}
&\lambda I_{B_0} \geq \omega_{B_0}  \; , \; \omega_B \geq 0 \; , \; \alpha_{AB} \geq 0  \; , \; \omega_B \geq \tr_A\Big[\alpha_{AB}\left( (J^{\mN}_A)^T \otimes I_B \right)\Big] - J^{\mM}_B \; , \\
 						 &\alpha_{AB_0} = \alpha_{A_0 B_0} \otimes u_{A_1} \; , \; \alpha_{A_1 B_0} = I_{A_1 B_0}\; ,\\
 						 & \tr[\alpha_{AB}X_{AB}^i] = 0 \; \forall \; i = 1\, , \, \ldots \, , n
\end{align}

For the case $\mf=\disc$,  let $\{Y_{AB}^i\}_{i=1}^m$ be the basis of the subspace $\mk_\mf$ as defined in \eqref{k_DISC} where $m\equiv |AB|(|A| + |B| -1)$ and $\alpha_{AB}$  denote the Choi matrix of the superchannel $\Theta$.
Then, $d_{\mf}\left(\mN_{A}\to\mM_B\right)$, can be expressed as the following SDP
\be
d_{\mf}\left(\mN_{A}\to\mM_B\right) = \min \lambda
\ee
where the minimum is subject to 
\begin{align}
&\lambda I_{B_0} \geq \omega_{B_0}  \; , \; \omega_B \geq 0 \; , \; \alpha_{AB} \geq 0  \; , \; \omega_B \geq \tr_A\Big[\alpha_{AB}\left( (J^{\mN}_A)^T \otimes I_B \right)\Big] - J^{\mM}_B \; , \\
 						 &\alpha_{AB_0} = \alpha_{A_0 B_0} \otimes u_{A_1} \; , \; \alpha_{A_1 B_0} = I_{A_1 B_0}\; ,\\
 						 & \tr[\alpha_{AB}Y_{AB}^i] = 0 \; \forall \; i = 1\, , \, \ldots \, , m
\end{align}
\end{theorem}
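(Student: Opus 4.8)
The plan is to assemble three ingredients that are already available above: the parametrization of a superchannel by its Choi matrix, the action of a superchannel on Choi matrices, and the membership conditions for $\misc$ and $\disc$ rewritten as linear constraints. I would treat the $\misc$ case in detail; the $\disc$ case is identical up to replacing the subspace $\mk_\misc$ by $\mk_\disc$.

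First I would fix the optimization variable. A superchannel $\Theta\in\ms(A\to B)$ is in bijection with its Choi matrix $\alpha_{AB}=\J^\Theta_{AB}$, and by the equivalent characterization of superchannels this matrix is exactly constrained by $\alpha_{AB}\ge 0$, $\alpha_{AB_0}=\alpha_{A_0B_0}\otimes u_{A_1}$, and $\alpha_{A_1B_0}=I_{A_1B_0}$. The additional requirement $\Theta\in\misc(A\to B)$ contributes, via \eqref{xa}--\eqref{k_MISC}, the linear conditions $\tr[\alpha_{AB}X^i_{AB}]=0$ for a basis $\{X^i_{AB}\}_{i=1}^{n}$ of $\mk_\misc$. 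Hence the feasible set of $\alpha_{AB}$ is an affine section of the positive-semidefinite cone, i.e. it is SDP-representable.

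Next I would linearize the objective in $\alpha_{AB}$. Using the relation $J^{\Theta[\mN_A]}_{B}=\tr_{A}[\J^\Theta_{AB}((J^{\mN}_A)^{T}\otimes I_B)]$ from the preliminaries, the Choi matrix of the output channel $\Theta_{A\to B}[\mN_A]$ is linear in $\alpha_{AB}$. Because both $\Theta_{A\to B}[\mN_A]$ and $\mM_B$ are genuine channels, the Hermiticity-preserving difference $\Phi=\Theta_{A\to B}[\mN_A]-\mM_B$ has vanishing output marginal, $\tr_{B_1}J^{\Phi}_B=I_{B_0}-I_{B_0}=0$. For such a difference of channels the diamond-norm distance admits the standard semidefinite formulation
\be
\tfrac{1}{2}\big\|\Theta_{A\to B}[\mN_A]-\mM_B\big\|_\diamond=\min\Big\{\,\big\|\tr_{B_1}\omega_B\big\|_\infty\;:\;\omega_B\ge J^{\Theta[\mN_A]}_B-J^{\mM}_B,\ \omega_B\ge 0\,\Big\}\;,
\ee
where the factor $\tfrac12$ (rather than $1$) appears precisely because $\tr_{B_1}J^\Phi_B=0$, exactly as in the state case $\min\{\tr Z:Z\ge\rho-\sigma,\ Z\ge 0\}=\tfrac12\|\rho-\sigma\|_1$. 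Writing $\omega_{B_0}=\tr_{B_1}\omega_B$ and introducing the epigraph variable $\lambda$ with $\lambda I_{B_0}\ge\omega_{B_0}$ converts $\|\tr_{B_1}\omega_B\|_\infty$ into the linear objective $\min\lambda$, and substituting $J^{\Theta[\mN_A]}_B=\tr_A[\alpha_{AB}((J^{\mN}_A)^T\otimes I_B)]$ reproduces the stated dominance constraint.

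Finally I would merge the two layers of minimization. Since $d_\misc(\mN_A\to\mM_B)=\min_{\Theta}\tfrac12\|\Theta_{A\to B}[\mN_A]-\mM_B\|_\diamond$ is a minimum over the SDP-feasible $\alpha_{AB}$ of a quantity that is itself the minimum of the inner program in $(\omega_B,\lambda)$, and since $\alpha_{AB}$ enters the inner constraint only linearly, the joint problem in $(\alpha_{AB},\omega_B,\lambda)$ is a single semidefinite program with linear objective and linear matrix (in)equality constraints — exactly the program in the statement. The $\disc$ case follows verbatim, replacing $\{X^i_{AB}\}_{i=1}^{n}$ by the basis $\{Y^i_{AB}\}_{i=1}^{m}$ of $\mk_\disc$ from \eqref{ya}--\eqref{k_DISC}. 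The one step requiring care is the diamond-norm SDP together with the correct factor $\tfrac12$, which hinges on both compared maps being trace preserving so that $\tr_{B_1}J^\Phi_B=0$; everything else is a direct transcription of the linear relations established earlier.
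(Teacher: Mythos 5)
Your proposal is correct and follows essentially the same route as the paper's appendix proof: both parametrize $\Theta$ by its Choi matrix $\alpha_{AB}$ subject to the superchannel marginal conditions and the linear $\mk_\mf$-orthogonality constraints, invoke the SDP form of the diamond norm from \cite{W2009} for the channel difference, and merge the inner and outer minimizations into one program. The only differences are cosmetic — you justify the factor $\tfrac{1}{2}$ via trace preservation of the two compared maps where the paper simply cites the reference, and the paper additionally derives the dual program, which is not needed for the statement itself.
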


\subsection{Exact Asymptotic Coherence Cost}\label{exact_asymptotic_coherence_cost}

The exact single-shot coherence cost is defined for $\mN_A\in\cptp(A_0\to A_1)$ as
\be
C^{0}_{\mf}(\mN_{A})\eqdef\min\left\{\log |R_1|\;:\;\exists\Theta\in\mf(R_1\to A)\quad\text{s.t.}\quad\Theta_{R_1\to A}[\phi^{+}_{R_1}]=\mN_A\right\}\;,
\ee
where we consider the two cases of $\mf=\misc$ and $\mf=\disc$.
And the exact coherence cost is given by
\be
C^{\exact}_{\mf}(\mN_{A})=\lim_{n\to\infty}\frac{1}{n}C^{0}_{\mf}\left(\mN_{A}^{\otimes n}\right)
\ee
We now compute this coherence cost for both MISC and DISC.

\subsubsection{Exact cost under MISC}

\begin{theorem}
For $\mf=\misc$ and $\mN\in\cptp(A_0\to A_1)$,
\be
C^{\exact}_{\mf}(\mN_{A})=LR_\mc(\mN_{A})
\ee
\end{theorem}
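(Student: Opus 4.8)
The plan is to prove the identity by establishing the two inequalities $C^{\exact}_{\misc}(\mN_A)\ge LR_\mc(\mN_A)$ and $C^{\exact}_{\misc}(\mN_A)\le LR_\mc(\mN_A)$ separately, relying on three facts: (i) $LR_\mc$ is monotone under $\misc$, which follows from data-processing of $D_{\max}$ under superchannels together with the defining property that a $\misc$ superchannel maps $\mc$ into $\mc$; (ii) $LR_\mc$ is additive, which is Lemma~\ref{additivity_of_lr_LEMMA}; and (iii) $LR_\mc(\phi^+_{R_1})=\log|R_1|$. For (iii) I note that with $R_0$ trivial the classical channels in $\mc(R_0\to R_1)$ are exactly the incoherent states on $R_1$, so $LR_\mc(\phi^+_{R_1})=\min_{\sigma}D_{\max}(\phi^+_{R_1}\|\sigma)$ over diagonal $\sigma$; since $D_{\max}(\phi^+_{R_1}\|\sigma)=\log\bra{+}\sigma^{-1}\ket{+}$ for the pure state $\phi^+_{R_1}$, and $\bra{+}\sigma^{-1}\ket{+}$ is minimized at $\sigma=u_{R_1}$ with value $|R_1|$, this gives $\log|R_1|$.

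\textbf{Converse.} For any $n$ and any $\Theta\in\misc(R_1\to A^{n})$ achieving $\Theta[\phi^+_{R_1}]=\mN_A^{\otimes n}$, monotonicity and additivity give
\begin{equation}
\log|R_1|=LR_\mc(\phi^+_{R_1})\ge LR_\mc\big(\Theta[\phi^+_{R_1}]\big)=LR_\mc(\mN_A^{\otimes n})=n\,LR_\mc(\mN_A)\;.
\end{equation}
Minimizing over valid $\Theta$ yields $C^{0}_{\misc}(\mN_A^{\otimes n})\ge n\,LR_\mc(\mN_A)$; dividing by $n$ and taking the limit gives $C^{\exact}_{\misc}(\mN_A)\ge LR_\mc(\mN_A)$.

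\textbf{Achievability.} The technical heart is a single-shot dilution protocol (the case $\mN_A\in\mc$ being trivial). Set $t=2^{LR_\mc(\mN_A)}$ and $d=\lceil t\rceil$, and pick an optimal classical $\mE_A\in\mc(A_0\to A_1)$, so $d\,\mE_A\ge t\,\mE_A\ge\mN_A$ as CP maps. Then $\mE'_A\eqdef\frac{1}{d-1}\big(d\,\mE_A-\mN_A\big)$ is a genuine channel in $\cptp(A_0\to A_1)$, and
\begin{equation}
\mE_A=\tfrac1d\,\mN_A+\tfrac{d-1}{d}\,\mE'_A\;.
\end{equation}
I would build $\Theta\in\ms(R_1\to A)$ with identity pre-processing on $E=A_0$ and post-processing $\Lambda_{\post}\in\cptp(R_1A_0\to A_1)$ that projectively measures $R_1$ with $\{\Pi,\Pi^{\perp}\}$, where $\Pi\eqdef\phi^+_{R_1}=\op{+}{+}_{R_1}$ and $\Pi^{\perp}=I_{R_1}-\Pi$, applying $\mN_A$ on the retained $A_0$ for outcome $\Pi$ and $\mE'_A$ otherwise:
\begin{equation}
\Lambda_{\post}(\xi_{R_1A_0})=\mN_A\Big(\tr_{R_1}\big[(\Pi\otimes I)\,\xi\,(\Pi\otimes I)\big]\Big)+\mE'_A\Big(\tr_{R_1}\big[(\Pi^{\perp}\otimes I)\,\xi\,(\Pi^{\perp}\otimes I)\big]\Big)\;.
\end{equation}
Feeding $\phi^+_{R_1}$ gives $\Theta[\phi^+_{R_1}]=\mN_A$, since $\Pi\,\phi^+_{R_1}\,\Pi=\phi^+_{R_1}$ and $\Pi^{\perp}\phi^+_{R_1}=0$. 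Feeding any incoherent $\delta_{R_1}$ gives $\bra{+}\delta_{R_1}\ket{+}=1/d$, whence $\Pi\delta_{R_1}\Pi=\tfrac1d\Pi$ and $\Theta[\delta_{R_1}]=\tfrac1d\mN_A+\tfrac{d-1}{d}\mE'_A=\mE_A\in\mc(A_0\to A_1)$; thus $\Theta\in\misc(R_1\to A)$. This establishes $C^{0}_{\misc}(\mN_A)\le\log d=\log\lceil 2^{LR_\mc(\mN_A)}\rceil$.

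Finally I would regularize: applying the single-shot bound to $\mN_A^{\otimes n}$ and invoking additivity of $LR_\mc$,
\begin{equation}
\tfrac1n\,C^{0}_{\misc}(\mN_A^{\otimes n})\le\tfrac1n\log\big\lceil 2^{\,n\,LR_\mc(\mN_A)}\big\rceil\;,
\end{equation}
whose right-hand side converges to $LR_\mc(\mN_A)$ as $n\to\infty$, giving $C^{\exact}_{\misc}(\mN_A)\le LR_\mc(\mN_A)$ and hence equality. I expect the main obstacle to be verifying that the constructed $\Theta$ is genuinely free, i.e.\ that \emph{every} incoherent input (not merely $u_{R_1}$) is sent to the \emph{same} classical channel $\mE_A$; this hinges on the basis-independent identity $\bra{+}\delta_{R_1}\ket{+}=1/d$ for all incoherent $\delta_{R_1}$. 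The converse and the regularization step are then routine.
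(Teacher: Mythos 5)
Your proof is correct and follows essentially the same route as the paper: the lower bound comes from monotonicity of $LR_\mc$ under $\misc$ (data processing plus the free-preserving property) evaluated on $\phi^+_{R_1}$, the upper bound uses exactly the paper's dilution supermap (your $\rho\mapsto\tr[\phi^+_{R_1}\rho]\,\mN_A+(1-\tr[\phi^+_{R_1}\rho])\,\mE'_A$ is algebraically identical to the paper's $\Omega_{R_1\to A}$ after substituting $\mE_A=\tfrac1d\mN_A+\tfrac{d-1}{d}\mE'_A$), and the regularization rests on the same additivity lemma. The only difference is presentational: you verify freeness via an explicit measure-and-prepare realization and compute $LR_\mc(\phi^+_{R_1})=\log|R_1|$ exactly, whereas the paper argues directly with a chain of $D_{\max}$ inequalities and checks positivity of $m\mE_A-\mN_A$.
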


\begin{proof}
We first prove that
\be\label{exact_cost_bw_lr}
 LR_\mc(\mN_A) \leq C^{0}_{\mf}(\mN_A) \leq  LR_\mc(\mN_A) +1
\ee
and then use the additivity of $LR_\mc(\mN_A)$. 

For the proof of $ LR_\mc(\mN) \leq C^{0}_{\mf}(\mN) $, let $\Theta \in \misc(R_1 \to A)$ be a optimal superchannel satisfying $\Theta_{R_1 \to A}[\phi^+_{R_1}] = \mN_A$ such that $ C^{0}_{\misc}(\mN_A) = \log_2 |R_1|$.
Therefore,

\begin{align}
	LR_\mc(\mN_A) &= D_{\max}(\mN_A \big\|\mE_A) \\
				  &= D_{\max}(\Theta_{R_1 \to A}[\phi^+_{R_1}] \big\| \mE_A) \\
				  &\leq D_{\max}(\Theta_{R_1 \to A}[\phi^+_{R_1}]  \big\| \Theta_{R_1 \to A}[\mD(\phi^+_{R_1})]) \\
				&\leq D_{\max}(\phi^+_{R_1}  \big\|  \mD(\phi^+_{R_1}) \;) \\
				&= \log_2 |R_1| \\
				&= C^0_{\mf}(\mN_A)
\end{align}
To prove $C^{0}_{\mf}(\mN_A) \leq  LR_\mc(\mN_A) +1$, first let 
\begin{equation}
	LR_\mc(\mN_A) = D_{\max}(\mN_A \; \| \; \mE_A) = \log_2 t
\end{equation}
 for some optimal $t$ satisfying  $t\mE_A \geq \mN_A$.
 Also, let $m = \ceil{t} $, so that $ m \mE_A \geq \mN_A$ still holds.
 Let $R_1$ be a static system such that $|R_1| = m$.
 We now define the following supermap. For any state $\rho_{R_1} \in \md(R_1)$
 \begin{align}
 	\Omega_{R_1 \to A}[\rho_{R_1}] \eqdef \frac{m}{m-1}\Big( \tr[\phi^+_{R_1} \rho_{R_1}] - \frac{1}{m} 											\Big) \mN_A \; + \; \frac{m}{m-1}\Big( 1 - \tr[\phi^+_{R_1} \rho_{R_1}]  \Big) \mE_A 								 
 \end{align}
Note that the supermap $\Omega_{R_1 \to A} \in \mf(R_1 \to A)$ as it can be expressed as
\be
 \Omega_{R_1 \to A}[\rho_{R_1}] \eqdef \tr[\phi^+_{R_1} \rho_{R_1}]\mN_A \; + \; \frac{1}{m-1}\Big( 1 - \tr[\phi^+_{R_1} \rho_{R_1}]  \Big)(m \mE_A - \mN_A) 	
\ee 
where $m \mE_A - \mN_A \geq 0$.
Also observe that $\Omega_{R_1 \to A}(\phi^+_{R_1}) = \mN_A$.
Hence, such a superchannel implies that
\be
	C^{0}_{\mf}(\mN_A) \; = \; \log_2 m \; = \; \log_2 \ceil{t} \; \leq \; \log_2 t + 1 \; = \; LR_\mc(\mN_A) + 1
\ee
This completes the proof of $LR_\mc(\mN_A) \leq C^{0}_{\mf}(\mN_A) \leq  LR_\mc(\mN_A) +1$.

Therefore, using the additivity of $LR_\mc(\mN_A)$, we can conclude
\be
	C^{\exact}_{\mf}(\mN_A) = LR_\mc(\mN_A)
\ee
\end{proof}

\subsubsection{Exact cost under DISC}

The dephasing logarithmic robustness is given by \eqref{dephasing_lr}
\be
LR_\Delta(\mN_A)\eqdef D_{\max}\big(\mN_A\big\|\Delta_A[\mN_A]\big)\quad\forall\;\mN\in\cptp(A_0\to A_1)\;.
\ee
By definition we have $LR_\mc(\mN_A)\leq LR_\Delta(\mN_A)$. While the logarithmic robustness behaves monotonically under any superchannel in MISC, the dephasing logarithmic robustness is in general not monotonic under MISC. Instead, it is monotonic under DISC.

\begin{lemma}
For any $\mN\in\cptp(A_0\to A_1)$ and $\Theta\in\disc(A\to B)$ we have
\be
LR_\Delta\big(\Theta_{A\to B}[\mN_A]\big)\leq LR_\Delta(\mN_A)\;.
\ee
\end{lemma}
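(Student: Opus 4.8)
The plan is to recognize that $LR_\Delta$ is nothing but the function $D_\Delta$ from~\eqref{dephasing_lr} with the channel divergence taken to be $D=D_{\max}$, so that this lemma is the special case $D=D_{\max}$ of the monotonicity result for $D_\Delta$ under $\disc$ proved earlier. The entire argument rests on two facts: the defining relation of $\disc$ in~\eqref{defdisc}, and the data-processing inequality for the channel max-relative entropy under superchannels. The latter is the only piece not already in hand, so I would establish it first.

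First I would unfold the definition and write $LR_\Delta\big(\Theta_{A\to B}[\mN_A]\big)=D_{\max}\big(\Theta_{A\to B}[\mN_A]\,\big\|\,\Delta_B[\Theta_{A\to B}[\mN_A]]\big)$. Then, using the $\disc$ condition $\Delta_B\circ\Theta_{A\to B}=\Theta_{A\to B}\circ\Delta_A$ from~\eqref{defdisc}, I would replace the second argument to obtain $D_{\max}\big(\Theta_{A\to B}[\mN_A]\,\big\|\,\Theta_{A\to B}[\Delta_A[\mN_A]]\big)$. At this point both entries are images of the pair $\big(\mN_A,\Delta_A[\mN_A]\big)$ under the single superchannel $\Theta$, which sets up a direct appeal to data processing.

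For the data-processing step I would argue from the $\cp$-map definition in~\eqref{dmax_channels}: if $t$ is feasible for $D_{\max}(\mN_A\|\Delta_A[\mN_A])$, meaning $t\,\Delta_A[\mN_A]-\mN_A$ is a $\cp$ map, then applying the superchannel $\Theta_{A\to B}$ (which is linear and maps $\cp$ maps to $\cp$ maps) gives that $t\,\Theta_{A\to B}[\Delta_A[\mN_A]]-\Theta_{A\to B}[\mN_A]$ is also $\cp$, i.e. $t$ is feasible for $D_{\max}\big(\Theta_{A\to B}[\mN_A]\,\big\|\,\Theta_{A\to B}[\Delta_A[\mN_A]]\big)$. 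Since the feasible set only grows, the minimizing $t$ cannot increase, yielding $D_{\max}\big(\Theta_{A\to B}[\mN_A]\,\big\|\,\Theta_{A\to B}[\Delta_A[\mN_A]]\big)\leq D_{\max}\big(\mN_A\,\big\|\,\Delta_A[\mN_A]\big)=LR_\Delta(\mN_A)$, which is the claim. I do not expect a serious obstacle here; the only point requiring care is invoking the defining property of a superchannel correctly, namely that it preserves complete positivity (and hence the $\cp$-map ordering used to define $D_{\max}$) rather than merely positivity, so that the operator inequality is genuinely transported through $\Theta$.
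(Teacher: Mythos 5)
Your proof is correct and follows essentially the same route as the paper: unfold the definition, use the $\disc$ commutation relation $\Delta_B\circ\Theta=\Theta\circ\Delta_A$, and conclude by data processing for $D_{\max}$. The only difference is that the paper simply cites the data-processing inequality for the channel divergence $D_{\max}$, whereas you supply a short self-contained justification via the feasibility argument (a superchannel preserves complete positivity, so any feasible $t$ for the input pair remains feasible for the output pair), which is a valid and welcome addition.
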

\begin{proof}
\ba
LR_\Delta\big(\Theta_{A\to B}[\mN_A]\big)&=D_{\max}\big(\Theta_{A\to B}[\mN_A]\big\|\Delta_A\circ\Theta_{A\to B}[\mN_A]\big)\\
&=D_{\max}\big(\Theta_{A\to B}[\mN_A]\big\|\Theta_{A\to B}\circ\Delta_A[\mN_A]\big)\\
&\leq D_{\max}\big(\mN_A\big\|\Delta_A[\mN_A]\big)\\
&=LR_\Delta(\mN_A)\;,
\ea
where the second equality follows from the commutativity of $\Theta$ and $\Delta$, and the inequality follows from the data processing inequality of the channel divergence $D_{\max}$~\cite{G2019}.
\end{proof}

\begin{theorem}
For $\mf=\disc$, and $\mN\in\cptp(A_0\to A_1)$
\be
C^{\exact}_{\mf}(\mN_{A})=LR_\Delta(\mN_{A})
\ee
\end{theorem}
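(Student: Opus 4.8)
The plan is to reproduce verbatim the structure of the MISC cost theorem, replacing $LR_\mc$ by $LR_\Delta$ throughout. That is, I would first establish the single-shot sandwich
\be
LR_\Delta(\mN_A)\le C^{0}_{\disc}(\mN_A)\le LR_\Delta(\mN_A)+1\;,
\ee
and then regularize using the additivity of $LR_\Delta$ proven just above. The two inequalities are handled separately, and the two new inputs relative to the MISC case are the monotonicity lemma for $LR_\Delta$ under DISC (the Lemma immediately preceding this theorem) and the fact that in the dephasing log-robustness the reference channel is forced to be $\Delta_A[\mN_A]$.

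For the lower bound, I would take an optimal $\Theta\in\disc(R\to A)$ with $\Theta_{R_1\to A}[\phi^+_{R_1}]=\mN_A$ realizing $C^{0}_{\disc}(\mN_A)=\log_2|R_1|$. Since $LR_\Delta$ is monotone under DISC, $LR_\Delta(\mN_A)=LR_\Delta\big(\Theta_{R_1\to A}[\phi^+_{R_1}]\big)\le LR_\Delta(\phi^+_{R_1})$, so it remains to evaluate $LR_\Delta$ on the (normalized) maximally coherent state. Because $\mD_{R_1}(\phi^+_{R_1})=u_{R_1}$ and $\phi^+_{R_1}$ is a rank-one projector, $LR_\Delta(\phi^+_{R_1})=D_{\max}(\phi^+_{R_1}\|u_{R_1})=\log_2|R_1|$, which closes this direction.

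For the upper bound I would write $LR_\Delta(\mN_A)=D_{\max}(\mN_A\|\Delta_A[\mN_A])=\log_2 t$ with $t\,\Delta_A[\mN_A]\ge\mN_A$, set $m=\ceil{t}$ and $|R_1|=m$, and define, in analogy with the MISC construction but with reference channel $\Delta_A[\mN_A]$,
\be
\Omega_{R_1\to A}[\rho_{R_1}]\eqdef\tr[\phi^+_{R_1}\rho_{R_1}]\,\mN_A+\frac{1}{m-1}\big(1-\tr[\phi^+_{R_1}\rho_{R_1}]\big)\big(m\,\Delta_A[\mN_A]-\mN_A\big)\;.
\ee
Positivity of $m\,\Delta_A[\mN_A]-\mN_A$ follows from $m\ge t$, so $\Omega$ is a valid superchannel, and $\Omega_{R_1\to A}[\phi^+_{R_1}]=\mN_A$ since $\tr[\phi^+_{R_1}\phi^+_{R_1}]=1$. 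The main obstacle — indeed the only genuinely new computation — is verifying $\Omega\in\disc(R\to A)$, i.e. $\Delta_A\circ\Omega_{R_1\to A}=\Omega_{R_1\to A}\circ\Delta_R$, where $\Delta_R$ acts on state-channels as $\mD_{R_1}$. I would check this by direct expansion: using idempotence of $\Delta_A$ the left-hand side collapses to the constant channel $\Delta_A[\mN_A]$ independently of $\rho_{R_1}$, while on the right-hand side $\tr[\phi^+_{R_1}\mD_{R_1}(\rho_{R_1})]=\tr[u_{R_1}\rho_{R_1}]=1/m$ for every state, and substituting this value makes the right-hand side collapse to $\Delta_A[\mN_A]$ as well. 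This is exactly where the choice $\Delta_A[\mN_A]$ is essential, since it guarantees the dephased reference is a fixed point of $\Delta_A$. Hence $C^{0}_{\disc}(\mN_A)\le\log_2 m=\log_2\ceil{t}\le LR_\Delta(\mN_A)+1$.

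Finally, applying the sandwich to $\mN_A^{\otimes n}$, dividing by $n$, using additivity $LR_\Delta(\mN_A^{\otimes n})=n\,LR_\Delta(\mN_A)$, and letting $n\to\infty$ yields $C^{\exact}_{\disc}(\mN_A)=LR_\Delta(\mN_A)$.
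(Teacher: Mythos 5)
Your proposal is correct and follows essentially the same route as the paper: the same single-shot sandwich $LR_\Delta(\mN_A)\le C^{0}_{\disc}(\mN_A)\le LR_\Delta(\mN_A)+1$, the same construction of $\Omega_{R_1\to A}$ with reference channel $\Delta_A[\mN_A]$ and the same verification that it lies in $\disc$, followed by regularization via additivity of $LR_\Delta$. The only cosmetic difference is that for the lower bound you invoke the monotonicity lemma directly and then evaluate $LR_\Delta(\phi^+_{R_1})$, whereas the paper unfolds that same argument as an explicit chain of $D_{\max}$ (in)equalities.
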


\begin{proof}
We first prove that
\be\label{del}
 LR_\Delta(\mN_A) \leq C^{0}_{\disc}(\mN_A) \leq  LR_\Delta(\mN_A) +1
\ee
and then use the additivity of $LR_\Delta$. 

For the proof of $ LR_\Delta(\mN_A) \leq C^{0}_{\disc}(\mN_A) $, let $\Theta\in\disc(R_1\to A)$ be an optimal superchannel satisfying $\Theta_{R_1\to A}[\phi^+_{R_1}] = \mN_A$ such that $ C^{0}_{\disc}(\mN_A) = \log_2 |R_1|$. 
Therefore,

\begin{align}
	LR_\Delta(\mN_A) &=D_{\max}\left(\mN_A \big\|\Delta_A[\mN_A]\right) \\
				&= D_{\max}\left(\Theta_{R_1\to A}[\phi^+_{R_1}] \big\| \Delta_A\circ\Theta_{R_1\to A}\left[\phi^+_{R_1}\right]\right) \\
				&=D_{\max}\left(\Theta_{R_1\to A}[\phi^+_{R_1}] \big\| \Theta_{R_1\to A}\left[\mD_{R_1}(\phi^+_{R_1})\right]\right) \\
				&\leq D_{\max}\left(\phi^+_{R_1} \big\| \mD_{R_1}(\phi^+_{R_1}) \right) \\
				&= \log_2 |R_1|\\
				&=C^{0}_{\disc}(\mN_A)\;.
\end{align}

For the proof of $C^{0}_{\disc}(\mN_A) \leq  LR_\Delta(\mN_A) +1$, first let 
\be
	LR_\Delta(\mN_A) = D_{\max}\left(\mN_A \big\| \Delta_A[\mN_A]\right) =\log t
\ee
 for some optimal $t$ that satisfies $t\Delta[\mN] \geq \mN$.
 Also, let $m = \ceil{t} $ so that $ m \Delta[\mN] \geq \mN$ still holds, and let $R_1$ be a static system with dimension $|R_1|=m$.
 We now construct the following supermap. For any state $\rho\in\md(R_1)$
 \begin{align}
 	\Omega_{R_1\to A}[\rho_{R_1}] \eqdef \frac{m}{m-1}\Big( \tr[\phi^+_{R_1} \rho_{R_1}] - \frac{1}{m} 											\Big) \mN_A \; + \; \frac{m}{m-1}\Big( 1 - \tr[\phi^+_{R_1} \rho_{R_1}]  \Big) \Delta_A[\mN_A] 								 
 \end{align}

The supermap $\Omega_{R_1\to A}$ has several properties.  First, it satisfies $\Delta_A\circ\Omega_{R_1\to A}=\Omega_{R_1\to A}\circ\mD_{R_1}$. Indeed, for any density matrix $\rho\in\md(R_1)$ we have
\ba
\Delta_A\circ\Omega_{R_1\to A}[\rho_{R_1}] &= \frac{m}{m-1}\Big( \tr[\phi^+_{R_1} \rho_{R_1}] - \frac{1}{m} 											\Big) \Delta_A[\mN_A] \; + \; \frac{m}{m-1}\Big( 1 - \tr[\phi^+_{R_1} \rho_{R_1}]  \Big) \Delta_A[\mN_A]\\
&=\Delta_A[\mN_A]\;,
\ea
and
\ba
\Omega_{R_1\to A}\big[\mD_{R_1}(\rho_{R_1})\big] &= \frac{m}{m-1}\Big( \tr[\phi^+_{R_1}\mD_{R_1}( \rho_{R_1})] - \frac{1}{m} 											\Big) \mN_A \; + \; \frac{m}{m-1}\Big( 1 - \tr[\phi^+_{R_1} \mD_{R_1}(\rho_{R_1})]  \Big) \Delta_A[\mN_A]\\
&= \frac{m}{m-1}\Big( \frac{1}{m} - \frac{1}{m}\Big) \mN_A \; + \; \frac{m}{m-1}\Big( 1 - \frac{1}{m}  \Big) \Delta_A[\mN_A]\\
&=\Delta_A[\mN_A]\;,
\ea
so that $\Delta_A\circ\Omega_{R_1\to A}=\Omega_{R_1\to A}\circ\mD_{R_1}$. Second, $\Omega_{R_1\to A}$ is a superchannel since the above map can be expressed as
\be
\Omega_{R_1\to A}[\rho_{R_1}]\eqdef\tr[\phi^+_{R_1}\rho_{R_1}]\mN_A+\frac{1}{m-1}\left(1-\tr[\phi^+_{R_1}\rho_{R_1}]\right)\left(m\Delta_A[\mN_A]-\mN_A\right)
\ee
and $m\Delta_A[\mN_A]-\mN_A\geq 0$. Hence, $\Omega\in\disc({R_1\to A})$. Finally, observe that $\Omega_{R_1\to A}[\phi^+_{R_1}]=\mN_A$.
Hence, the existence of such $\Omega$ implies that
\be
C^{0}_{\disc}(\mN_A)\leq\log m=\log\lceil t\rceil\leq\log t+1= LR_\Delta(\mN_{A})+1\;.
\ee
This completes the proof.
\end{proof}

\subsection{Coherence cost of a channel} \label{coherence_cost_of_channel}

For any $\mN\in\cptp(A_0\to A_1)$ the smoothed coherence cost  is defined as
\begin{align}
&C^{\epsilon}_{\mf}(\mN_{A})\eqdef\min_{\mN'\in B_\epsilon(\mN)}C^{0}_{\mf}\left(\mN'_{A}\right)
\end{align}
where
\be
B_\epsilon(\mN_A)=\Big\{\mN'\in\cptp(A_0\to A_1)\;:\;\frac{1}{2}\|\mN'_{A}-\mN_{A}\|_{\diamond}\leq \epsilon\Big\}\;.
\ee
The coherence cost of the channel $\mN_A$ is given by
\be
C_{\mf}(\mN_{A})=\lim_{\epsilon\to 0^+}\lim_{n\to\infty}\frac{1}{n}C^{\epsilon}_{\mf}(\mN_{A}^{\otimes n})
\ee

\subsubsection{The cost under MISC}

\begin{theorem}
For $\mf=\misc$
\be
C_{\mf}(\mN_{A})=LR_{\mc}^{\infty}(\mN_{A})\;.
\ee
\end{theorem}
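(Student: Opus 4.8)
The plan is to reduce the asymptotic statement to the single-shot sandwich already established in the exact-cost theorem under $\misc$, namely $LR_\mc(\mN_A)\le C^0_{\mf}(\mN_A)\le LR_\mc(\mN_A)+1$, and then to promote it to a smoothed bound by optimizing over the \emph{same} diamond-norm ball $B_\epsilon(\mN_A)$ that defines both $C^\epsilon_{\mf}$ and $LR^\epsilon_\mc$. The crucial enabling fact is that the two smoothings are carried out over the identical set $B_\epsilon(\mN_A)=\{\mN'\in\cptp(A_0\to A_1):\tfrac12\|\mN'_A-\mN_A\|_\diamond\le\epsilon\}$, so the pointwise relation between the two functions survives the minimization.

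First I would record that the single-shot bound holds for \emph{every} channel $\mN'_A$, not just for $\mN_A$. Applying the lower estimate $LR_\mc(\mN'_A)\le C^0_{\mf}(\mN'_A)$ pointwise and minimizing over $\mN'\in B_\epsilon(\mN_A)$ gives $LR^\epsilon_\mc(\mN_A)\le C^\epsilon_{\mf}(\mN_A)$, using that $f\le g$ pointwise implies $\min f\le\min g$. Applying the upper estimate $C^0_{\mf}(\mN'_A)\le LR_\mc(\mN'_A)+1$ pointwise and minimizing gives $C^\epsilon_{\mf}(\mN_A)\le LR^\epsilon_\mc(\mN_A)+1$, since the additive constant passes through the minimum unchanged. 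Together these yield the smoothed sandwich
\be
LR^\epsilon_\mc(\mN_A)\le C^\epsilon_{\mf}(\mN_A)\le LR^\epsilon_\mc(\mN_A)+1\;.
\ee

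Next I would substitute $\mN_A^{\otimes n}$ for $\mN_A$ and divide by $n$, obtaining
\be
\frac{1}{n}LR^\epsilon_\mc(\mN_A^{\otimes n})\le\frac{1}{n}C^\epsilon_{\mf}(\mN_A^{\otimes n})\le\frac{1}{n}LR^\epsilon_\mc(\mN_A^{\otimes n})+\frac{1}{n}\;.
\ee
Because the gap between the outer bounds is $1/n\to 0$, the middle and left sequences share the same $\liminf$ (and the same limit whenever either exists), so $\liminf_{n}\tfrac1n C^\epsilon_{\mf}(\mN_A^{\otimes n})=\liminf_{n}\tfrac1n LR^\epsilon_\mc(\mN_A^{\otimes n})$ for each fixed $\epsilon$. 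Taking $\lim_{\epsilon\to 0^+}$ on both sides then identifies $C_{\mf}(\mN_A)$ with $LR^\infty_\mc(\mN_A)$ by their respective definitions.

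The only point requiring care --- and the closest thing to an obstacle --- is the mismatch between the $\lim_{n\to\infty}$ in the definition of $C_{\mf}$ and the $\liminf_{n\to\infty}$ in $LR^\infty_\mc$. Since the two normalized sequences differ by at most $1/n$, they have identical $\liminf$ and $\limsup$; hence the existence of the limit for one is equivalent to its existence for the other, and the values agree in any case. I would state this explicitly so that the $\lim/\liminf$ discrepancy creates no gap, and I would also note that, unlike the exact-cost theorem, \emph{no} appeal to additivity of $LR_\mc$ is needed here, because the $+1$ correction is now absorbed in the regularization rather than removed by additivity.
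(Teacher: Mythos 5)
Your proposal is correct and follows essentially the same route as the paper: apply the single-shot sandwich $LR_\mc(\mN'_A)\le C^{0}_{\mf}(\mN'_A)\le LR_\mc(\mN'_A)+1$ pointwise over the common ball $B_\epsilon(\mN_A)$ to obtain the smoothed sandwich, then divide by $n$ and pass to the limit. Your explicit handling of the $\lim$ versus $\liminf$ discrepancy is a small point of added care that the paper's one-line conclusion glosses over, but it does not change the argument.
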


\begin{proof}
First, note that from~\eqref{exact_cost_bw_lr} it follows that
\be
LR^{\epsilon}_\mc(\mN_A) \leq C^{\epsilon}_{\mf}(\mN_A)\leq  LR^{\epsilon}_\mc(\mN_A) +1
\ee
Hence, 
\be
\frac{1}{n}LR^{\epsilon}_\mc(\mN_A^{\otimes n}) \leq \frac{1}{n}C^{\epsilon}_{\mf}(\mN_A^{\otimes n})\leq  \frac{1}{n}LR^{\epsilon}_\mc(\mN_A^{\otimes n}) +\frac{1}{n}
\ee
and the limit $n\to\infty$ concludes the proof.
\end{proof}

\subsubsection{The cost under DISC}

\begin{theorem}
For $\mf=\disc$
\be
C_{\mf}(\mN_{A})=LR_{\Delta}^{\infty}(\mN_{A})\;.
\ee
\end{theorem}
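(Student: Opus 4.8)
The plan is to mirror exactly the argument just used for the MISC cost, replacing the single-shot sandwich for $LR_\mc$ by its DISC counterpart~\eqref{del}, namely $LR_\Delta(\mN_A)\leq C^0_\disc(\mN_A)\leq LR_\Delta(\mN_A)+1$, which has already been established. The first step is to promote this single-shot bound to a \emph{smoothed} bound. Since $C^{\epsilon}_{\disc}$ and $LR^{\epsilon}_\Delta$ are both defined by minimizing their respective single-shot quantities over the \emph{same} diamond-norm ball $B_\epsilon(\mN_A)$, the sandwich survives the minimization. Concretely, applying~\eqref{del} to every $\mN'\in B_\epsilon(\mN_A)$ and optimizing yields
\be
LR^{\epsilon}_\Delta(\mN_A)\leq C^{\epsilon}_{\disc}(\mN_A)\leq LR^{\epsilon}_\Delta(\mN_A)+1\;.
\ee
For the upper bound one feeds the minimizer of $LR^{\epsilon}_\Delta$ into $C^0_\disc$; for the lower bound one takes the minimizer $\mN^*$ of $C^{\epsilon}_{\disc}$ and bounds $C^0_\disc(\mN^*_A)\geq LR_\Delta(\mN^*_A)\geq LR^{\epsilon}_\Delta(\mN_A)$.

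Next I would substitute $\mN_A^{\otimes n}$ for $\mN_A$ and divide by $n$, obtaining
\be
\frac{1}{n}LR^{\epsilon}_\Delta(\mN_A^{\otimes n})\leq \frac{1}{n}C^{\epsilon}_{\disc}(\mN_A^{\otimes n})\leq \frac{1}{n}LR^{\epsilon}_\Delta(\mN_A^{\otimes n})+\frac{1}{n}\;.
\ee
Taking $n\to\infty$ kills the additive $1/n$ term, so the outer limits of the left- and right-hand sides coincide, and the subsequent limit $\epsilon\to 0^+$ reproduces precisely the definition of the asymptotic dephasing logarithmic robustness $LR_{\Delta}^{\infty}(\mN_A)$. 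This gives $C_{\disc}(\mN_A)=LR_{\Delta}^{\infty}(\mN_A)$, as claimed.

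The argument has essentially no hard step once~\eqref{del} and the monotonicity of $LR_\Delta$ under $\disc$ are in hand; all the genuine structural work lives in the exact single-shot cost theorem. The only point requiring a little care is the observation that the smoothing balls used to define $C^{\epsilon}_{\disc}$ and $LR^{\epsilon}_\Delta$ are identical, so that the single-shot sandwich passes through the minimization over $B_\epsilon(\mN_A)$ without any loss. Once this is noted, the remainder is the same routine limiting argument as in the MISC case, and the additive constant $1$ is harmless after regularization.
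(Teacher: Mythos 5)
Your proposal matches the paper's proof: both pass the single-shot sandwich $LR_\Delta(\mN_A)\leq C^0_\disc(\mN_A)\leq LR_\Delta(\mN_A)+1$ through the $\epsilon$-smoothing over the same diamond-norm ball, divide by $n$, and take the limits to absorb the additive constant. Your extra remark justifying why the sandwich survives the minimization over $B_\epsilon(\mN_A)$ is a useful explicit detail that the paper leaves implicit, but the argument is the same.
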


\begin{proof}
First, note that from~\eqref{del} it follows that
\be
LR^{\epsilon}_\Delta(\mN_A) \leq C^{\epsilon}_{\mf}(\mN_A)\leq  LR^{\epsilon}_\Delta(\mN_A) +1
\ee
Hence, 
\be
\frac{1}{n}LR^{\epsilon}_\Delta(\mN_A^{\otimes n}) \leq \frac{1}{n}C^{\epsilon}_{\mf}(\mN_A^{\otimes n})\leq  \frac{1}{n}LR^{\epsilon}_\Delta(\mN_A^{\otimes n}) +\frac{1}{n}
\ee
and the limit $n\to\infty$ concludes the proof.
\end{proof}

The lack of AEP for channels motivates us to consider a more liberal method for smoothing.

\subsection{Liberal Coherence Cost of a Channel}\label{liberal_coherence_cost_of_a_channel}

We define the liberal one-shot $\epsilon$-approximate coherence-cost as
\be
C^{\epsilon}_{\mf}(\mN_{A})\eqdef\max_{\varphi\in\md(RA_0)}C^{\epsilon,\varphi}_{\mf}(\mN_{A})\;,
\ee
where
\be
C^{\epsilon,\varphi}_{\mf}(\mN_{A})\eqdef\min_{\mN'_A\in B_\epsilon^{\varphi}(\mN_A)}C^{0}_{\mf}(\mN'_{A})\;,
\ee
and
\be 
B_\epsilon^{\varphi}(\mN_A)\eqdef\Big\{\mN'\in\text{CP}(A_0\to A_1)\;:\;\|\mN'_{A}(\varphi_{RA_0})-\mN_{A}(\varphi_{RA_0})\|_{1}\leq \epsilon\Big\}.
\ee

The liberal coherence cost is defined as
\ba
C^{(\infty)}_{\mf}\left(\mN_{A}\right)&\eqdef\lim_{\epsilon\to 0^+}\lim_{n\to\infty}\max_{\varphi\in\md(RA)}\frac{1}{n}C^{\epsilon,\varphi^{\otimes n}}_{\mf}\left(\mN^{\otimes n}\right)\\
&=\lim_{\epsilon\to 0^+}\lim_{n\to\infty}\max_{\varphi\in\md(RA)}\min_{\mN'\in B_\epsilon^{\varphi^{\otimes n}}(\mN^{\otimes n})}\frac{1}{n}C^{0}_{\mf}(\mN'_{A^n\to B^n})
\ea
One can interpret the above cost in the following way. For any pure state $\varphi\in\md(RA_0)$ (with $|R|=|A_0|$ and $\varphi$ is full Schmidt rank) we define a $\varphi$-norm 
\be
\|\mE_{A}\|_{\varphi}\eqdef\left\|\mE_{A}(\varphi_{RA_0})\right\|_1
\ee
The the liberal cost can also be expressed as
\be
C^{(\infty)}_{\mf}\left(\mN_{A}\right)=\lim_{\epsilon\to 0^+}\lim_{n\to\infty}\max_{\varphi\in\md(RA_0)}\min_{\left\|\mN'-\mN^{\otimes n}\right\|_{\varphi^{\otimes n}}\leq\epsilon}\frac{1}{n}C^{0}_{\mf}(\mN'_{A^n\to B^n})
\ee
That is, we smooth with the $\varphi^{\otimes n}_{RA_0}$-norm and then maximizing over all such norms.

\begin{theorem}
For $\mf=\misc$
\be
C^{(\infty)}_{\mf}\left(\mN_{A}\right)=D^{(\infty)}_\mc\left(\mN_{A}\right)
\ee
\end{theorem}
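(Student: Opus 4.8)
The plan is to reduce the statement to the asymptotic equivalence proven earlier. Since the AEP in~\eqref{aep} already gives $LR^{(\infty)}_\mc(\mN_A)=D^{(\infty)}_\mc(\mN_A)$, it suffices to prove the operational identity $C^{(\infty)}_{\mf}\left(\mN_A\right)=LR^{(\infty)}_\mc(\mN_A)$ for $\mf=\misc$. The engine for this is the single-shot sandwich~\eqref{exact_cost_bw_lr}, namely $LR_\mc(\mN'_A)\le C^{0}_{\mf}(\mN'_A)\le LR_\mc(\mN'_A)+1$, which holds for every channel $\mN'_A$ with an additive gap that is uniform (it does not depend on the channel). The whole proof is then a matter of propagating this uniform $\pm 1$ gap through the two layers of the liberal construction (the minimization over a smoothing ball and the maximization over reference states) and showing that the constant washes out after regularization.

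First I would fix a reference state $\varphi\in\md(RA_0)$ and an $\epsilon>0$, and apply~\eqref{exact_cost_bw_lr} pointwise to each $\mN'_A\in B_\epsilon^{\varphi}(\mN_A)$. Because $f\le g$ pointwise implies $\min f\le\min g$, and because the additive $1$ survives the minimization, taking the minimum over the ball yields
\be
LR^{\epsilon,\varphi}_\mc(\mN_A)\;\le\;C^{\epsilon,\varphi}_{\mf}(\mN_A)\;\le\;LR^{\epsilon,\varphi}_\mc(\mN_A)+1\;.
\ee
Next I would take the maximum over $\varphi$. Monotonicity of $\max$ handles the lower bound, while $\max_\varphi\big(LR^{\epsilon,\varphi}_\mc(\mN_A)+1\big)=\max_\varphi LR^{\epsilon,\varphi}_\mc(\mN_A)+1$ handles the upper one, using again that the gap is $\varphi$-independent. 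Recalling the liberal smoothing~\eqref{liberal} and the definition of the liberal one-shot cost $C^{\epsilon}_{\mf}(\mN_A)=\max_\varphi C^{\epsilon,\varphi}_{\mf}(\mN_A)$, this gives the sandwich at the level of the liberal quantities. I would then run the identical argument on the tensor power $\mN^{\otimes n}$ with reference states of the form $\varphi^{\otimes n}$, divide by $n$, and obtain
\be
LR^{\epsilon,n}_\mc(\mN_A)\;\le\;\tfrac1n\max_{\varphi\in\md(RA_0)}C^{\epsilon,\varphi^{\otimes n}}_{\mf}\big(\mN^{\otimes n}\big)\;\le\;LR^{\epsilon,n}_\mc(\mN_A)+\tfrac1n\;.
\ee

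Finally, taking $n\to\infty$ collapses the $1/n$ correction, so the middle sequence and $LR^{\epsilon,n}_\mc(\mN_A)$ share the same $\liminf$ (indeed the same limit whenever either exists); taking then $\epsilon\to 0^+$ identifies the middle limit with $C^{(\infty)}_{\mf}(\mN_A)$ and the outer limit with $LR^{(\infty)}_\mc(\mN_A)$, giving $C^{(\infty)}_{\mf}(\mN_A)=LR^{(\infty)}_\mc(\mN_A)$; composing with~\eqref{aep} completes the proof. The step that needs the most care is not any hard estimate but the bookkeeping of the nested $\min$–$\max$: one must check that~\eqref{exact_cost_bw_lr} applies to \emph{every} member of the liberal ball $B_\epsilon^{\varphi}(\mN_A)$ (which is specified through the $\varphi$-norm rather than the diamond norm, and whose elements are a priori only CP), and that the regularization order ($\lim_{\epsilon\to0^+}$ after $\liminf_{n\to\infty}$) is respected so that the squeeze is legitimate. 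Once the uniformity of the additive gap is established, the equivalence follows essentially for free from the assumed AEP.
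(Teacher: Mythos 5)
Your proposal is correct and follows essentially the same route as the paper's own proof: both pass the single-shot sandwich~\eqref{exact_cost_bw_lr} through the liberal smoothing for each fixed $\varphi$, maximize over $\varphi$, regularize so the additive $+1$ vanishes, and conclude via the AEP relation~\eqref{aep}. Your version merely spells out the $\min$--$\max$ bookkeeping that the paper leaves implicit.
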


\begin{proof}
From~\eqref{exact_cost_bw_lr} it follows that that for any fixed $\varphi\in\md(RA_0)$ we have
\be\label{philr}
 LR^{\epsilon,\varphi}_\mc(\mN_A) \leq C^{\epsilon,\varphi}_{\mf}(\mN_A)\leq  LR^{\epsilon,\varphi}_\mc(\mN_A) +1
\ee
From~\eqref{philr} it follows that $C^{(\infty)}_{\mf}\left(\mN_{A}\right)=LR^{(\infty)}_\mc\left(\mN_{A}\right)$ so that the theorem follows from the AEP relation~\eqref{aep}.
\end{proof}

\subsection{One shot distillable Coherence}\label{distillable_coherence}

We now consider the problem of distilling an arbitrary channel into pure-state coherence using MISC and DISC. 
Let $\Theta \in \mf(A\to B_1)$ where $\mf = \misc \text{ or } \disc$, such that for any input channel $\mE_A$, the output is a state preparation channel $\mF_B \in \cptp(B_0 \to B_1)$ where $B_0$ is a trivial system.
For $\epsilon>0$ and $n = |B_1|$, define
\begin{equation}
\text{DISTILL}_\mf^\epsilon(\mN_A)=\log \max\{n\;:\:\bra{\phi_{B_1}^+}\Theta\left[\mN_A\right]\ket{\phi_{B_1}^+}>1-\epsilon,\;\;\Theta\in \mf(A\to B_1)\},
\end{equation}
which represents the largest coherence attainable by MISC or DISC within $\epsilon$-error.
For all $\mN \in \cptp(A_0 \to A_1)$, we can write
\ba\label{inner_prod_ThetaN}
\left\langle \phi_{B_1}^+\left|\Theta\left[\mN\right]\right|\phi_{B_1}^+\right\rangle &= \left\langle \phi_{B_1}^+ \left| \left(\tr_{A}\left[\J^\Theta_{AB_1}\left(\left(J^{\mN}_A\right)^T\otimes I_{B_1}\right)\right]\right)\right|\phi^+_{B_1}\right\rangle \\
&= \tr\left[ \J^{\Theta}_{AB_1}\left(\left(J^{\mN}_A \right)^T\otimes \phi^+_{B_1} \right)  \right].
\ea
Note that the space of all operators that are invariant under any permutation in the classical basis, is a linear combination of maximally mixed state, $u_{A_1}$ and maximally coherent state, $\phi^+_{A_1}$.
Any operator is permutation invariant if 
\be
\Pi_x\; \sigma\; \Pi_x^{\dagger} = \sigma \;\; \forall \; \text{ permutation matrices }\Pi_x
\ee
The permutation-twirling operation can be expressed in the following way (see for example \cite{GMS2009})
\be
\mT(\cdot) = \frac{1}{m!}\sum_{x}\Pi_x (\cdot) \Pi_x^{\dagger} \;\; \forall \; \Pi_x
\ee 
where $m$ is the dimension of the input system.
Observe that the output of the above permutation-twirling operation on any state is permutation invariant and so can always be represented as a linear combination of $\phi^+_{A_1}$ and $u_{A_1}$.
Hence, we can express the second equality in \eqref{inner_prod_ThetaN} as
\ba
\tr\left[ \J^{\Theta}_{AB_1}\left(\left(J^{\mN}_A \right)^T\otimes \phi^+_{B_1} \right)  \right] &= \tr\left[ \J^{\Theta}_{AB_1} \left(\left(J^{\mN}_A \right)^T\otimes \mT\left(\phi^+_{B_1}\right) \right)  \right]\\
				&= \tr\left[ \left(\id_A \otimes \mT\left(\J^{\Theta}_{AB_1}\right)\right)\left(\left(J^{\mN}_A \right)^T\otimes \phi^+_{B_1} \right)  \right]\\
\ea
where the second equality follows from the fact that $\mT$ is self-adjoint in the Hilbert-Schmidt inner product.
Hence, without loss of generality we can express the Choi matrix $\J^{\Theta}_{AB_1} $ in following way
\be
\J^{\Theta}_{AB_1}= \alpha_A \otimes \phi_{B_1}^+ + \frac{1}{n-1}\beta_A\otimes(I_{B_1}-\phi^+_{B_1})
\ee
where $n = |B_1|$ and $\alpha_A, \beta_A \in \herm(A)$ such that
 $\J^\Theta_{AB_1}\geq 0$, $\J_{A_1}^{\Theta}=I_{A_1}$, and $\J_{A}^{\Theta}=\J_{A_0}^{\Theta}\otimes u_{A_1}$.
In terms of $\alpha_A$ and $\beta_A$, we can write these conditions as
\begin{align}
\alpha_A,\beta_a &\geq 0, \\
\tr(\alpha_A + \beta_A)&=|A_1|\\
\alpha_A + \beta_A &=\tr_{A_1}(\alpha_A + \beta_A)\otimes u_{A_1}.
\end{align}
From the MISC condition of $\mD_{AB}(\J^\Theta_{AB})=\mD_A\otimes\id_B(\J^\Theta_{AB})$, we get
\begin{equation}
\mD(\alpha_A)(n-1)=\mD(\beta_A).
\end{equation}
Defining $\beta_A = \rho_{A_0}\otimes u_{A_1} - \alpha_A$ where $\rho_{A_0}=\frac{1}{|A_1|}\tr_{A_1}(\alpha_A + \beta_A)$.
Since $\tr[\rho_{A_0}] = 1$, $\rho_{A_0}$ is a density matrix. 
So, we can rewrite these constraints as
\begin{align}
\alpha_A&\geq 0\label{Eq:MISC-distill-cons0}\\
\rho_{A_0}\otimes I_{A_1}&\geq \alpha_A,\label{Eq:MISC-distill-cons1}\\
\frac{1}{n}\mD(\rho_{A_0})\otimes I_{A_1}&=\mD(\alpha_A),\label{Eq:MISC-distill-cons2}\\
\rho_{A_0}&\in \md(A_0)\label{Eq:MISC-distill-cons3}
\end{align}
We can also consider imposing the additional DISC constraint of  $\id_A\otimes\mD_{B}(\J^\Theta_{AB})=\mD_A\otimes\id_B(\J^\Theta_{AB})$ which gives
\begin{equation}
\alpha_A + \beta_A=\mD(\alpha_A + \beta_A).
\end{equation}
This amounts to replacing Eq. \eqref{Eq:MISC-distill-cons1} with the condition
\begin{align}
\label{Eq:DISC-distill-cons1}
n\mD(\alpha_A)\geq \alpha_A.
\end{align}
Next notice that we can always write $\alpha_A = \mD_A(\alpha_A) + \gamma_A$ for some $\gamma_A$ with zeroes on the diagonal.
Then, since $\tr_{A_1}\left[ \left(J^{\mN}_A \right)^T\right] = I_{A_0}$, we can write
\ba
\tr\left[\alpha_A \left( J^{\mN}_A \right)^T  \right] &= \tr\left[\left(\mD_A\left(\alpha_A\right) + \gamma_A\right) \left( J^{\mN}_A \right)^T  \right]\\
				&= \tr\left[\mD_A\left(\alpha_A\right) \left( J^{\mN}_A \right)^T  \right] + \tr\left[\gamma_A \left( J^{\mN}_A \right)^T  \right]\\
				&= \tr\left[\left( \frac{1}{n}\mD(\rho_{A_0})\otimes I_{A_1}\right) \left( J^{\mN}_A \right)^T  \right] + \tr\left[\gamma_A \left( J^{\mN}_A \right)^T  \right]\\
				&= \frac{1}{n} + \tr\left[\gamma_A \left( J^{\mN}_A \right)^T  \right]\\
\ea

Hence, we have the following one-shot distillable rates.
\begin{theorem}
For $\mf = \misc \text{ or } \disc$
\be
\text{\rm DISTILL}_\mf^\epsilon(\mN)=\log\,\max\; n
\ee
such that 
\ba 
\tr\left[\gamma_A\left(J^{\mN}_A \right)^T \right] &\geq 1 - \frac{1}{n} - \epsilon\; ,\\
\mD_A(\gamma_A) &= 0\; ,\\
\rho_{A_0} &\in \md(A_0)\; ,\\
\left[\rho_{A_0} - \frac{1}{n}\mD_{A_0}\left(\rho_{A_0} \right) \right]\otimes I_{A_1} \geq \gamma_A &\geq -\frac{1}{n}\mD_{A_0}\left(\rho_{A_0} \right)\otimes I_{A_1} \;\; \text{(specifically for }\mf = \misc \text{)} \; ,\\
\frac{n-1}{n}\mD_{A_0}(\rho_{A_0})\otimes I_{A_1} \geq \gamma_{A} &\geq -\frac{1}{n}\mD_{A_0}\left( \rho_{A_0}\right)\otimes I_{A_1} \; \; \text{(specifically for }\mf = \disc\text{)}.
\ea
\end{theorem}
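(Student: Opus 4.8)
The plan is to recognize that the optimization defining $\distill^\epsilon_\mf(\mN)$ is, after one symmetrization step, literally the SDP written in the statement; essentially every algebraic identity needed has already been derived in the paragraphs preceding the theorem, so the proof consists in justifying the one non-trivial reduction (the twirl) and then collecting the constraints.

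First I would rewrite the success probability. Starting from the definition, \eqref{inner_prod_ThetaN} expresses $\bra{\phi^+_{B_1}}\Theta[\mN]\ket{\phi^+_{B_1}}$ as $\tr[\J^\Theta_{AB_1}((J^{\mN}_A)^T\otimes\phi^+_{B_1})]$. The crucial move is the permutation twirl $\mT$ acting on $B_1$: since $\phi^+_{B_1}$ is permutation invariant, $\mT(\phi^+_{B_1})=\phi^+_{B_1}$, and because $\mT$ is self-adjoint for the Hilbert--Schmidt inner product the objective is unchanged upon replacing $\J^\Theta_{AB_1}$ by $(\id_A\otimes\mT)(\J^\Theta_{AB_1})$. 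The step I expect to require the most care is verifying that this replacement keeps us inside $\mf(A\to B_1)$: the permutations $\Pi_x$ act only on $B_1$, they commute with the dephasing $\mD_{B_1}$ and with the partial trace over $B_1$, hence they preserve both the superchannel normalizations $\J^\Theta_{A_1}=I_{A_1}$ and $\J^\Theta_A=\J^\Theta_{A_0}\otimes u_{A_1}$ and the MISC/DISC conditions characterized through Lemma~\ref{cute}. Granting this, one may assume $\J^\Theta_{AB_1}$ is permutation covariant on $B_1$, so by Schur's lemma it lies in the span of $\phi^+_{B_1}$ and $I_{B_1}-\phi^+_{B_1}$, giving the ansatz $\J^\Theta_{AB_1}=\alpha_A\otimes\phi^+_{B_1}+\tfrac{1}{n-1}\beta_A\otimes(I_{B_1}-\phi^+_{B_1})$ with $n=|B_1|$.

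Next I would evaluate the objective on this ansatz. Using $\tr[(\phi^+_{B_1})^2]=1$ and $\tr[(I_{B_1}-\phi^+_{B_1})\phi^+_{B_1}]=0$ collapses the success probability to $\tr[\alpha_A(J^{\mN}_A)^T]$. Decomposing $\alpha_A=\mD_A(\alpha_A)+\gamma_A$ with $\mD_A(\gamma_A)=0$, and invoking \eqref{Eq:MISC-distill-cons2} together with the trace-preserving identity $\tr_{A_1}[(J^{\mN}_A)^T]=I_{A_0}$, yields $\tr[\alpha_A(J^{\mN}_A)^T]=\tfrac1n+\tr[\gamma_A(J^{\mN}_A)^T]$; thus the success criterion $>1-\epsilon$ becomes $\tr[\gamma_A(J^{\mN}_A)^T]\ge 1-\tfrac1n-\epsilon$, the passage from the strict to the closed inequality being the usual closure of the feasible set.

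Finally I would translate the remaining constraints and check both directions of the equivalence. The superchannel conditions force $\alpha_A+\beta_A=\rho_{A_0}\otimes I_{A_1}$ with $\rho_{A_0}\in\md(A_0)$; the MISC condition supplies \eqref{Eq:MISC-distill-cons2}, and the extra dephasing covariance of DISC gives $\alpha_A+\beta_A=n\mD(\alpha_A)$, which turns $\beta_A\ge0$ into $n\mD(\alpha_A)\ge\alpha_A$. Eliminating $\beta_A$ and substituting $\alpha_A=\tfrac1n\mD_{A_0}(\rho_{A_0})\otimes I_{A_1}+\gamma_A$, the positivity $\alpha_A\ge0$ becomes the common lower bound $\gamma_A\ge-\tfrac1n\mD_{A_0}(\rho_{A_0})\otimes I_{A_1}$, while the upper bound reads $[\rho_{A_0}-\tfrac1n\mD_{A_0}(\rho_{A_0})]\otimes I_{A_1}\ge\gamma_A$ for $\mf=\misc$ (from $\beta_A\ge0$) and $\tfrac{n-1}{n}\mD_{A_0}(\rho_{A_0})\otimes I_{A_1}\ge\gamma_A$ for $\mf=\disc$. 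Conversely, any feasible pair $(\gamma_A,\rho_{A_0})$ reconstructs a valid $\J^\Theta$ through the same substitution, so a free superchannel of output dimension $n$ meeting the error bound exists if and only if a feasible $(\gamma_A,\rho_{A_0})$ does; maximizing $\log n$ over either description returns the stated value.
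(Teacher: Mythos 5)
Your proposal follows the paper's own derivation essentially step for step: the same permutation twirl on $B_1$ reducing $\J^{\Theta}_{AB_1}$ to the $\alpha_A\otimes\phi^+_{B_1}+\tfrac{1}{n-1}\beta_A\otimes(I_{B_1}-\phi^+_{B_1})$ ansatz, the same translation of the superchannel and MISC/DISC conditions into constraints on $\alpha_A,\beta_A,\rho_{A_0}$, and the same elimination via $\alpha_A=\mD_A(\alpha_A)+\gamma_A$. Your explicit checks that the twirl preserves membership in $\mf(A\to B_1)$ and that a feasible $(\gamma_A,\rho_{A_0})$ reconstructs a valid free superchannel are points the paper leaves implicit, but they do not change the route.
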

\begin{remark} Note that $D_\misc^\epsilon(\mN)=D_\disc^\epsilon(\mN)$ when $|A_0|=1$, and their common rate matches that given in Refs. \cite{Regula-2018a, Zhao-2019a} for distilling coherence from static resources (i.e. states).  However for channels, the MISC and DISC distillable coherence can possibly differ.  We leave it as an open problem to find channels that have such a property.
\end{remark}

\medskip
\begin{example}
Let us consider the partially depolarizing channel $\mN^{\text{dep}}_{\lambda,d}:\mB(A_1)\to\mB(A_1)$,
\begin{equation}
\mN^{\text{dep}}_{\lambda,d}(\chi)=\lambda \chi+(1-\lambda)\tr[\chi]u_{A_1}. 
\end{equation}
where $d = |A_1|$.
The Choi matrix of this channel is given by
\begin{equation}
J^{\mN^{\text{dep}}}_{A_1 \tA_1}=\lambda \phi^+_{A_1 \tA_1} +\frac{1-\lambda}{d}I_{A_1 \tA_1},
\end{equation} 
 We exploit the symmetry by noting that both $\phi_{A_1 \tA_1}^+$ and $I_{A_1 \tA_1}$ are $U^*\otimes U$ invariant. 
 We restrict our twirling to an average over the group of incoherent unitaries, i.e., each $U$ involves a permutation and/or a change in relative phase.
 Note that dephasing commutes with this operation so if Eq. \eqref{Eq:MISC-distill-cons2} holds before the twirl, it will also hold after.
 The action of twirling will convert $\rho_{A_1}\otimes I_{\tA_1}\to u_{A_1}\otimes I_{\tA_1}$ while converting $\alpha_A$ into an operator of the form
\begin{align}
\alpha_{A_1\tA_1} &= p\sum_{i\not=j}\op{ij}{ij}+q\sum_{i}\op{ii}{ii}+ r \sum_{i\not= j}\op{ii}{jj}\notag\\
&=p\sum_{i\not=j}\op{ij}{ij}+(q - r)\sum_{i}\op{ii}{ii}+ r \phi^+_{A_1\tA_1}.
\end{align}
The eigenvalues of $\alpha_{A_1\tA_1}$ are easily seen to be $\{p,q - r, q-r+rd\}$, and so equations \eqref{Eq:MISC-distill-cons0} and \eqref{Eq:MISC-distill-cons1} require that $p,q-r\geq 0$ and $p, q - r+rd\leq\frac{1}{d}$.
 From equation \eqref{Eq:MISC-distill-cons2}, we must also have $p=q=\frac{1}{nd}$.  With these constraints in place, our goal is to maximize $n$ such that
\begin{align}
\tr\left[\alpha^T_{A_1\tA_1} J^{\mN^\text{dep}}_{A_1\tA_1}\right]=(1-\lambda)\frac{(d-1)}{nd}+\left(\frac{1}{nd}-r\right)(\lambda d+(1-\lambda))+r(\lambda d^2+(1-\lambda)).
\end{align}
This function is strictly increasing w.r.t. $r$, and the constraints necessitate that $r\leq\min\{\frac{n-1}{d-1}\frac{1}{nd},\frac{1}{nd}\}$.  So when $n\leq d$, we take $r=\frac{n-1}{d-1}\frac{1}{nd}$ and obtain
\begin{align}
\label{Eq:dep-distill-1}
\tr\left[\alpha^T_{A_1\tA_1}J^{\mN^\text{dep}}_{A_1\tA_1}\right]&=(1-\lambda)\frac{d-1}{nd}+\frac{d-n}{nd(d-1)}(\lambda d+(1-\lambda))+\frac{n-1}{nd(d-1)}(\lambda d^2+(1-\lambda))\notag\\
&=(1-\lambda)\frac{1}{n}+\lambda.
\end{align}
Notice that when $\lambda=1$ we obtain $\tr\left[\alpha^T_{A_1\tA_1}J^{\mN^\text{dep}}_{A_1\tA_1}\right]=1$.  This says that $\log n$ bits can be perfectly distilled, which is expected: the free superchannel just consists of inputting $\phi^+_{A_1\tA_1}$ into the given channel and then as post-processing performs a MIO map that converts $\phi^+_{A_1 \tA_1}$ into $\phi^+_{B_1}$.  On the other hand, if $n\geq d$, we take $r =\frac{1}{nd}$ and Eq. \eqref{Eq:dep-distill-1} becomes
\begin{equation}
\tr\left[\alpha^T_{A_1\tA_1}J^{\mN^\text{dep}}_{A_1\tA_1}\right]=(1-\lambda)\frac{1}{n}+\frac{d}{n}\lambda.
\end{equation}
Notice also that in this case our optimizer $\rho_{A_0}$ is completely dephased, which means our solution for MISC is also the solution for DISC.  We summarize our findings as follows.
\begin{lemma}
For the partial depolarizing channel $\mN^{\text{dep}}_{\lambda,d}$ and $0\leq\epsilon<1$,
\begin{equation}
\text{\rm DISTILL}_\misc^\epsilon\left(\mN^{\text{dep}}_{\lambda,d}\right)=\text{\rm DISTILL}_\disc^\epsilon\left(\mN^{\text{dep}}_{\lambda,d}\right)=\begin{cases}\log\lfloor\frac{1-\lambda}{1-\lambda-\epsilon}\rfloor\quad\text{if $\epsilon<\frac{(d-1)(1-\lambda)}{d}$}\\
\log\lfloor\frac{1-\lambda+\lambda d}{1-\epsilon}\rfloor\quad\text{if $\epsilon\geq \frac{(d-1)(1-\lambda)}{d}$}
\end{cases}.
\end{equation}
\end{lemma}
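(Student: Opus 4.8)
The plan is to combine the one-shot distillation formula proved in the theorem above with the symmetry reduction already carried out in the Example, so that the problem collapses to solving a one-variable inequality for the largest admissible $n$. Recall from the Example that twirling the Choi matrix $\J^\Theta_{AB_1}$ over the group of incoherent unitaries leaves both $J^{\mN^{\text{dep}}}_{A_1\tA_1}$ and the feasible region invariant; since the feasible region is convex and the objective $\langle\phi^+_{B_1}|\Theta[\mN^{\text{dep}}]|\phi^+_{B_1}\rangle=\tr[\alpha^T_{A_1\tA_1}J^{\mN^{\text{dep}}}_{A_1\tA_1}]$ is linear, there is no loss in restricting to twirl-invariant $\alpha_{A_1\tA_1}$. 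For these, the maximization over $r$ under the constraints \eqref{Eq:MISC-distill-cons0}--\eqref{Eq:MISC-distill-cons2} was already performed, yielding the best achievable fidelity
\be
f(n)=\begin{cases}\lambda+\frac{1-\lambda}{n} & n\le d,\\ \frac{1-\lambda+\lambda d}{n} & n\ge d,\end{cases}
\ee
these being \eqref{Eq:dep-distill-1} and the display that follows it. Hence $\text{\rm DISTILL}^\epsilon_\mf(\mN^{\text{dep}}_{\lambda,d})=\log\max\{n:\ f(n)\ge 1-\epsilon\}$.

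Next I would note that $f$ is continuous at $n=d$---both branches evaluate to $\frac{1-\lambda+\lambda d}{d}$---and strictly decreasing on each branch, hence strictly decreasing throughout. Therefore $\{n:\ f(n)\ge 1-\epsilon\}$ is an initial segment, and the optimal $n$ is the floor of the real root of $f(n)=1-\epsilon$ on the active branch. Inverting on the branch $n\le d$ gives $n\le\frac{1-\lambda}{1-\lambda-\epsilon}$, and on the branch $n\ge d$ gives $n\le\frac{1-\lambda+\lambda d}{1-\epsilon}$.

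To decide which branch is active I would locate the switching threshold by setting $f(d)=1-\epsilon$, which yields exactly $\epsilon=\frac{(d-1)(1-\lambda)}{d}$. For $\epsilon<\frac{(d-1)(1-\lambda)}{d}$ the cutoff lies in the regime $n\le d$ (where also $1-\lambda-\epsilon>0$), so $\text{\rm DISTILL}^\epsilon_\mf=\log\lfloor\frac{1-\lambda}{1-\lambda-\epsilon}\rfloor$; for $\epsilon\ge\frac{(d-1)(1-\lambda)}{d}$ it lies in $n\ge d$, giving $\log\lfloor\frac{1-\lambda+\lambda d}{1-\epsilon}\rfloor$, which reproduces the two cases. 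The equality $\text{\rm DISTILL}^\epsilon_\misc=\text{\rm DISTILL}^\epsilon_\disc$ then follows from the observation in the Example that the twirl sends the optimizer $\rho_{A_0}$ to the maximally mixed, hence diagonal, state: when $\rho_{A_0}=\mD_{A_0}(\rho_{A_0})$, the MISC constraint \eqref{Eq:MISC-distill-cons1} reduces to $\frac{n-1}{n}\mD_{A_0}(\rho_{A_0})\otimes I_{A_1}\ge\gamma_A$, which is precisely the DISC constraint \eqref{Eq:DISC-distill-cons1}, so the two optimizations share the same feasible set and value.

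The main obstacle is not the final algebra but justifying the symmetry reduction rigorously: one must verify that the incoherent-unitary twirl preserves the superchannel constraints \eqref{Eq:MISC-distill-cons0}--\eqref{Eq:MISC-distill-cons2} together with the DISC refinement, and fixes the objective, so that averaging any feasible point stays feasible with unchanged value---this is what legitimizes the restriction to the three-parameter family and hence the closed form for $f(n)$. A secondary subtlety is the boundary behavior of the floor: since the definition of $\text{\rm DISTILL}$ uses the strict inequality $\langle\phi^+|\Theta[\mN]|\phi^+\rangle>1-\epsilon$, at the isolated values of $\epsilon$ for which $\frac{1-\lambda}{1-\lambda-\epsilon}$ or $\frac{1-\lambda+\lambda d}{1-\epsilon}$ is an integer one should confirm that the resulting off-by-one does not alter the stated floor expressions.
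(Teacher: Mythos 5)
Your proposal follows essentially the same route as the paper: the paper's Example already performs the incoherent-unitary twirl, reduces $\alpha_{A_1\tilde A_1}$ to the three-parameter family, and derives the two branch fidelities $\lambda+\frac{1-\lambda}{n}$ (for $n\le d$) and $\frac{1-\lambda+\lambda d}{n}$ (for $n\ge d$), after which the lemma is obtained by exactly the inversion and threshold computation you describe, with the MISC--DISC equality following from the dephased optimizer $\rho_{A_0}=u_{A_0}$. Your added remarks on rigorously justifying the twirl (convexity of the feasible set plus linearity of the objective) and on the strict-versus-nonstrict inequality at integer boundary values are reasonable points of care that the paper passes over silently, but they do not change the argument.
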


\end{example}

\begin{example}
We next consider the partial dephasing channel $\mN^{\Delta}_{\lambda,d}:\mB(A_1)\to\mB(A_1)$,
\begin{equation}
\mN^{\Delta}_{\lambda,d}(\chi)=\lambda \chi + (1-\lambda)\mD(\chi).
\end{equation}
The Choi matrix of this channel is given by
\begin{equation}
J^{\mN^{\text{dep}}}_{A_1\tA_1}=\lambda \phi^+_{A_1 \tA_1} +(1-\lambda)\sum_{i=1}^d\op{ii}{ii}.
\end{equation} 
By the same argument as before, we can assume without loss of generality that $\alpha_A$ has the form
\begin{align}
\alpha_{A_1\tA_1}&=p\sum_{i\not=j}\op{ij}{ij}+(q-r)\sum_{i}\op{ii}{ii}+r \phi^+_{A_1\tA_1}.
\end{align}
However this time the fidelity with $\phi^+_{B_1}$ is given by
\begin{align}
\tr\left[\alpha^T_{A_1\tA_1}J^{\mN^\Delta}_{A_1\tA_1}\right]=\left(\frac{1}{nd}-r\right)d+r(\lambda d^2+(1-\lambda)d).
\end{align}
Again, the constraints of the problem demand $r\leq\min\{\frac{n-1}{d-1}\frac{1}{nd},\frac{1}{nd}\}$.  When $n\leq d$, it holds that
\begin{align}
\tr\left[\alpha^T_{A_1\tA_1}J^{\mN^\Delta}_{A_1\tA_1}\right]&=\frac{d-n}{n(d-1)}+\frac{n-1}{n(d-1)}(\lambda d+(1-\lambda))\notag\\
&=\frac{1+(n-1)\lambda}{n}.
\end{align}
On the other hand, when $n\geq d$, we take $r=\frac{1}{nd}$ to obtain
\begin{align}
\tr\left[\alpha^T_{A_1\tA_1}J^{\mN^\Delta}_{A_1\tA_1}\right]&=\frac{\lambda d+(1-\lambda)}{n}.
\end{align}
These are the same maximum fidelities as the depolarizing channel, and we therefore have the following conclusion.
\begin{lemma}
For the partial dephasing channel $\mN^{\Delta}_{\lambda,d}$ and $0\leq\epsilon<1$,
\begin{equation}
\text{\rm DISTILL}_\misc^\epsilon\left(\mN^{\Delta}_{\lambda,d}\right)=\text{\rm DISTILL}_\disc^\epsilon\left(\mN^{\Delta}_{\lambda,d}\right)=\begin{cases}\log\lfloor\frac{1-\lambda}{1-\lambda-\epsilon}\rfloor\quad\text{if $\epsilon<\frac{(d-1)(1-\lambda)}{d}$}\\
\log\lfloor\frac{1-\lambda+\lambda d}{1-\epsilon}\rfloor\quad\text{if $\epsilon\geq \frac{(d-1)(1-\lambda)}{d}$}
\end{cases}.
\end{equation}
\end{lemma}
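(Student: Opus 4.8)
The plan is to observe that all the structural work has already been carried out in the paragraphs preceding the statement: the symmetrization over incoherent unitaries lets us restrict to the twirled ansatz for $\alpha_{A_1\tA_1}$, and the maximal attainable fidelity with $\phi^+_{B_1}$ for a given output dimension $n=|B_1|$ was computed there to be
\be
F(n)=\begin{cases}\dfrac{1+(n-1)\lambda}{n}\quad&\text{if }n\leq d\,,\\ \dfrac{\lambda d+(1-\lambda)}{n}\quad&\text{if }n\geq d\,.\end{cases}
\ee
By the one-shot distillation theorem, $\distill_\mf^\epsilon(\mN^\Delta_{\lambda,d})=\log\max\{n:F(n)\geq 1-\epsilon\}$, so what remains is the purely discrete optimization of $n$, which I would carry out directly.

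First I would record the elementary facts about $F$ that drive the case split: it is strictly decreasing on each branch, the two branches agree at $n=d$ with common value $F(d)=\frac{1-\lambda+\lambda d}{d}$ (so $F$ is one decreasing function of $n$), and the key identity $1-F(d)=\frac{(d-1)(1-\lambda)}{d}$. This is exactly the threshold $\epsilon^\ast$ in the statement. Consequently $\epsilon<\epsilon^\ast$ is equivalent to $F(d)<1-\epsilon$, which forces the optimal $n$ strictly below $d$, while $\epsilon\geq\epsilon^\ast$ is equivalent to $F(d)\geq 1-\epsilon$, so the threshold is still met at $n=d$ and the optimizer lies in the branch $n\geq d$.

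In the case $\epsilon<\epsilon^\ast$ I would solve $\lambda+\frac{1-\lambda}{n}\geq 1-\epsilon$ on the first branch; since $\epsilon<\epsilon^\ast<1-\lambda$ gives $1-\lambda-\epsilon>0$, this reads $n\leq\frac{1-\lambda}{1-\lambda-\epsilon}$, with largest admissible integer $\lfloor\frac{1-\lambda}{1-\lambda-\epsilon}\rfloor$. I would then check consistency with the branch: $\epsilon<\epsilon^\ast$ yields $1-\lambda-\epsilon>\frac{1-\lambda}{d}$, hence $\frac{1-\lambda}{1-\lambda-\epsilon}<d$, confirming $n<d$. In the case $\epsilon\geq\epsilon^\ast$ I would solve $\frac{\lambda d+(1-\lambda)}{n}\geq 1-\epsilon$ on the second branch; as $\epsilon<1$ this is $n\leq\frac{1-\lambda+\lambda d}{1-\epsilon}$ with maximum $\lfloor\frac{1-\lambda+\lambda d}{1-\epsilon}\rfloor$, and rearranging $\epsilon\geq\epsilon^\ast$ to $1-\epsilon\leq\frac{1-\lambda+\lambda d}{d}$ gives $\frac{1-\lambda+\lambda d}{1-\epsilon}\geq d$, confirming $n\geq d$. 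Taking logarithms reproduces the two displayed cases.

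Finally, the equality of the MISC and DISC rates I would get by noting that on the twirled ansatz the two free-superchannel constraints collapse to the same inequality. With $p=q=\frac{1}{nd}$ one has $\mD_A(\alpha_{A_1\tA_1})=\frac{1}{nd}I_{A_1\tA_1}$ and the optimizer $\rho_{A_0}=u_{A_0}$, so the MISC bound $\rho_{A_0}\otimes I_{A_1}\geq\alpha_A$ and the DISC bound $n\mD_A(\alpha_A)\geq\alpha_A$ both reduce to $\frac{1}{d}I\geq\alpha_A$; the feasible sets coincide, hence so do the optima. The only delicate points, where I expect to have to be careful, are justifying the twirling reduction (invariance of both objective and constraints under the incoherent-unitary group) and reconciling the strict inequality in the definition of $\distill_\mf^\epsilon$ with the floor appearing in the statement, a discrepancy that only affects the measure-zero set of $\epsilon$ for which the relevant ratio is an integer.
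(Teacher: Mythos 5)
Your proposal is correct and follows essentially the same route as the paper: it relies on the twirling reduction and the two branch fidelities $\tfrac{1+(n-1)\lambda}{n}$ (for $n\leq d$) and $\tfrac{\lambda d+(1-\lambda)}{n}$ (for $n\geq d$) derived in the preceding text, and then reads off the largest feasible $n$. You merely make explicit two steps the paper leaves implicit --- the discrete optimization with the threshold $\epsilon^\ast=\tfrac{(d-1)(1-\lambda)}{d}$, and the observation that on the twirled ansatz with $\rho_{A_0}=u_{A_0}$ the MISC and DISC constraints both reduce to $\tfrac{1}{d}I\geq\alpha_A$ --- and you correctly flag the strict-versus-nonstrict inequality issue at integer values of the ratio, which the paper glosses over.
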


\end{example}

\section{Outlook and Conclusions}\label{conclusions}

In this paper, we have developed the resource theory of dynamical coherence using the classical channels as free channels.
In previous works on the quantum resource theory of dynamical coherence \cite{DKW+2018, BGM+2017, CH2016, LY2019, TEZ+2019}, the free channels were taken to be the free operations from the QRT of static coherence, like MIO, IO, etc.  However, it is known that most of these operations have the ability to distribute coherence over time and space.  As this is quite powerful for quantum information processing, they do not seem suitable for formulating a dynamical resource theory of coherence.  In contrast, we argue that 
a proper extension of the QRT of coherence should require the free channels be void of any coherence-preserving power.
So, the classical channels come as a natural choice and for the first time, we overcome the problem of using coherence in free channels.
Note that the T-gate (in quantum computation) is not free and even the quantum identity channel is not free as the preservation of coherence should be considered a resource.

Similar to the static QRT of coherence where the free operations can have a non-free dilation, in our work on dynamical QRT of coherence, the free superchannels can have a non-free realization.
That means, the pre- and post-processing channels need not be classical.
The only requirement on the set of free superchannels comes from the golden rule of QRT.
This implies that the free superchannels must never generate coherent channels when the input channels are classical even when tensored with identity, i.e., even when the free superchannel acts on a part of the input classical channel.
This enlargement of the set of free superchannels is necessary for a meaningful resource theory of coherence.
Take for example the set of free superchannels which can be realized only by classical pre- and post-processing channels.
In this case, the output channel is always classical irrespective of the input channel, eliminating all the advantage offered by a quantum channel.
Thus, this very small set can not be used to study the resource theory of quantum coherence.

In section \ref{free_superchannels}, we start by defining four sets of free superchannels. 
We name them as maximally incoherent superchannels (MISC), dephasing-covariant incoherent superchannels (DISC), incoherent superchannels (ISC), and strictly incoherent superchannels (SISC).
 We show that the set of free superchannels in the dynamical resource theory of coherence can be characterized analogous to the free channels in the static resource theory of coherence. 
We also show that MISC and DISC can be characterized just on the basis of their Choi matrices and dephasing channels which is given in Eq. \eqref{thmmisc} and \eqref{disc_condition} for MISC and DISC, respectively. 

Section \ref{quantification} then deals with the quantification of dynamical coherence.
In section \ref{complete_family_of_monotones}, we find the complete set of monotones for MISC and DISC.
That means, to see if we can convert from one quantum channel to another, it is sufficient to check if all the monotones of this (complete) set acting on one channel are greater than the other.
A complete family of monotones for a general resource theory of processes was presented in \cite{GM2019}.
It is, in general, a hard problem to compute these functions and in some cases like LOCC-based entanglement, it is even NP-hard.
We show that for the resource theory of dynamical coherence, these functions (under MISC and DISC) can be computed using an SDP (Eq. \eqref{complete_monotones_SDP}).
Next, in section \ref{relative_entropies}, we also find monotones that are bases on relative entropy.
In \cite{GW2019}, Gour and Winter showed that the generalization of relative entropy from states to channels is not unique.
In their work, they listed six relative entropies as measures of dynamical resources.
They also introduced a new type of smoothing called ``liberal" smoothing.
We show in section \ref{relative_entropies} that out of these relative entropies defined in \cite{GW2019}, three relative entropies clearly form a monotone under MISC and DISC.
We then discuss about various log-robustness of coherence of channels which are based on the max-relative entropy of channels, $D_{\max}$ and show that it can be computed with an SDP (Eq. \eqref{lr_as_sdp}).
For the qubit case, we calculated the log-robustness of coherence for classical channels, identity channel, replacement channel,depolarizing channels, and unitary channels. 
We also show that the log-robustness of coherence of channels is additive under tensor product (Lemma \ref{additivity_of_lr_LEMMA}).
We then define a ``liberally'' smoothed log-robustness of coherence which when regualarized is equal to a regularized relative entropy introduced in \cite{GW2019} (i.e., it satisfies AEP), and behaves monotonically under completely resource non-generating superchannels. .

The next section is dedicated to the problem of interconversion of one resource into another.
In section \ref{interconversion_as_SDP}, we define a conversion distance between two channels (Eq. \eqref{d_N_to_M}).
A given channel can be simulated using another if the interconversion distance is very small.
For MISC and DISC, we showed that the interconversion distance can be computed using an SDP (Theorem \ref{misc_disc_sdp}).
We then calculated the exact, asymptotic, and liberal cost of coherence of a channel and found that the liberal cost of coherence is equal to a variant of regularized relative entropy. Lastly, in this section, we also define the one-shot distillable coherence for MISC and DISC, and calculate it for partial depolarizing and partial dephasing channels. 

Due to the realization of a superchannel as a pre- and post-processing channel, there are added complexities in the generalization of a quantum resource theory of states to channels as mentioned in \cite{GM2019}.
In our case, we see that the simple generalizations don't work.
For example, while calculating coherence costs, we had to introduce the concept of liberal cost (based on liberal smoothing as defined in \cite{GW2019}) to show it to be equal to a relative entropy.

Clearly, our work is just a start of a whole unexplored field of the quantum resource theory of dynamical coherence.
For instance, one can solve for interconversion, cost etc. for ISC and SISC.
One can define more sets of superchannels analogous to how various free operations are defined in the static case.
We also leave as open the problem of finding an example of a channel where the MISC and DISC distillable coherence are different.
In section \ref{distillable_coherence}, we worked out the distillable coherence for the partial depolarizing channel and the partial dephasing channel and found no difference for MISC and DISC case.

\textit{Note added.} Recently, we became aware of the work \cite{H2019} which considers resource preserving channels as a resource in a general resource theory.

\begin{acknowledgments}
The authors would like to thank Karol Horodecki, Varun Narasimhachar, Yunlong Xiao, and Carlo Maria Scandolo for helpful discussions.
G.G. and G.S. acknowledge support from the Natural Sciences and Engineering Research Council of Canada (NSERC).  EC is supported by the National Science Foundation (NSF) Award No. 1914440.
\end{acknowledgments}

\bibliography{QRTbib}

\begin{appendix}

\section{Proof of dual of the log-robustness}

Finding the dual of the log-robustness($LR_{\mc}(\mN_A)$) is equivalent to finding the dual of $2^{LR_{\mathfrak{C}}(\mathcal{N}_A)}$.
From (\ref{lr_as_sdp}), we can write $2^{LR_{\mathfrak{C}}(\mathcal{N}_A)}$ as
\begin{equation}
   \min\Big\{\frac{1}{|A_0|}\tr[\omega_A]\;:\;\omega_A\geq J^\mN_A\;\;\,\;\;\mD_A[\omega_A]=\omega_A\;\;,\;\;\omega_{A_0}=\tr[\omega_A]u_{A_0}\;\;,\;\;\omega_A\geq 0\Big\}
\end{equation}
where $u_{A_0} = \frac{I_{A_0}}{|A_0|}$.
The primal problem of the above conic linear program can be stated as

\begin{equation}
	 \min \Big\{ \frac{1}{|A_0|} \tr[\omega_{A} I_A] \; : \;
    \Gamma(\omega_A) - H_2 \in \mathfrak{K}_2 \; , \;
    \omega \geq 0
    \Big\}
\end{equation}
         
where $\Gamma(\omega_A)$ is a linear map and is expressed as a 3-tuple such that $\Gamma(\omega_A) = (\omega_{A_0} - \tr[\omega_{A_0}]u_{A_0} \; , \; \omega_{A} \; , \; \omega_{A} - \mathcal{D}(\omega_{A}))$. The separation of elements in the tuple can be understood as a direct sum between the subspaces in a larger vector space.
Likewise, $H_2$ is also expressed as a 3-tuple such that $H_2 = (0_{A_0} , J^{\mathcal{N}}_A , 0_A )$.
The cone $\mathfrak{K}_2$ can be expressed as a 3-tuple as $\mathfrak{K}_2 = \{(0_{A_0}\, , \, \zeta_{A}\, , \, 0_{A}) : \zeta_A \geq 0   \}$.
Hence, the dual cone $\mathfrak{K}_2^{*} = \{(Z_{A_0}, \beta_{A}, W_{A}) :  Z_{A_0}\in \herm(A_0),\; \beta_A \geq 0,\; W_A \in \herm(A)    \}$.

Therefore, it is easy to see that the dual to the above primal problem is
\begin{equation}
		\max \Big\{  \frac{1}{|A_0|} \tr[\beta_A J^{\mathcal{N}}_A] \; : \;
         		 I_A - \Gamma^{*}(Z_{A_0},\beta_A, W_A) \geq 0 \; , \;
                 Z_{A_0} \in \herm(A_0)\; , \; W_A \in \herm(A) \; , \;
                 \beta_A \geq 0 
             \Big\}
\end{equation}
In order to find $\Gamma^*(Z_{A_0},\beta_A, W_A)$, we need to equate 
\begin{equation}\label{dual_map_Gamma}
    \tr[(Z_{A_0}, \beta_A, W_A)\Gamma(\omega_A)] = \tr[\Gamma^* (Z_{A_0}, \beta_A, W_A) \omega_A]
\end{equation}
From the LHS of (\ref{dual_map_Gamma}), we find
\begin{equation}
     \tr[(Z_{A_0}, \beta_A, W_A)\Gamma(\omega_A)] = \tr[Z_{A_0} (\omega_{A_0} - \tr[\omega_{A_0}]u_{A_0})] + \tr[\beta_A \, \omega_{A}]+ \tr[W_A \, (\omega_{A} - \mathcal{D}(\omega_{A}))]
\end{equation}
Therefore,
\begin{equation}
     \Gamma^{*}(Z_{A_0},\beta_A, W_A) =
     Z_{A_0} \otimes I_{A_1} - \tr[Z_{A_0}]u_{A_0} \otimes I_{A_1} +  \beta_{A} + W_{A} - \mathcal{D}(W_{A})
\end{equation}
So, we can rewrite the first constraint in the dual problem as
\begin{equation}\label{first_dual_condition}
    I_{A} - Z_{A_0} \otimes I_{A_1} + \tr[Z_{A_0}]u_{A_0} \otimes I_{A_1} -  \beta_{A} - W_{A} + \mathcal{D}(W_{A}) \geq 0
\end{equation}
Now let $\eta_A \geq 0$ obey the following conditions
\be
\mD_{A}(\eta_{A})=\mD_{A_0}\left(\eta_{A_0}\right)\otimes u_{A_1}\;\; , \;\;\mD_{A_1}[\eta_{A_1}]=I_{A_1}\;
\ee
Any such matrix can be expressed as $\left(I_{A_0} - Z_{A_0} + \tr[Z_{A_0}]u_{A_0}\right) \otimes I_{A_1} - W_{A} + \mathcal{D}(W_{A})$.
Hence, we can express (\ref{first_dual_condition}) as

\begin{equation}
    \eta_A \geq \beta_{A} \geq 0
\end{equation}

Since, $J^{\mathcal{N}}_A \geq 0$, therefore from the above equation we get
\begin{equation}
    \tr[\, \eta_A \, J^{\mathcal{N}}_A \, ] \, \geq \, \tr[\, \beta_A \, J^{\mathcal{N}}_A \, ]
\end{equation}

Hence, we can recast the dual problem in the following form 
\begin{equation}\label{dual_D_max}
          \max \Big\{ \frac{1}{|A_0|} \tr[\, \eta_A \, J^{\mathcal{N}}_A \, ] \; : \;
   				\mD_A(\eta_A) = \mD_{A_0}(\eta_{A_0}) \otimes u_{A_1} \; , \; \mD_{A_1}[\eta_{A_1}] = I_{A_1} \; , \; \eta_A \geq 0
   				\Big\}
\end{equation}

Therefore,
\be
	LR_{\mc}(\mN_A) = \log \max \Big\{ \frac{1}{|A_0|} \tr[\, \eta_A \, J^{\mathcal{N}}_A \, ] \; : \;
   				\mD_A(\eta_A) = \mD_{A_0}(\eta_{A_0}) \otimes u_{A_1} \; , \; \mD_{A_1}[\eta_{A_1}] = I_{A_1} \; , \; \eta_A \geq 0
   				\Big\}
\ee

which is Eq.(\ref{dual_lr}).

\section{Proof of Theorem \ref{misc_disc_sdp} and the dual of the conversion distance for MISC and DISC}

In \cite{W2009}, it was shown that the diamond norm can be expressed as the following SDP
\be \label{diamond_norm}
 \frac{1}{2} \|\mE_B - \mF_B \|_{\diamond} = \min_{\omega \geq 0 ; \omega \geq J^{\mE - \mF}_B} \|\omega_{B_0}\|_{\infty} \; \forall \; \mE ,\mF  \in \cptp(B_0 \to B_1)\; .
\ee
 Note that \eqref{diamond_norm} can be rewritten as \cite{GW2019}
\be
 \frac{1}{2} \|\mE_B - \mF_B \|_{\diamond} = \min\{ \lambda \; : \; \lambda\mQ_B \geq \mE_B - \mF_B \; ; \mQ_B \in \cptp(B_0 \to B_1)\}\; .
\ee
Taking $\mE_B = \Theta_{A \to B}[\mN_A]$  and $\mF_B = \mM_B$, $d_{\mf}(\mN_A \to \mM_B)$ in \eqref{d_N_to_M} becomes 
\be \label{d_sdp_gilad}
  d_{\mf}(\mN_A \to \mM_B) = \min\{ \lambda \; : \; \lambda\mQ_B \geq \Theta_{A \to B}[\mN_A] - \mM_B \; , \mQ_B \in \cptp(B_0 \to B_1) \; , \; \Theta \in \mf(A \to B) \} \; .
\ee

For the case $\mf = \misc$, let us start by denoting $\omega_B$ as the Choi matrix of $\lambda \mQ_B$ and $\alpha_{AB}$ as the Choi matrix of $\Theta$, we can express $d_{\mf}(\mN_A \to \mM_B)$ as
\ba
 &d_{\mf}\left(\mN_{A}\to\mM_B\right) = \min  \lambda \\
 &\text{subject to :} \; (1) \; \lambda I_{B_0} \geq \omega_{B_0}\; , \; 
 								 (2) \; \omega_B \geq 0 \; , \;
 								 (3) \; \omega_B \geq \tr_A\left[\alpha_{AB}\left( (J^{\mN}_A)^T \otimes I_B \right)\right] - J^{\mM}_B \enspace , \\
	& \qquad \qquad \quad \; (4) \;  \alpha_{AB} \geq 0 \; , \;
								 (5) \; \alpha_{AB_0} = \alpha_{A_0 B_0} \otimes u_{A_1} \; , \;
								 (6) \; \alpha_{A_1B_0} = I_{A_1B_0}\; ,\\
 	& \qquad \qquad \quad \; (7)\; \tr[\alpha_{AB}X^{i}_{AB}] = 0 \; \forall \; i =  1, \ldots, n \; 
\ea
where $n \equiv |AB|(|B| - 1)$ and $\{X^i_{AB}\}_{i=1}^n$ are the bases of the subspace $\mk_{\mf}$ defined in~\eqref{k_MISC}.
Here, constraints (1-3) are due to diamond norm, constraints (4-6) follow from the requirement of $\Theta$ to be a superchannel and constraint (7) is due to the requirement that $\Theta \in \mf$.

Now consider a linear map
$\mL \, : \, \mathbb{R}\,\oplus \, \herm(B)\, \oplus\, \herm(AB) \to \herm(B_0)\,\oplus\,\herm(B)\,\oplus\,\herm(AB_0)\,\oplus\,\herm(A_1 B_0)\, \oplus^n\, \mathbb{R}   $ where $\oplus^n\, \mathbb{R}$ denotes $\underbrace{\mathbb{R} \oplus \ldots \oplus \mathbb{R}}_{n}$.

Its action on a generic element $\mu = (\lambda \, ,\, \omega_B \, ,\, \alpha_{AB}) $ of $\mathbb{R}\oplus \herm(B) \oplus \herm(AB)$ such that $\lambda \in \mathbb{R}_+\; , \; \omega_B \geq 0\; , \; \alpha_{AB} \geq 0  $ is
\be
  \mL(\mu) \eqdef \Big(\lambda I_{B_0} - \omega_{B_0} \; , \; \omega_B - \tr\left[\alpha_{AB} \left( (J^{\mN}_A)^T \otimes I_B \right) \right]  \; , \; \alpha_{AB_0} - \alpha_{A_0 B_0} \otimes u_{A_1} \; , \; \alpha_{A_1 B_0} \; , \; \tr[\alpha_{AB}X^{1}_{AB}]\; , \;\ldots \; , \tr[\alpha_{AB}X^{n}_{AB}]   \Big)
\ee
Taking a generic element $\nu = (\beta_{B_0} \; , \; \gamma_B \; , \; \tau_{AB_0}\; ,\; \zeta_{A_1 B_0}\; , t_1 , \ldots  , t_n)$ of $\herm(B_0)\oplus\herm(B)\oplus\herm(AB_0)\oplus\herm(A_1 B_0) \oplus^n \mathbb{R}$ such that $\beta_{B_0} \geq 0$\, ,  $\gamma_B \geq 0$\, , we have
\be
\mL^*(\nu) = \Big( \tr[\beta_{B_0}]\; , \; \gamma_{B} - \beta_{B_0}\otimes I_{B_1} \; , \; \tau_{AB_0} \otimes I_{B_1} - (J^{\mN}_A)^T \otimes \gamma_{B}  - \tau_{A_0 B_0}\otimes u_{A_1}\otimes I_{B_1}   + \tau_{A_1 B_0} \otimes I_{A_0 B_1} + \sum_i t_i X^{i}_{AB} \Big) \; .
\ee
Following \citep{B2002}, the dual is given by
\be
 d_{\mf}\left(\mN_{A}\to\mM_B\right) = \max \big\{ -\tr\left[J^{\mM}_B \gamma_{B}\right] + \tr\left[ \zeta_{A_1 B_0}   \right] \big\}
\ee
 where the maximum is subject to 
\ba
&\beta_{B_0}\otimes I_{B_1} \geq \gamma_{B} \geq 0 \; , \; 1 \geq \tr[\beta_{B_0}] \; ,\\
&\zeta_{A_1 B_0} \in \herm(A_1 B_0) \; , \; \tau_{AB_0} \in \herm(AB_0)  \; , \; t_1 ,\ldots,t_n \in \mathbb{R}\; , \\
&J^{\mN}_A\otimes \gamma_B + \tau_{A_0 B_0}\otimes u_{A_1} \otimes I_{B_1} - \tau_{AB_0}\otimes I_{B_1} - \tau_{A_1 B_0}\otimes I_{A_0 B_1} - \sum_i t_i X^{i}_{AB} \geq 0\; .
\ea

For the case of $\mf = \disc$, note that the only  distinction is in the choice of basis of the subspace $\mk_{\mf}$.
So, in this case, the dual is given by
\be
 d_{\mf}\left(\mN_{A}\to\mM_B\right) = \max \big\{ -\tr\left[J^{\mM}_B \gamma_{B}\right] + \tr\left[ \zeta_{A_1 B_0}   \right] \big\}
\ee
 where the maximum is subject to 
\ba
&\beta_{B_0}\otimes I_{B_1} \geq \gamma_{B} \geq 0 \; , \; 1 \geq \tr[\beta_{B_0}] \; ,\\
&\zeta_{A_1 B_0} \in \herm(A_1 B_0) \; , \; \tau_{AB_0} \in \herm(AB_0)  \; , \; t_1 ,\ldots,t_n \in \mathbb{R}\; , \\
&J^{\mN}_A\otimes \gamma_B + \tau_{A_0 B_0}\otimes u_{A_1} \otimes I_{B_1} - \tau_{AB_0}\otimes I_{B_1} - \tau_{A_1 B_0}\otimes I_{A_0 B_1} - \sum_i t_i Y^{i}_{AB} \geq 0\; .
\ea
Therefore, we see that $d_{\mf}(\mN_A \to \mM_B)$ is an SDP in the dynamical resource theory of quantum coherence if the free superchannels belong to MISC or DISC.

\end{appendix}

\end{document}